\newtheorem{thrm}{Theorem}
\newtheorem{lemma}{Lemma}
\newtheorem{prop*}{Proposition}
\newtheorem{claim}{Claim}
\newtheorem{proposition}{Proposition}
\newtheorem{corollary}{Corollary}
\newtheorem{thm*}{Theorem}
\theoremstyle{definition}
\newtheorem{remark}{Remark}
\newtheorem{defin}{Definition}
\newtheorem{example}{Example}
\title{Distributions of Centrality on Networks\thanks{I am grateful to the editors and three anonymous referees, Drew Fudenberg, Kevin He, Maximilian Kasy, Scott Kominers, Jonathan Libgober, Eric Maskin, Aureo de Paula, Matthew Rabin, Elie Tamer, Maria Voronina, Muhamet Yildiz and especially Benjamin Golub and Matthew Jackson for useful conversations and comments. Declaration of interest: none.}}
\author{Krishna Dasaratha\thanks{Department of Economics, Harvard University. Email: krishnadasaratha@gmail.com}}
\date{\today}
\begin{document}

\maketitle

\begin{center}\textbf{Abstract}
\end{center}

We provide a framework for determining the centralities of agents in a broad family of random networks. Current understanding of network centrality is largely restricted to deterministic settings, but practitioners frequently use random network models to accommodate data limitations or prove asymptotic results.  Our main theorems show that on large random networks, centrality measures are close to their expected values with high probability. We illustrate the economic consequences of these results by presenting three applications: (1) In network formation models based on community structure (called stochastic block models), we show network segregation and differences in community size produce inequality. Benefits from peer effects tend to accrue disproportionately to bigger and better-connected communities. (2) When link probabilities depend on spatial structure, we can compute and compare the centralities of agents in different locations. (3) In models where connections depend on several independent characteristics, we give a formula that determines centralities `characteristic-by-characteristic'. The basic techniques from these applications, which use the main theorems to reduce questions about random networks to deterministic calculations, extend to many network games.

\textsc{Keywords}: Centrality, networks, social networks, peer effects, inequality, segregation

\pagenumbering{gobble}
\newpage

\pagenumbering{arabic}
\setcounter{page}{1}


\section{Introduction}

In many settings of economic interest, agents benefit from connections to others. These peer effects depend on network structures, and better positioned agents can benefit much more than their less central counterparts. In education, for example, students form networks of friends and these connections affect academic achievement through group study, as a source of motivation, etc. Empirical evidence suggests that the impact of these peer effects on outcomes is approximated well by  measures of network centrality such as Katz-Bonacich centrality (\cite*{Calvo09}, \cite*{hahn2015teams}). More generally, measures of centrality and related quantities are crucial to understanding economic models from peer effects and quadratic games on networks to social learning models such as DeGroot updating.\footnote{For peer effects and network games beyond education, see among others \cite*{Ballester06}, \cite*{bramoulle2014strategic} and \cite*{Konig14}. For the role of eigenvector centrality in DeGroot learning, see \cite*{demarzo2003persuasion} and \cite*{golub10}.}

While there is a large literature on Katz-Bonacich and other centrality measures in deterministic settings, relatively little is known about centrality measures on stochastic networks. But in many applied settings, precise data about the full network is not available. Researchers instead use statistical models of network formation where links form with probabilities depending on agent characteristics. As a very simple example, one could model the social network in a school with black and white students by assuming two students of the same race are friends with probability $50\%$  while two students of different races are friends with probability $25\%$ (where all connections form independently). Moreover in theoretical work, varying parameters in models of random network formation often provides more insight than comparing particular deterministic networks.


The current paper gives a framework for determining how central each agent in a large random network will be. With this framework, we can reduce questions about values and comparative statics of centrality measures to the better understood deterministic setting. For applied work, these formulas also provide justification for approximating agents' centralities based only on information about frequencies of various types of links in the absence of more detailed network data.

The two main theorems characterize the centralities of agents in a large family of random network formation models including stochastic block models, which are a basic and widely-used class of models allowing group structure. We focus on two common measures of centrality, eigenvector and Katz-Bonacich centrality, which measure how many neighbors an agent has with more central neighbors weighted more heavily. In the spirit of the law of large numbers, the theorems show that asymptotically with high probability all agents' centralities are close to values which we can compute from link formation probabilities.  The proofs of these theorems rely on random graph theory, and in particular utilize a recent result by \cite*{Chung11}.

We need several conditions to ensure centrality measures converge, and a key requirement is that the network is not too sparse. Whether a link between two particular agents forms is random, but if these agents have enough connections then this link realization only has a small effect on their network position. In addition to the main theorems, we discuss how centrality measures need not converge when the network is too sparse or unbalanced and give several examples.

Our main application of these theorems relates homophily (the tendency for links to form within communities more than between communities) and inequality of outcomes. Returning to the example of education, within-school homophily is an important factor in understanding how social interactions matter for educational outcomes (\cite*{Echenique06}). Most student populations at schools in the United States include groups of students of multiple races, and social connections tend to be denser within these groups than between groups. We examine the consequences of this segregation for overall distributions of performance and describe mechanisms by which network structures create or exacerbate inequality. 

In more detail, we ask how distributions of centrality change as we vary link probabilities to increase or decrease homophily. This approach assumes that outcomes depend linearly on centrality, which is the case in standard models such as \cite*{Calvo09} but need not hold in arbitrary non-linear models. To take these comparative statics we consider stochastic block models of network formation: agents are divided into several groups and the probability two agents are connected depends on whether they are in the same group. We compare distributions using a strong notion of relative inequality of outcomes, Lorenz dominance. By this measure, more segregated networks are indeed more unequal. So in our example of a school with black and white students, educational achievement would be more equal if the probability two students of the same race are friends decreased to $45\%$ or the chance of a friendship between races increased to $30\%$. In the context of \cite*{Calvo09}, the implication is that policy changes decreasing segregation within schools would decrease not only the racial achievement gap but also the overall achievement gap. We can also ask which groups benefit most in absolute terms from new links. Similar dynamics tends to persist, but we find a notable exception. When indirect connections are sufficiently valuable, adding connections between different groups actually benefits a well-connected majority group more than a disadvantaged minority.

We also discuss how our theorems could be applied to more general network formation models. An alternative to group structure is spatial structure, with agents situated in a continuous space and closer agents connecting with higher probabilities. Determining which locations are most advantageous is subtle: we give a numerical example where certain agents are relatively central when connections are very concentrated locally or there are many connections at long distances, but not for intermediate networks. Without our asymptotic results, this type of comparison would only be tractable via simulations. We also give a formula for centralities in a network depending on several independent characteristics, such as race or gender and geography. Computing the centrality of an agent reduces to taking the product of her centralities in separate networks each depending on only one characteristic.

While we focus on centrality, the methods introduced have broader implications for network models with linear structure, from quadratic games to learning processes. Modifications of our theorems apply to a number of other economically relevant quantities related to eigenvectors or powers of a network's adjacency matrix. One example is influence in the DeGroot learning model, which is the eigenvector with eigenvalue $1$ of a stochastic matrix derived from the adjacency matrix (\cite*{golub10}). Thus, we can relate influence to social groups or geographic locations. Another example is the social segregation index of \cite*{Echenique07}, so our methods also describe how individuals' segregation indices depend on group structure.

The results in this paper give a framework to analyzing distributions of centrality in two parts. First, our main theorems reduce characterizing these distributions asymptotically to a deterministic calculation involving the matrix of link probabilities. Second, the latter half of the paper carries out this in deterministic calculation in several applications. Jointly, these analyses let us relate centralities to parameters in random network models capturing segregation and geography.

In more detail, the structure is as follows: the remainder of this section discusses related literature. In Section~\ref{Model}, we introduce notation and describe our family of network formation models. Section~\ref{Centralities} defines and discusses eigenvector centrality and Katz-Bonacich centrality. The two main theorems about centrality measures are given in Section~\ref{thms}. Section~\ref{ineq} examines the impact of homophily on inequality in networks using stochastic block models. Other network formation models are discussed in Section~\ref{gen}. Section~\ref{conc} concludes, and proofs and further extensions appear in the Appendix.

\subsection{Related Literature}

A large literature studies how network structure matters for quadratic network games, and we provide techniques to extend these analyses to random networks. As observed by \cite*{Ballester06} as well as many subsequent papers, Katz-Bonacich centralities are equal or closely related to Nash equilibrium strategies in games with appropriate quadratic utility functions. These models are supported by a number of empirical papers identifying agent decisions and/or outcomes in areas such as education (\cite*{Calvo09}) and R\&D (\cite*{Konig14})  with Katz-Bonacich centrality. While certain properties of networks can be understood in a deterministic analysis, important features of network structure are best captured by random networks. In particular, Theorems \ref{thm1} and \ref{thm2} facilitate comparative statics on equilibria with respect to segregation, geographic clustering and network density.

Beyond the application to quadratic games, the current paper contributes to several active areas of research in network economics: literatures on centrality measures, inequality in networks and homophily. 

There are long-lived literatures in sociology and more recently economics and engineering aiming to quantify how central individuals are in a network (e.g. \cite*{Katz53} and \cite*{Bonacich87}). By exploring the mean field theory and comparative statics of eigenvector and Katz-Bonacich centrality, we add to a theoretical literature on centrality measures. One existing approach is to understand centrality measures in terms of their formal properties, as in the axiomatic characterizations by \cite*{Dequiedt17} and \cite*{Bloch17}. Rather than axiomatizing centrality measures, we describe how to calculate these measures and give comparative statics.

Most closely related to the current paper, simultaneous works by  \cite*{avella2017centrality} and \cite*{parise2018graphon} study asymptotic convergence of centrality measures and equilibria of network games as part of an analysis of graphons. Their focus is on characterizing centrality measures for graphons, which are a generalization of networks to settings with a continuum of agents that includes our large-population model as a special case. We focus on exploring the economic consequences of asymptotic convergence in large finite networks. In particular, by assuming that links grow at a faster rate, we obtain a sharper bound on the distance between vectors of centralities. This allows characterizing centralities of individual agents and thus taking comparative statics as in Section~\ref{ineq}.\footnote{Our large enough eigenvalues condition will be more restrictive than the model in \cite*{avella2017centrality}, which requires that the maximum expected degree is at least $O(\log n)$. This lets us bound the Euclidean distance between eigenvectors centralities by a constant and Katz-Bonacich centralities by a term of order $\sqrt{n}$, while Theorem 2 of \cite*{avella2017centrality} includes an additional factor of $\sqrt{\log{n}}$.}


The applications in Section~\ref{ineq} connect to network economics literatures on inequality and homophily. There has been growing interest in how network structure affects inequality, though this research is diverse in topics and models. \cite*{Kets11} consider allocations which are stable with respect to deviations by highly connected subgroups, and show that denser networks lead to more equitable distributions. Their analysis uses the Lorenz dominance relation to compare allocations, and we use the same relation in Section~\ref{two}.  \cite*{Calvo04} and \cite*{Calvo07} study inequality in the context of job search using a model where employment information spreads through networks stochastically. Several papers on strategic network formation consider settings where central agents obtain disproportionate rents from their network position (\cite*{Goyal07} and \cite*{Hojman08}). We contribute to this varied literature by looking at a setting where network structure influences outcomes because of strategic complementarities and by examining how homophily matters for inequality.

Like centrality, homophily in networks has been an active research area in sociology, economics and computer science for decades (see \cite*{Mcpherson2001} for a survey). The methods in the current work are closest to those in \cite*{Golub12}'s work on homophily and the speed of learning. Like Golub and Jackson we consider stochastic block models, and we also use techniques from random matrix theory to reduce questions about the spectra of random networks to questions about a fixed deterministic network.\footnote{The relevant spectral quantity is the second eigenvalue in \cite*{Golub12} and the first eigenvector here.} Stochastic block models are also used in much of the community  detection literature, which studies algorithms for finding subgroups in homophilous networks (e.g. \cite*{Karrer11}). 

\section{Model}\label{Model}

In this section, we specify notation for networks and define a stochastic model of network formation.

\subsection{Notation}

A network is a set of nodes $N=\{1,...,n\}$ and a set of edges contained in $N\times N$. All networks will be undirected, so that $(i,j)$ is an edge whenever $(j,i)$ is an edge. The neighbors $N_i$ of node $i$ are the set of nodes connected to node $i$ by an edge.

A network is determined by its adjacency matrix $A$, which is defined by $A_{ij} = 1$ if there is an edge between agents $i$ and $j$ and $A_{ij} = 0$ otherwise.

A walk on a network is a finite sequence of vertices such that each pair of consecutive vertices in the sequence are connected by an edge. A walk containing $k+1$ vertices has length $k$.

Given a vector $\textbf{x} \in \textbf{R}^n$, the Euclidean norm is denoted by $\|\textbf{x}\|_2$. Given an $n \times n$ matrix $A$, the matrix $2$-norm $\|A\|_2$ is defined by $\sup_{\|\textbf{x}\|_2 = 1} \|A\textbf{x}\|_2.$ When $A$ is a symmetric matrix, $\|A\|_2$ is equal to the maximum absolute value of an eigenvalue of $A$.

\subsection{Random Networks}

We define random networks by generating links independently with link probabilities specified by a matrix $\bar{A}$. Each edge between agents $i$ and $j$ is formed with probability $\bar{A}_{ij}$. These links are generated independently, so that the entries of the adjacency matrix of the network are independent random variables.\footnote{This assumption is relaxed in Appendix~\ref{clustering}.}

Given an $n \times n$ matrix of link probabilities $\bar{A}$, we generate one instance of a network with $n$ agents and let $A$ be the adjacency matrix of this network.

Stochastic block models are our leading example. Consider a set of $n$ nodes divided into $m$ groups. In a stochastic block model, the probability of an edge between two agents depends only on their groups, so that $\bar{A}_{kl} = p_{ij}$ whenever agent $k$ is in group $i$ and agent $l$ is in group $j$. Because the amount of homophily in the network depends on parameters determining the sizes of groups and probabilities of links between groups, we can vary homophily in different ways by changing these parameters.

Note that both $\bar{A}$ and $A$ are symmetric matrices. As a consequence, these matrices have $n$ eigenvalues (counted with multiplicity), and these eigenvalues are real. Let $\overline{\lambda}_1, \overline{\lambda}_2,...,\overline{\lambda}_n$ be the eigenvalues of $\bar{A}$, ordered so that
$$|\overline{\lambda}_1| \geq |\overline{\lambda}_2| \geq ... \geq |\overline{\lambda}_n| .$$
Similarly, let $\lambda_1,\lambda_2,...,\lambda_n$ be the eigenvalues of $A$, ordered so that
$$|\lambda_1| \geq|\lambda_2| \geq ... \geq |\lambda_n|.$$

We assume that all link formation probabilities are positive, so that $\bar{A}$ is a positive matrix and satisfies the conclusions of the Perron-Frobenius theorem. In particular this matrix has a unique eigenvector with largest eigenvalue, $\overline{\lambda}_1$, and this eigenvector has non-negative real entries.

Finally, we consider sequences of random networks indexed by the population size $n$.

\begin{defin}
A \textbf{sequence of random networks} is, for each $n$ in a sequence of positive integers converging to infinity, an $n \times n$ matrix of probabilities $\bar{A}(n)$ and an adjacency matrix $A(n)$ of a network generated with these probabilities.
\end{defin}

\section{Centrality Measures}\label{Centralities}

We consider two common measures of centrality, which we refer to as Katz-Bonacich centrality and eigenvector centrality. These notions of centralities are defined by linear equations, and this linearity makes analyzing these centrality measures on random networks tractable. In subsequent sections, we discuss how the distributions of centralities of agents in random networks depend on parameters in the network formation model.

\subsection{Katz-Bonacich Centrality}

We first discuss the Katz-Bonacich centralities of agents in a network. This measure can be interpreted in terms of both linear algebra and graph theory.

Fix a positive constant $\phi < \|A\|_2^{-1}$. Recall $\|A\|_2$ is equal to the maximum of the norm $\|A\textbf{x}\|_2$ for a unit vector $\textbf{x}$, so that the condition implies $\|\phi A \textbf{x}\|_2$ is not too large relative to $\|\textbf{x}\|_2$.

\begin{defin}\label{katzbondef} The \textbf{Katz-Bonacich centrality} of agent $i$ with respect to the constant $\phi$ is given by $c_i(A,\phi)$, where $\textbf{c}(A,\phi)$ is the solution to
$$\textbf{c}(A,\phi) = \phi A \textbf{c}(A,\phi) + \mathbf{1},$$
where $\mathbf{1}$ is the column vector with all entries $1.$
\end{defin}

This equation is solved uniquely by
\begin{align*}\textbf{c}(A,\phi) &= (I - \phi A)^{-1} \textbf{1}\\
&= \sum_{k=0}^{\infty} \phi^k A^k \textbf{1}.\end{align*}

Because $(A^k)_{ij}$ is the number of walks from agent $i$ to agent $j$, this series has a combinatorial interpretation. The Katz-Bonacich centrality of agent $i$ is the number of walks beginning at agent $i$, with each walk discounted according to its length with discount factor $\phi$. So larger values of $\phi$ correspond to counting longer connections more heavily, while smaller values of $\phi$ correspond to counting shorter walks more.

A growing literature relates Katz-Bonacich centrality to strategies and outcomes in games on networks. Suppose that agents choose a level of effort which determines outcomes, and that the utility of the agent when effort levels are $\textbf{e}$ is $$u_i(\textbf{e}) = e_i - \frac12 e_i^2 + \phi \sum_{j=1}^n A_{ij} e_ie_j.$$This functional form implies that utility is quadratic in effort levels and that there are complementarities between an agent's effort and her neighbors' effort. Then at a Nash equilibrium the strategies $\textbf{e}^*$ satisfy
$$\textbf{e}^* = \phi  A \textbf{e}^* + \textbf{1}.$$
So the equilibrium strategies are given by Katz-Bonacich centralities:
$$\textbf{e}^* = \textbf{c}(A,\phi).$$
Therefore, in these games, effort is equal to Katz-Bonacich centrality and utility is a quadratic function of Katz-Bonacich centrality. If we assume that effort determines outcomes and that peer effects manifest through the costs to effort, then an agent's outcome is just her Katz-Bonacich centrality.

\subsection{Eigenvector Centrality}\label{EigCent}

Eigenvector centrality is a closely related notion. The eigenvector centrality of agent $i$ is the $i^{th}$ coordinate of the eigenvector of the adjacency matrix with eigenvalue of largest absolute value. More formally:

\begin{defin} The \textbf{eigenvector centrality} of agent $i$ in the network with adjacency matrix $A$ is given by $v_i(A)$, where $\textbf{v}(A)$ is the eigenvector of $A$ with largest eigenvalue, i.e. the solution to $$A\textbf{v}(A) = \lambda_1 \textbf{v}(A)$$
which satisfies $\|\textbf{v}(A)\|_2 = 1$.
\end{defin}

\begin{remark}
If the first eigenvalue has multiplicity greater than one, the eigenvector centrality may not be uniquely defined.\footnote{We assumed in Section~\ref{Model} that $\bar{A}$ is positive, which implies by the Perron-Frobenius theorem that $v(\bar{A})$ is unique. But even when $\bar{A}$ is positive, there is a positive probability that the realized matrix $A$ is not positive, so we cannot rule out the possibility of multiple eigenvectors corresponding to the eigenvalue $\lambda_1$.} Our results will not depend on the choice of convention in these cases.
\end{remark}

\begin{remark}Because we must choose a normalization, the ratios between the eigenvector centralities of different agents are more meaningful than the levels of each agent's centrality.
\end{remark}
An agent's eigenvector centrality is proportional to the sum of her neighbors' eigenvector centralities:
$$v_i(A) = \lambda^{-1} \sum_{j \in N_i}  v_j(A).$$
In other words, eigenvector centrality is a measure of how many neighbors an agent has, with more central neighbors counting more.

Eigenvector centrality can be thought of as a limit of Katz-Bonacich centralities. More precisely, the Katz-Bonacich centrality $\textbf{c}(A,\phi)$ approaches the line spanned by the eigenvector centrality $\textbf{v}(A)$ in the limit as $\phi$ approaches $\|A\|_2^{-1}$ from below.

To clarify the intuition behind results in later sections, we describe a stylized dynamic model relating eigenvector centrality to peer effects. Suppose that each individual begins with an endowment $w_i(0)$ in period $0$. In each period $t+1$, each individual's endowment changes to $c w_i(t) + \sum_{j \in N_i} w_i(t)$ for any $c>0$. This says that an agent's resources are a linear combination of her resources from the previous period and the sum of her neighbors' resources from the previous period. In matrix notation,
$$\textbf{w}(t+1) = (cI + A) \textbf{w}(t).$$
Then as $t \rightarrow \infty$, the endowments $\textbf{w}(t)$ become proportional to $\textbf{v}(A)$. The limit as endowments are repeatedly updated is the eigenvector centrality. This model closely resembles the mechanical process in Section V.A of \cite*{Echenique07}.

\section{Centrality on Random Networks}\label{thms}

This section characterizes the centralities of agents in networks generated using stochastic block models. With high probability, the vector of centralities will be close to a deterministic vector depending on the link probabilities. In Sections \ref{ineq} and \ref{gen}, we compute this deterministic vector for particular random network models. Combining those computations with the theorems in this section will clarify the impact of link probabilities, and thus features of network structure such as the amount of homophily, on the distributions of centralities of agents.

Before stating these theorems, we will state and discuss two technical conditions.

\begin{defin}\label{spectralgap}We say that a sequence of random networks has \textbf{non-vanishing spectral gap} if there exists $\delta > 0$ such that $$\overline{\lambda}_1(n) - |\overline{\lambda}_2(n)| > \delta \overline{\lambda}_1(n)$$
for all $n$.
\end{defin}

This says that the absolute value of the ratio between the second eigenvalue and the first is bounded away from one. \cite*{Golub12} call the second eigenvalue spectral homophily and show that this quantity measures the amount of homophily in a network.\footnote{\cite*{Golub12} consider matrices with first eigenvalue equal to one.} As a simple example of the connection, consider a stochastic block model in which all groups have the same size, the probability of a given link within groups is equal to $p_s$ and the probability of a link between groups is $p_d$. In this setting the ratio between the second eigenvalue and the first eigenvalue is equal to the usual Coleman homophily index (see Section II.C of \cite*{Golub12}).

So the spectral gap is non-vanishing when there is not extreme homophily. Intuitively, the condition is satisfied if for large $n$ the network cannot be split into two subgroups which are nearly disconnected. If all nodes have the same degree in $\bar{A}(n)$, we can make this claim more formal. Suppose the ratio between the second eigenvalue and the first eigenvalue converges to one. Then the uncoupling theorem from \cite*{hartfiel1998structure} implies that there exist subsets $M(n)\subset \{1,\ldots,n\}$ of agents such that
$$\frac{\sum_{i \in M(n), j \notin M(n)}\bar{A}(n)_{ij} }{\sum_{j} \bar{A}(n)_{ij} } \rightarrow 0$$
as $n \rightarrow \infty$.
In words, the total weight on links between $M(n)$ and its complement grows much more slowly than the degree of a single agent.

A non-vanishing spectral gap ensures that the eigenvector centralities of $\bar{A}$ and $A$ are uniquely defined with high probability. In cases of extreme homophily, there can be multiple eigenvectors of $\bar{A}$ or $A$ with the largest eigenvalue corresponding to different disconnected groups or collections of groups. With non-vanishing spectral gap, we avoid this situation with high probability.

To state the second condition, we define $\Delta = \max_i \sum_{j = 1}^n \bar{A}(n)_{ij} $ to be the maximum expected degree of a node.
\begin{defin}
We say that a sequence of random networks has \textbf{large enough eigenvalues} if $$\frac{\overline{\lambda}_1(n)}{\sqrt{\Delta \log(n)}} \rightarrow \infty$$ as $n\rightarrow \infty$.
\end{defin}

The intuitive content of this technical condition is that (1) the expected degrees of agents grow sufficiently faster than $\log n$, and (2) expected degrees do not vary too much across agents. To better understand this intuition, we use the following fact (\cite*{Lev14}):
\begin{equation}\label{leveq}\overline{\lambda}_1(n)^2 \geq (\min_{i,j}\bar{A}(n)_{ij}) \sum_{i} \sum_j \bar{A}(n)_{ij}.\end{equation}
If each agent $i$'s link probabilities do not vary too much across neighbors $j$, condition (1) implies that $n \min_{i,j}\bar{A}(n)_{ij}$ grows faster than $\log n$.\footnote{More generally, a sequence can have large eigenvalues even if some links have probability zero or very small probability. In these cases equation (\ref{leveq}) will not give a useful lower bound on the eigenvalues.} If condition (2) holds, then $\sum_{i} \sum_j \bar{A}(n)_{ij}$ is not too large relative to $\Delta n$. Combining these two informal bounds then gives that $\overline{\lambda}_1(n)^2 > \Delta\log(n ).$ We will see how a sequence of random networks can fail to have large enough eigenvalues if one agent has a much higher expected degree than others in Example~\ref{smalleig}.

The condition rules out very sparse networks. If most or all agents have very few links, centralities will be very sensitive to whether particular links form. So characterizing centralities with high probability will not be possible. In practice, the condition may fail for networks where links are reserved for strong connections such as close friendships.

In Sections~\ref{two} and~\ref{comparative}, we study sequences of stochastic block networks for which each group's fraction of the total population remains fixed as $n$ grows, and all link probabilities $p_{ij}$ also remain fixed. It is easy to see these sequences have large enough eigenvalues.

Under these conditions we can characterize eigenvector centrality:
\begin{thrm}\label{thm1}
Suppose $A(n)$ is a sequence of random networks that has non-vanishing spectral gap and has large enough eigenvalues.  Let $\epsilon > 0$. For $n$ sufficiently large, with probability at least $1-\epsilon$ the matrix $A(n)$ has a unique largest eigenvalue $\lambda_1(n)$ and the eigenvector centralities $\textbf{v}(A(n))$ satisfy
$$\|\textbf{v}(A(n)) - \textbf{v}(\bar{A}(n))\|_2 < \epsilon.$$
\end{thrm}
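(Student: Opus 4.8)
The plan is to combine a high-probability bound on the spectral norm of the error matrix $E(n):=A(n)-\bar A(n)$ with a deterministic eigenvector perturbation estimate. The large enough eigenvalues condition will make $\|E(n)\|_2$ small relative to $\overline{\lambda}_1(n)$ with high probability, and the non-vanishing spectral gap will turn that smallness into closeness of the top eigenvectors.

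First I would record the probabilistic input. Since $\bar A(n)$ is nonnegative, Perron--Frobenius (equivalently, Gershgorin) gives $\overline{\lambda}_1(n)\le\Delta$, so the large enough eigenvalues condition forces $\Delta/\log n\to\infty$; in particular, for $n$ large $\Delta$ exceeds the threshold needed to apply the spectral concentration result of \cite*{Chung11} for random graphs with independent edges. That result provides a universal constant $C$ such that, for $n$ large, with probability at least $1-\epsilon$ one has $\|E(n)\|_2\le C\sqrt{\Delta\log n}$; call this event $\Omega_n$ and work on it from now on. By the large enough eigenvalues condition, $\|E(n)\|_2\le C\sqrt{\Delta\log n}=o(\overline{\lambda}_1(n))$ on $\Omega_n$.

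The rest is deterministic linear algebra. Because $\bar A(n)$ is positive, Perron--Frobenius makes $\overline{\lambda}_1(n)$ its largest \emph{signed} eigenvalue, so the non-vanishing spectral gap says every other eigenvalue of $\bar A(n)$ lies in $[-|\overline{\lambda}_2(n)|,\,|\overline{\lambda}_2(n)|]$ with $\overline{\lambda}_1(n)-|\overline{\lambda}_2(n)|>\delta\overline{\lambda}_1(n)$. On $\Omega_n$, Weyl's inequality gives $\lambda_1(n)\ge\overline{\lambda}_1(n)-\|E(n)\|_2$ while every other eigenvalue of $A(n)$ has absolute value at most $|\overline{\lambda}_2(n)|+\|E(n)\|_2$; since $\|E(n)\|_2=o(\overline{\lambda}_1(n))$, for $n$ large the former strictly exceeds the latter, so $A(n)$ has a unique eigenvalue of largest absolute value and $\textbf{v}(A(n))$ is well defined up to sign. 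The Davis--Kahan $\sin\Theta$ theorem, applied to the one-dimensional top eigenspace (whose separation from the rest of the spectrum of $\bar A(n)$ is at least $\delta\overline{\lambda}_1(n)$), then yields, for the sign of $\textbf{v}(A(n))$ chosen so that $\langle\textbf{v}(A(n)),\textbf{v}(\bar A(n))\rangle\ge0$,
$$\|\textbf{v}(A(n))-\textbf{v}(\bar A(n))\|_2\;\le\;\frac{C'\,\|E(n)\|_2}{\overline{\lambda}_1(n)-|\overline{\lambda}_2(n)|}\;\le\;\frac{C'\,C\sqrt{\Delta\log n}}{\delta\,\overline{\lambda}_1(n)}\;\longrightarrow\;0,$$
so for $n$ large this is below $\epsilon$ on $\Omega_n$, an event of probability at least $1-\epsilon$. (The chosen sign is the usual Perron--Frobenius convention whenever $A(n)$ is connected, which holds on an event of probability tending to $1$ since the minimum expected degree is $\gg\log n$.)

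The one genuinely substantive step is the spectral-norm concentration bound $\|E(n)\|_2=O(\sqrt{\Delta\log n})$ with high probability --- this is the probabilistic heart of the argument and is exactly where \cite*{Chung11} is invoked; everything downstream is Weyl plus Davis--Kahan. It is also where the hypotheses earn their keep: large enough eigenvalues makes this bound negligible next to $\overline{\lambda}_1(n)$, and the non-vanishing spectral gap converts negligibility into eigenvector closeness. Minor points requiring care are that the gap hypothesis is stated in absolute values (reconciled via positivity of $\bar A(n)$ and Perron--Frobenius), that $\Delta$ is large enough for \cite*{Chung11} (reconciled via $\overline{\lambda}_1(n)\le\Delta$), and the normalization/sign of $\textbf{v}(A(n))$ when it is not canonically pinned down.
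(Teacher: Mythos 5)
Your proposal is correct, and it shares the paper's probabilistic backbone while differing on the deterministic half. Both arguments rest on the same input from \cite*{Chung11}: with probability at least $1-\epsilon$, $\|A(n)-\bar{A}(n)\|_2 \le \sqrt{4\Delta\log(2n/\epsilon)}$, which the large enough eigenvalues condition makes $o(\overline{\lambda}_1(n))$ (your check that $\overline{\lambda}_1(n)\le\Delta$ guarantees the degree threshold for that theorem is a nice touch the paper leaves implicit). Where you then invoke Weyl plus the Davis--Kahan $\sin\Theta$ theorem, the paper instead runs a hands-on perturbation argument: it writes $\textbf{v}(A)=\alpha_1\textbf{v}(\bar{A})+\alpha_2\textbf{w}$ with $\textbf{w}\perp\textbf{v}(\bar{A})$, bounds $\|\bar{A}\textbf{v}(A)\|_2$ from below by $\overline{\lambda}_1(1-2f(n))$ using the triangle inequality and the eigenvalue bound, bounds it from above by $\overline{\lambda}_1\sqrt{\alpha_1^2+(1-\delta)^2\alpha_2^2}$ using the spectral gap, and concludes $\alpha_2\to 0$. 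That computation is essentially a self-contained proof of the one-dimensional case of Davis--Kahan, and the paper explicitly flags your route in a footnote as the alternative taken by \cite*{avella2017centrality}. Your version is shorter and makes the uniqueness of the top eigenvalue of $A(n)$ explicit via Weyl (the paper leaves it implicit in the same eigenvalue bound), at the cost of importing Davis--Kahan as a black box and having to fuss slightly over which version of the $\sin\Theta$ gap to use; the paper's version buys self-containedness and an explicit rate constant $8f(n)/(2\delta-\delta^2)$ that it later quotes in the ``Rate of Convergence'' discussion. Your handling of the sign ambiguity of $\textbf{v}(A(n))$ is consistent with the paper's remark that the results do not depend on the convention chosen.
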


The theorem says that in large networks, the eigenvector centralities of all agents approach their expected values, and these expected values can be computed from the probabilities of links. Notably, the distance between an agent's eigenvector centrality and its expected value is bounded uniformly (across agents).

The result has the flavor of the law of large numbers, and that theorem gives some intuition for why the bounds hold. By the law of large numbers, for $n$ big enough most agents have approximately the expected number of links to each other group. This suggests their eigenvector centralities will also be close to their expected values, though an actual proof requires more substantive random graph theory.

The key tool is a result from \cite*{Chung11} which bounds the norm of the matrix $A(n) - \bar{A}(n)$ and the difference between the eigenvalues of $A(n)$ and $\bar{A}(n)$. We show that with large enough eigenvalues, their result implies that for $n$ large, the matrix norm $\|A(n) - \bar{A}(n)\|_2$ is small and the difference $|\lambda_1(n)-\overline{\lambda}_1(n)|$ is small compared to $\overline{\lambda}_1(n)$.

Given these bounds, we can show that $\bar{A}(n)$ and $A(n)$ (considered as linear operators on $\mathbb{R}^n$) map the eigenvector centrality $\textbf{v}(A(n))$ to nearby vectors. In particular, the image of $\textbf{v}(A(n))$ under $\bar{A}(n)$ has norm close to $\overline{\lambda}_1(n)$. Because we bound the spectral gap of $\bar{A}$ below, the image of a unit vector under $\bar{A}(n)$ only has norm close to the first eigenvalue when that vector is close to the first eigenvector $\textbf{v}(\bar{A}(n))$.



We next state the analogous result for Katz-Bonacich centrality.
\begin{thrm}\label{thm2}
Suppose $A(n)$ is a sequence of random networks that has large enough eigenvalues. Suppose $\phi(n)$ is a sequence of constants such that $$\limsup_n  \phi(n) \overline{\lambda}_1(n) < 1.$$ Let $\epsilon>0$. For $n$ sufficiently large, with probability at least $1-\epsilon$ the vectors of Katz-Bonacich centralities satisfy
$$\|\textbf{c}(A(n) ,\phi(n)) -\textbf{c}(\bar{A}(n) ,\phi(n))  \|_2 < \epsilon \sqrt{n}.$$
\end{thrm}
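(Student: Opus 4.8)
The plan is to reduce everything to a perturbation bound for the resolvent $(I-\phi A)^{-1}$, using the same matrix-norm estimate from \cite*{Chung11} that drives Theorem~\ref{thm1}. Since $\textbf{c}(A,\phi)=(I-\phi A)^{-1}\mathbf{1}$ and $\textbf{c}(\bar A,\phi)=(I-\phi\bar A)^{-1}\mathbf{1}$, the resolvent identity gives
$$\textbf{c}(A,\phi)-\textbf{c}(\bar A,\phi)=\phi\,(I-\phi A)^{-1}(A-\bar A)(I-\phi\bar A)^{-1}\mathbf{1}=\phi\,(I-\phi A)^{-1}(A-\bar A)\,\textbf{c}(\bar A,\phi),$$
hence
$$\|\textbf{c}(A,\phi)-\textbf{c}(\bar A,\phi)\|_2\le \phi\,\|(I-\phi A)^{-1}\|_2\,\|A-\bar A\|_2\,\|\textbf{c}(\bar A,\phi)\|_2 .$$
The goal is to show the right-hand side is $o(\sqrt n)$, and in particular below $\epsilon\sqrt n$, on an event of probability at least $1-\epsilon$ for $n$ large.

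First I would bound the two ``deterministic'' factors. Because $\bar A$ is symmetric with $\|\bar A\|_2=\overline{\lambda}_1$, and $\limsup_n\phi(n)\overline{\lambda}_1(n)<1$, there is a constant $c<1$ with $\phi(n)\overline{\lambda}_1(n)\le c$ for all large $n$; expanding $(I-\phi\bar A)^{-1}$ as a geometric series (or diagonalizing) then gives $\|(I-\phi\bar A)^{-1}\|_2\le(1-c)^{-1}$, and in particular $\|\textbf{c}(\bar A,\phi)\|_2\le(1-c)^{-1}\|\mathbf{1}\|_2=(1-c)^{-1}\sqrt n$. This is exactly where the $\sqrt n$ on the right side of the theorem comes from: Katz--Bonacich vectors have coordinates of order one, so their Euclidean norm is inherently of order $\sqrt n$, and the theorem is really an order-$\sqrt n$ (relative) statement.

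Next I would invoke \cite*{Chung11} for the random factors, exactly as in the proof of Theorem~\ref{thm1}. The large-enough-eigenvalues condition forces $\Delta\gg\log n$ (since $\overline{\lambda}_1(n)\le\Delta$, squaring the condition gives $\Delta/\log n\to\infty$), so the hypotheses of \cite*{Chung11} hold and, with probability at least $1-\epsilon$ for $n$ large, $\|A-\bar A\|_2=O(\sqrt{\Delta\log n})$. By the triangle inequality for the operator norm, $\|A(n)\|_2\le\overline{\lambda}_1(n)+\|A-\bar A\|_2=\overline{\lambda}_1(n)(1+o(1))$ on this event, so $\phi(n)\|A(n)\|_2\le c(1+o(1))\le c'<1$ for some $c'$ and all $n$ large; in particular $\textbf{c}(A(n),\phi(n))$ is well defined and $\|(I-\phi A)^{-1}\|_2\le(1-c')^{-1}$. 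Plugging the three bounds into the displayed inequality and using $\phi(n)\le c/\overline{\lambda}_1(n)$ yields
$$\|\textbf{c}(A,\phi)-\textbf{c}(\bar A,\phi)\|_2=O\!\left(\phi(n)\sqrt{\Delta\log n}\right)\sqrt n=O\!\left(\frac{\sqrt{\Delta\log n}}{\overline{\lambda}_1(n)}\right)\sqrt n,$$
and the large-enough-eigenvalues condition says precisely that $\overline{\lambda}_1(n)/\sqrt{\Delta\log n}\to\infty$, so the prefactor tends to $0$ and the bound is below $\epsilon\sqrt n$ for $n$ large.

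The only real obstacle is the bookkeeping that makes the ``extra'' factor $\phi(n)\|A-\bar A\|_2$ vanish: one must use the large-enough-eigenvalues condition both to verify the hypotheses of \cite*{Chung11} (via $\Delta\gg\log n$) and, through $\phi(n)\le c/\overline{\lambda}_1(n)$, to cancel $\overline{\lambda}_1(n)$ against $\|A-\bar A\|_2$. Everything else --- the resolvent identity, the geometric-series bound on $\|(I-\phi\bar A)^{-1}\|_2$, and the operator-norm triangle inequality controlling $\|A(n)\|_2$ --- is routine. Note that, unlike Theorem~\ref{thm1}, no spectral-gap hypothesis is needed, because Katz--Bonacich centrality is defined through the full resolvent rather than a single eigenvector, so there is no issue of the relevant eigenspace being nearly degenerate.
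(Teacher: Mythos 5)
Your proof is correct, and it takes a genuinely different route from the paper's. The paper expands both centralities as power series $\sum_k \phi^k A^k\mathbf{1}$ and $\sum_k \phi^k\bar A^k\mathbf{1}$, truncates at a level $K$ using $\limsup_n\phi(n)\overline{\lambda}_1(n)<1$ to control the tails, and then runs an induction on $k$ (via the telescoping factorization $\phi^{k+1}A^{k+1}-\phi^{k+1}\bar A^{k+1}=\phi(A-\bar A)\phi^kA^k+\phi\bar A(\phi^kA^k-\phi^k\bar A^k)$) to show each of the first $K$ terms is small. You instead apply the second resolvent identity once, which collapses the entire induction-plus-truncation argument into the single inequality $\|\textbf{c}(A,\phi)-\textbf{c}(\bar A,\phi)\|_2\le\phi\|(I-\phi A)^{-1}\|_2\|A-\bar A\|_2\|\textbf{c}(\bar A,\phi)\|_2$. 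Both arguments feed on the same probabilistic input (the \cite*{Chung11} bound $\|A-\bar A\|_2=O(\sqrt{\Delta\log n})$ and the cancellation of $\overline{\lambda}_1$ against it via $\phi\le c/\overline{\lambda}_1$), and your verification of the side conditions is sound: $\overline{\lambda}_1\le\Delta$ gives $\Delta\gg\log n$ for the hypotheses of \cite*{Chung11}, and the operator-norm triangle inequality gives $\phi\|A\|_2\le c'<1$ so that the random resolvent is uniformly bounded on the high-probability event. What your approach buys is brevity and an explicit rate, $\|\textbf{c}(A,\phi)-\textbf{c}(\bar A,\phi)\|_2=O(\sqrt{\Delta\log n}/\overline{\lambda}_1)\sqrt n$, which the paper only extracts informally in its ``Rate of Convergence'' discussion; what the paper's term-by-term approach buys is a walk-counting interpretation (differences in discounted walk counts of each length) that it reuses in the comparative-statics results of Section~\ref{comparative}. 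Your closing observation that no spectral-gap hypothesis is needed is also correct and consistent with the theorem statement.
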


The statement is essentially the same as Theorem~\ref{thm1}, with eigenvector centrality now replaced by Katz-Bonacich centrality. So this result says that the Katz-Bonacich centralities of all in large networks are close to their expected values, and these expected values also depend only on $\bar{A}$. Katz-Bonacich centrality has norm $O(\sqrt{n})$ while eigenvector centrality is defined to have norm $1$, so there is an extra factor of $\sqrt{n}$ in Theorem~\ref{thm2}.

The proof uses the same bounds on the matrix norm $\|A(n) - \bar{A}(n)\|_2$. The idea is then to express the relevant Katz-Bonacich centralities as a sum of matrix powers, and to bound the differences between these matrix powers with the bounds on $\|A(n) - \bar{A}(n)\|_2$.

\textbf{Example and Simulations: }The simplest non-trivial example is the Erd\H{o}s-R\'{e}nyi graph with $\bar{A}(n)_{ij}=p$ for all $i,j$, and $n$. This sequence has non-vanishing spectral gap and large enough eigenvalues. For any $\epsilon>0$, Theorem~\ref{thm1} implies
$$\|\textbf{v}(A(n)) - \frac{1}{\sqrt{n}}\textbf{1}\|_2  <\epsilon$$
with high probability for $n$ large and Theorem~\ref{thm2} implies for suitable $\phi(n)$ that
$$\|\textbf{c}(A(n),\phi(n)) - \frac{1}{1-\phi(n)np}\textbf{1}\|_2  <\epsilon \sqrt{n}$$
with high probability for $n$ large.

We simulated $100$ such random networks with $p=\frac14$ and $\phi(n) = \frac{1}{2\overline{\lambda}_1(n)}.$ When $n=500$, the average value of $\|\textbf{v}(A(n)) -\textbf{v}(\bar{A}(n)) \|_2$ is $0.0773$  and the average value of $\|\textbf{c}(A(n) ,\phi(n)) -\textbf{c}(\bar{A}(n) ,\phi(n))  \|_2$ is $1.757.$ When $n=1000$, the average values are $0.0548$ and $1.744$, respectively.

\textbf{Extensions: }In Appendix~\ref{ext}, we provide several extensions of Theorems~\ref{thm1} and~\ref{thm2}. Subsection~\ref{weights} allows non-integer edge weights by replacing the Bernoulli random variables $A_{ij}$ with uniform random variables. Subsection~\ref{clustering} allows for clustering, an increased propensity for two agents $i$ and $j$ to be connected if both are connected to a common neighbor $k$, in this setting with edge weights. We give versions of the two theorems for networks placing additional weight on triangles along with dyads. Because many real-world social and economic networks exhibit more clustering than independent formation of links would imply, this modification helps accommodate many applications.

\textbf{Counterexamples: }We now give two counterexamples showing that a non-vanishing spectral gap and large enough eigenvalues are necessary for the two theorems. In the first example, the eigenvalues of $A(n)$ are well-behaved but the eigenvector centrality depends discontinuously on the realized network.

\begin{example}
Suppose that $\bar{A}(n)_{ij} = \frac12$ if $i,j \leq \frac{n}2$ or $i,j >\frac{n}2$ and $\bar{A}(n)_{ij}=n^{-3}$ otherwise. The network is divided into two groups, and two agents in the same group are connected with probability $\frac12$ while two agents in separate groups are connected with very small probability. A simple calculation shows that $\frac{\overline{\lambda}_2(n)}{\overline{\lambda}_1(n)}\rightarrow 1$, so that the spectral gap does vanish.

Then with probability converging to $\frac12$ as $n \rightarrow \infty$, the eigenvector centralities of agents $1,2,\ldots,\lfloor \frac{n}{2}\rfloor$ are all $0$ while the eigenvector centralities of the remaining agents are positive. With probability also converging to $\frac12$ as $n \rightarrow \infty$, the eigenvector centralities of agents $\lfloor \frac{n}{2}\rfloor+1,\lfloor \frac{n}{2}\rfloor+2,\ldots,n$ are all $0$ while the eigenvector centralities of the remaining agents are positive. On the other hand ${v}_i(\bar{A}(n))=\frac{1}{\sqrt{n}}$ for all $i$, so 
$$\|\textbf{v}(A(n)) - \textbf{v}(\bar{A}(n))\|_2$$
cannot vanish with high probability. With high probability there are no edges between the two groups, and when this occurs, the eigenvector centrality can correspond to either of the two corresponding components of the network. Note that the example does not depend on multiplicity of the largest eigenvalue of $\bar{A}(n)$ or $A(n)$, but only on the first two eigenvalues of these matrices being close.
\end{example}

We observed above that a sequence of random networks has large enough eigenvalues if (1) the degrees of agents grow sufficiently faster than $\log n$ and (2) no small group of agents has too many links. It is easy to construct counterexamples when all agents' degrees are bounded. The next example shows the theorems fail even when degrees grow quickly if links are too concentrated.

\begin{example}\label{smalleig}
Suppose that $\bar{A}(n)_{ij}=\frac12$ if $i=1$ or $j=1$ and $\bar{A}(n)_{ij} = \frac{\log n}{n}$ otherwise.\footnote{We could also take  $\bar{A}(n)_{ij} =n^{\alpha}$ for $\alpha <-\frac12$.} The sequence fails to have large enough eigenvalues: the largest eigenvalue $\overline{\lambda}_1(n)$ grows at rate at most $\sqrt{2n}$ while the maximum expected degree is $\Delta = (n-1)/2$, so $$\frac{\overline{\lambda}_1(n)}{\sqrt{\Delta \log(n)}} \rightarrow 0.$$

Let $\phi(n) = \frac{\phi_0}{\sqrt{2n}}$ for $\phi_0 < 1$. By the law of large numbers, with high probability for $n$ large, agent $1$ has degree approximately $n/2$ while all other agents have degree approximately $\log n$. Conditional on this event, the number of paths of length $k$ beginning at agent $i$ with a link to agent $1$ is approximately
$$(n/2)^j + o(n^j) \text{ for } k=2j \text{ and } (j+1)(n/2)^j\log n + o(n^j \log n)\text{ for } k=2j+1.$$
The number of such paths in the weighted network with adjacency matrix $\bar{A}(n)$ is
$$(n/4)^j + o(n^j)\text{ for } k=2j \text{ and } (j+1)(n/4)^j\log n + o(n^j \log n)\text{ for } k=2j+1.$$
The number of paths beginning at agent $i$ depends substantially on the realization of the potential link between $1$ and $i$.

A simple computation then shows that we can bound $|c_i(A(n),\phi(n))-c_i(\bar{A}(n),\phi(n))|$ below by a constant independent of $n$ for all $i$ such that $A_{ij}(n)=1$. Thus there exists $\epsilon > 0$ such that for $n$ large
$$\|\textbf{c}(A(n) ,\phi(n)) -\textbf{c}(\bar{A}(n) ,\phi(n))  \|_2 > \epsilon \sqrt{n}$$
with high probability, so Theorem~\ref{thm2} is violated.
\end{example}

\textbf{Rate of Convergence: }While the assumptions we make are not sufficient for our methods to give guarantees about the rate of convergence of eigenvector and centrality measures, our analysis can give results about the rate of convergence under stronger assumptions. More precisely, suppose we have an upper bound on the rate at which $$\frac{\overline{\lambda}_1(n)}{\sqrt{\Delta \log(n)}} \rightarrow \infty.$$
Then one can derive an upper bound on the rate of convergence of the centrality measures from the proofs of Theorems~\ref{thm1} and~\ref{thm2}.

We make this explicit for Theorem~\ref{thm1}. Under the conditions of the theorem, if we let $f(n) = \sqrt{\Delta \log n} /\overline{\lambda}_1(n),$ then it follows from our proof that
$$\|\textbf{v}(A(n)) - \textbf{v}(\bar{A}(n))\|_2 < \frac{8 (f(n)-f(n)^2)}{2 \delta - \delta^2} < \frac{8f(n)}{\delta}$$
with high probability. For example if the degrees of all agents are linear in $n$, then this gives an upper-bound of $O(\sqrt{\frac{\log n}{n}})$ for the distance between the eigenvector centralities.

\section{Homophily and Inequality}\label{ineq}

This section elucidates the connections between homophily in network formation and inequality of outcomes. To do so requires a network model where we can vary the amount of homophily, so we consider large stochastic block networks (defined in Section~\ref{Model}).

Our goal is thus to understand which agents are central in stochastic block models. The first step is applying the main theorems from the previous section,  which reduce computing these centralities to analyzing the matrix of link probabilities $\bar{A}$. To apply the theorems, we now carry out this analysis of $\bar{A}$ for several random network models. We can thus study the relationship between the parameters governing link formation and distributions of centrality. 

We first compare distributions according to Lorenz dominance, which measures relative inequality. This analysis explains when policy changes affecting network structure will lead to more equal distributions of outcomes. We then ask which groups benefit most in absolute terms from changes to the network, i.e. which groups would be most affected by policy changes.

\subsection{Relative Inequality}\label{two}

We consider a sequence of stochastic block models with $m$ groups and fix all relative group sizes, so that each group $i$ has size $s_in$ for some constants $s_i$ summing to one and independent of $n$.\footnote{If the sizes $s_in$ are not integers, we can round the group sizes by any convention giving total population size $n$.}

Suppose there are just two probabilities of link formation, so the probabilities of link formation are $p_s = p_{ii}$ for all $i$ and $p_d = p_{ij}$ for all $i \neq j$, where $p_s > p_d$.

With two groups, we will assume that $s_1>s_2$ so that group $1$ is the majority group. In the language of the education example from the introduction, the model in this case describes a school with black and white students. Suppose a majority share  $s_1$ of students are white and that any two students of the same race are friends with probability $p_s$ while friendships between races form with probability $p_d$. We can interpret the results in this section as describing which schools we should expect to have more equal educational achievement based on their social networks.

We establish dominance results using the following criterion.

\begin{defin}
Given two distributions $x= (x_1,...,x_n)$ with $x_1 \leq x_2 \leq ... \leq x_n$ and $ y= (y_1,...,y_n)$ with $y_1 \leq y_2 \leq ... \leq y_n$, we say that the distribution $x$ \textbf{Lorenz dominates} $y$ if $$\frac{\sum_{i = 1}^k x_i}{\sum_{i = 1}^n x_i} \geq \frac{\sum_{i = 1}^k y_i}{\sum_{i = 1}^n y_i}$$
for all $1 \leq k \leq n$.
\end{defin}

In words, distribution $x$ Lorenz dominates distribution $y$ if for each $k$, the share of total resources held by the poorest $k$ individuals in distribution $x$ is at least the share held by the poorest $k$ individuals in distribution $y$. Geometrically, the definition says that the Lorenz curve of distribution $x$ (which plots the share of resources held by the poorest $k$ individuals) lies above the Lorenz curve of distribution $y$.

%

Lorenz domination gives a partial order on distributions which nests a wide family of measures of inequality. Most notably, if distribution $x$ Lorenz dominates distribution $y$ then $x$ also has a smaller Gini coefficient than $y$.

For the following proposition, let $\bar{A}(n)$ be the matrix of link formation probabilities with within-group probability $p_s$, between-group probability $p_d$ and group sizes $s_1,\hdots,s_n$. Similarly let $\bar{A}'(n)$ be the matrix of link formation probabilities with within-group probability $p_s'$, between-group probability $p_d'$ and majority group size $s'_1,\hdots,s'_n$. In part (i), we compare distributions of centralities as we vary link formation probabilities fixing group sizes ($s_i=s'_i$). In part (ii), we vary group sizes while keeping link formation probabilities fixed ($p_s=p_s',p_d=p_d'$):

\begin{proposition}\label{twoeig}

With probability approaching $1$ as $n \rightarrow \infty$:

(i) the eigenvector centralities $\textbf{v}(A(n))$ Lorenz dominate the eigenvector centralities $\textbf{v}(A'(n))$ if $A'(n)$ has a higher within-group link formation probability ($p_s' \geq p_s$) and  lower between-group link formation probability ($p_d' \leq p_d$) than $A(n)$.

(ii) with two groups, the eigenvector centralities $\textbf{v}(A(n))$ Lorenz dominate the eigenvector centralities $\textbf{v}(A'(n))$ if $A'(n)$ has a larger majority group ($s_1 \leq s'_1$) and both group sizes $s_1,s_1' \leq \overline{s}$ for a constant $\overline{s}  \in (\frac12, 1)$ depending on only $p_s$ and $p_d$.

\end{proposition}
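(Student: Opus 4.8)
The plan is to push everything through Theorem~\ref{thm1}: replace the random eigenvector centralities $\textbf{v}(A(n)),\textbf{v}(A'(n))$ by the deterministic ones $\textbf{v}(\bar A(n)),\textbf{v}(\bar A'(n))$, prove the Lorenz comparison for the latter, and transfer it back. The point that makes the deterministic side tractable is that $\bar A(n)$ is invariant under permutations of agents within a group, so by uniqueness of the Perron eigenvector $\textbf{v}(\bar A(n))$ is constant on each group; write $x_i$ for its value on group $i$ (of size $s_i n$). Plugging a block-constant vector into $\bar A(n)\textbf{v}=\lambda_1\textbf{v}$ gives $n\big[(p_s-p_d)s_i x_i+p_d\sum_\ell s_\ell x_\ell\big]=\lambda_1 x_i$, hence $x_i\propto (\nu-s_i)^{-1}$ where $\nu:=\lambda_1(\bar A(n))/(n(p_s-p_d))$ is the unique root, larger than $\max_i s_i$, of
$$\sum_{i}\frac{s_i}{\nu-s_i}=\frac{p_s}{p_d}-1.$$
So up to an irrelevant scale the centrality distribution is: group $i$ contributes $s_i n$ copies of $(\nu-s_i)^{-1}$, with larger groups strictly more central, and its Lorenz curve is piecewise linear with kinks at the cumulative population fractions of the groups sorted by size.

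For part (i), relabel groups so $s_1\le\cdots\le s_m$ (so the poorest are the smallest groups). Increasing homophily means raising $p_s/p_d$ (since $p_s'\ge p_s$, $p_d'\le p_d$); the left side of the displayed equation is strictly decreasing in $\nu$, so this strictly lowers $\nu$. It thus suffices to show the Lorenz curve of the step distribution with weights $s_i$ on values $(\nu-s_i)^{-1}$ is non-decreasing in $\nu$, and since the $q$-locations of the kinks do not depend on $\nu$ (group sizes are fixed) it is enough to check this at each kink. Writing $B_k(\nu)=\sum_{i\le k}\tfrac{s_i}{\nu-s_i}$, $C_k(\nu)=\sum_{i>k}\tfrac{s_i}{\nu-s_i}$, the Lorenz value at the $k$-th kink is $B_k/(B_k+C_k)$, whose derivative in $\nu$ has the sign of $B_k'C_k-B_kC_k'$; expanding,
$$B_k'C_k-B_kC_k'=\sum_{i\le k}\sum_{j>k}\frac{s_is_j}{(\nu-s_i)(\nu-s_j)}\Big(\frac{1}{\nu-s_j}-\frac{1}{\nu-s_i}\Big)\ge 0,$$
because $s_i\le s_j$ gives $\nu-s_i\ge\nu-s_j>0$. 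Hence each kink value rises with $\nu$, the whole Lorenz curve rises with $\nu$, and the less homophilous network (larger $\nu$) Lorenz dominates the more homophilous one.

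For part (ii) there are two groups, probabilities are fixed, and we raise the majority share $s_1$ (keeping the paper's convention $s_1>s_2=1-s_1$). Set $r:=x_2/x_1=\tfrac{\nu-s_1}{\nu-s_2}\in(0,1)$; normalizing $x_1=1$, the distribution is $s_1 n$ copies of $1$ and $s_2 n$ copies of $r$, a two-segment convex Lorenz curve with a single kink at population fraction $s_2$. Comparing two such curves, $(s_1,r)$ versus $(s_1',r')$ with $s_1<s_1'$, reduces by piecewise-linearity to two inequalities, one at each kink. The first is ``the more-equal distribution has the larger initial slope''; using the explicit quadratic for $\nu(s_1)$ (which depends only on $p_s/p_d$) this is the statement that $s_1(2s_1-1)/(\nu(s_1)-s_1)$ is non-decreasing, a routine check, and it holds throughout $(\tfrac12,1)$. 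The second, after cancellation, is equivalent to $\Phi(s_1)\ge\Phi(s_1')$, where $\Phi(s_1):=s_1+(1-s_1)r(s_1)$ is (proportional to) the total centrality; since $r(s_1)<1$ on $(\tfrac12,1)$ we have $\Phi<1=\Phi(\tfrac12)=\Phi(1)$ there, so $\Phi$ is not monotone, and one defines $\overline s$ as the largest $t$ for which $\Phi$ is non-increasing on $[\tfrac12,t]$; this $\overline s$ lies in $(\tfrac12,1)$ and depends only on $p_s,p_d$ (indeed only on $p_s/p_d$). For $s_1\le s_1'\le\overline s$ both kink inequalities hold, giving the claimed dominance. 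Finally the transfer from deterministic to random is a stability argument: the deterministic Lorenz curves are, up to vanishing rounding of the $s_i n$, independent of $n$, so the dominance just proved comes with an $n$-uniform margin, and an $\ell^2$-perturbation of size $\epsilon$ of a centrality vector with entries of order $n^{-1/2}$ moves its Lorenz curve by $O(\epsilon)$ uniformly; combining this with Theorem~\ref{thm1} gives the conclusion with probability approaching $1$ (the only genuinely degenerate cases—equal group sizes in (i), or $p_s=p_s',p_d=p_d',s_1=s_1'$—are identical models).

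The main obstacle is the structure of part (ii): recognizing that the binding kink condition is exactly monotonicity of the total-centrality function $\Phi$, establishing that $\Phi$ is genuinely non-monotone on $(\tfrac12,1)$—so that a ceiling $\overline s$ on group sizes is unavoidable, and a very lopsided split is again relatively equal—and pinning down $\overline s$ as depending only on $p_s,p_d$. A secondary, more routine obstacle is making the deterministic-to-random transfer precise near the extreme quantiles, where a single coordinate of the realized eigenvector can move by more than its typical size.
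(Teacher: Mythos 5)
Your deterministic analysis is essentially the paper's: both arguments replace $A(n)$ by $\bar A(n)$, exploit the block-constant Perron vector $x_i\propto(\nu-s_i)^{-1}$, and for part (ii) reduce the binding comparison to monotonicity of the total centrality $\Phi(s_1)=s_1+(1-s_1)r(s_1)$, whose turning point is the paper's explicit $\overline s=\tfrac12+\tfrac{p_d}{2\sqrt{2p_d^2+2p_sp_d}}$. Your part (ii) is in fact more explicit than the paper's proof, which only verifies the condition at your second kink (the sign of $\partial T/\partial s$ for the total $T=N\Phi$); your first kink condition --- that $s_1(2s_1-1)/(\nu(s_1)-s_1)$ is non-decreasing, equivalently that the initial Lorenz slope $r/\Phi$ falls as the majority grows --- is a genuine additional requirement that you assert as ``a routine check'' but do not prove. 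It must be verified before part (ii) is complete (numerically it does hold, but neither you nor the paper establishes it).

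The genuine gap is the deterministic-to-random transfer. You invoke only the $\ell^2$ bound of Theorem~\ref{thm1} and claim that an $\ell^2$ perturbation of size $\epsilon$ moves the Lorenz curve by $O(\epsilon)$ uniformly, together with an ``$n$-uniform margin'' for the deterministic dominance; you then concede this fails near the extreme quantiles, and it does: the two deterministic Lorenz curves meet at $q=0$ and $q=1$, so no uniform additive margin exists, and for $k=O(\epsilon^2 n)$ or $n-k=O(\epsilon^2 n)$ the additive gap between the curves is smaller than the perturbation the $\ell^2$ bound permits (a single coordinate of order $n^{-1/2}$ may move by $\epsilon$). The paper's fix is Lemma~\ref{stochlem}: a separate Chebyshev argument on group-wise neighbor counts upgrades Theorem~\ref{thm1} to the entrywise multiplicative bound $v_i(A(n))/v_i(\bar A(n))\in[1-\epsilon,1+\epsilon]$ for all $i$. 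With that, every ordered partial sum is controlled multiplicatively, and the comparison survives at every $k$ because for the two fixed, $n$-independent step distributions either $L_{\bar A}(q)/L_{\bar A'}(q)$ is bounded away from one from above (near $q=0$) or $\bigl(1-L_{\bar A}(q)\bigr)/\bigl(1-L_{\bar A'}(q)\bigr)$ is bounded away from one from below (near $q=1$). You need this entrywise lemma, or an equivalent, to close the argument; the $\ell^2$ statement alone does not.
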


Increasing $p_s$ and decreasing $p_d$ both correspond to increasing homophily. So part (i) says that more segregated networks lead to more unequal outcomes. Special cases include fixing one of the parameters $p_s$ or $p_d$, as well as varying the two parameters so that the total number of links stays fixed while the fraction of links within groups increases.

We use the formal connection between eigenvector centrality and the dynamic process of updating from Section \ref{EigCent} to give intuition for this result. Recall that for any non-zero $\textbf{w}(0)$, $A^t \textbf{w}(0)$ is approximately proportional to $\textbf{v}(A)$ for $t$ large. So we can consider repeatedly updating individuals' endowments from ${w}_i(t-1)$ to $c {w}_i(t) + \sum_{j \in N_i} {w}_i(t)$. With more links within groups, resources flow to larger groups. With more links between groups, the flows of resources are more equal.

Part (ii) says that when there are two groups and neither group is too large, a bigger majority group means more inequality. There are two effects here. The first is that as the minority group becomes smaller, its members do worse. But the second effect is that as the minority group grows smaller, there are more individuals in the better off group. When group sizes are in the range $(\frac12,\overline{s})$, the first effect is more important.

\begin{figure}
\center{\includegraphics[scale=.55]{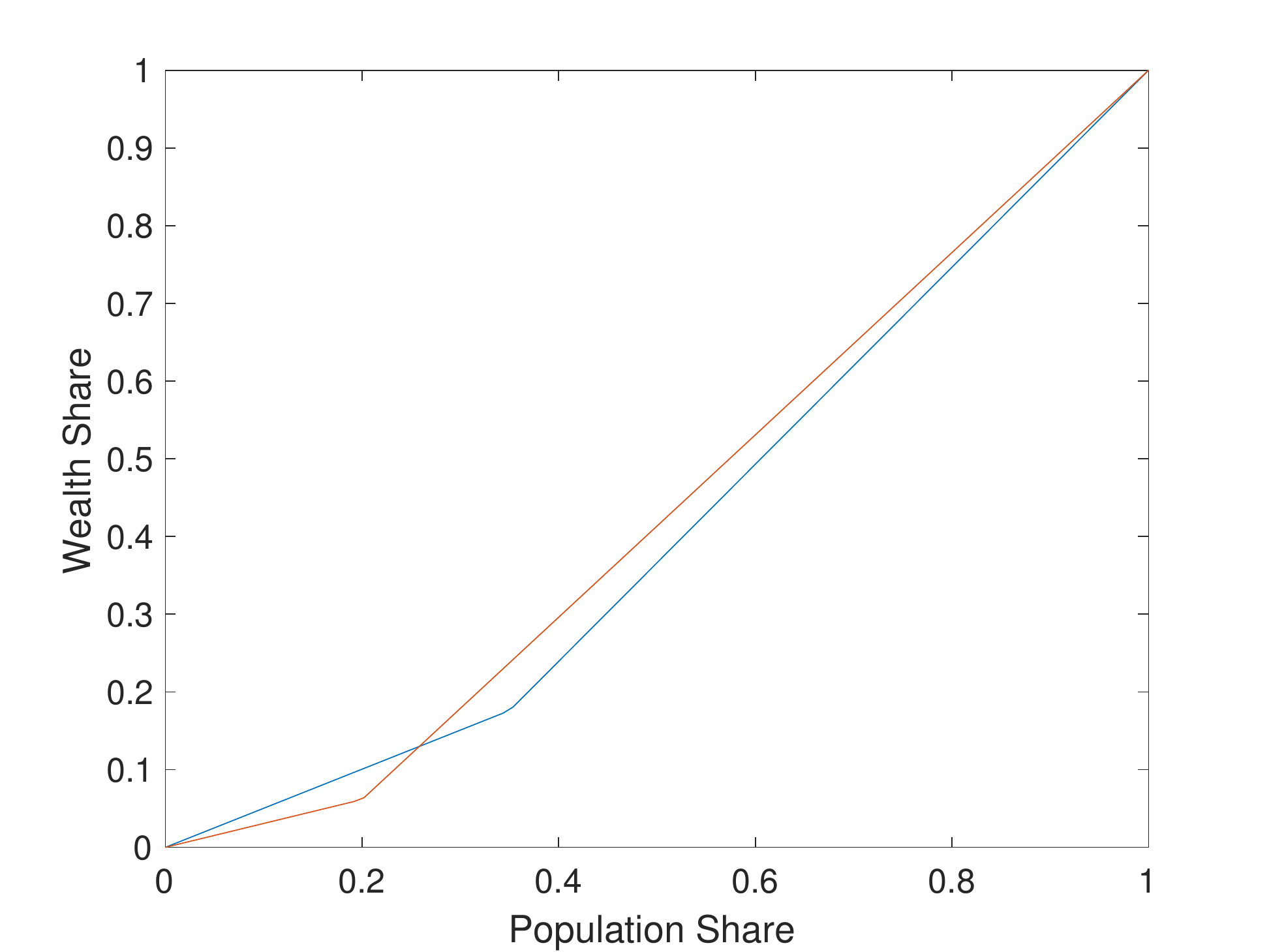}}

\caption{Lorenz curves for eigenvector centralities of two groups with $p_s = .5, p_d = .1$, $s_1 = .65$ in blue, and $s_1' = .8$ in red. Note that neither distribution Lorenz dominates the other.}\label{fig:groupsize}
\end{figure}

When a majority group is much larger than the corresponding minority group, though, we can obtain two distributions that cannot be compared using Lorenz dominance. The smaller minority group does worse, but is also smaller, so we cannot determine which distribution is more unequal (without using an ordering on distributions that is more complete than the Lorenz). Figure~\ref{fig:groupsize} shows an example, with link parameters $p_s=.5$ and $p_d=.1$ and majority group population shares $s= .65$ in blue and $s=.8$ in red.

Using Theorem~\ref{thm1}, the proof is reduced to comparing the eigenvector centralities of $\bar{A}(n)$ and $\bar{A}'(n)$. To do so, we use our recursive definition of eigenvector centrality to compute the elasticities $$\frac{\partial}{\partial p_s}\left( \frac{v_i(\bar{A}(n)) }{v_{i'}(\bar{A}(n)) }\right)=-\frac{\partial}{\partial p_d}\left( \frac{v_i(\bar{A}(n)) }{v_{i'}(\bar{A}(n)) }\right)=\frac{\overline{\lambda}_1^{-1}(s_{l}-s_{l'})n}{(1-\overline{\lambda}_1^{-1}s_{l}n (p_s-p_d))^2},$$
Here $s_l$ and $s_l'$ are the population shares of the groups containing $i$ and $i'$, respectively. This formula shows that the ratio between the centralities of a member of a larger group and a member of a smaller group is increasing in $p_s$ and decreasing in $p_d$. We show this implies Lorenz dominance as we decrease $p_s$ or increase $p_d$.

Finally, part (i) of the proposition applies with eigenvector centrality replaced by Katz-Bonacich centrality. We now fix group sizes.

\begin{proposition}\label{twobon}

Choose a sequence of $\phi(n)$ such that $$\limsup_n  \phi(n) \overline{\lambda}_1(n) < 1 \text{ and }\limsup_n  \phi(n) \overline{\lambda}'_1(n) < 1 .$$ With probability approaching $1$ as $n \rightarrow \infty$, the Katz-Bonacich centralities $\textbf{c}(A(n),\phi)$ Lorenz dominate the Katz-Bonacich centralities $\textbf{c}(A'(n),\phi)$ if $A'(n)$ has a higher within-group link formation probability ($p_s' \geq p_s$) and  lower between-group link formation probability ($p_d' \leq p_d$) than $A(n)$.
\end{proposition}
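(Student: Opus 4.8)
The plan is to follow the template of Proposition~\ref{twoeig}(i): use Theorem~\ref{thm2} to replace the random adjacency matrices by the deterministic matrices of link probabilities, carry out the Lorenz comparison of Katz--Bonacich centralities on $\bar{A}(n)$ and $\bar{A}'(n)$ by hand, and then check that this deterministic comparison has enough room to survive the $O(\epsilon\sqrt n)$ error in Theorem~\ref{thm2}.

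\textbf{Step 1 (reduction to the link-probability matrices).} The sequences $A(n)$ and $A'(n)$ have large enough eigenvalues (fixed group fractions and fixed link probabilities), so Theorem~\ref{thm2} applies to each: for any $\eta>0$, with probability tending to one $\|\mathbf{c}(A(n),\phi)-\mathbf{c}(\bar{A}(n),\phi)\|_2<\eta\sqrt n$ and likewise for the primed version. Since sorting is an $\ell^2$-contraction and, by Cauchy--Schwarz, $|\sum_{i\le k}(\text{sorted differences})|\le\sqrt k\,\eta\sqrt n\le\eta n$, while $\sum_i c_i(\bar{A}(n),\phi)\ge n$ (because $\mathbf{c}=\sum_k\phi^k\bar{A}^k\mathbf{1}\ge\mathbf{1}$ entrywise), each normalized partial sum of the random centrality vector is within $O(\eta)$ of the corresponding normalized partial sum of the deterministic one. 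So it suffices to prove $\mathbf{c}(\bar{A}(n),\phi)$ Lorenz dominates $\mathbf{c}(\bar{A}'(n),\phi)$ with a margin that beats $O(\eta)$.

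\textbf{Step 2 (Katz--Bonacich centrality in a block model).} Since $\bar{A}(n)$ and $\mathbf{1}$ are invariant under permutations within groups, the solution of $\mathbf{c}=\mathbf{1}+\phi\bar{A}(n)\mathbf{c}$ is constant on groups; write $\gamma_l$ for its value on group $l$. Then $\boldsymbol{\gamma}=(\gamma_1,\dots,\gamma_m)$ solves the $m$-dimensional system $\boldsymbol{\gamma}=\mathbf{1}_m+\phi B\boldsymbol{\gamma}$, where $B_{lj}$ is the expected number of group-$j$ neighbors of a group-$l$ agent, i.e. $B=n p_d\,\mathbf{1}_m\mathbf{s}^\top+n(p_s-p_d)\operatorname{diag}(\mathbf{s})$ with $\mathbf{s}=(s_1,\dots,s_m)^\top$ (self-loop corrections are lower order). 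This is diagonal plus rank one, so $(I_m-\phi B)^{-1}$ is given by Sherman--Morrison, and one gets $\gamma_l=\kappa/(1-t s_l)$ with $t:=\phi n(p_s-p_d)$ and a positive constant $\kappa=\kappa(n)$ common to all groups; positivity of $\boldsymbol{\gamma}$ together with $\phi\,\overline{\lambda}_1(n)<1$ forces $t s_l<1$ for all $l$ and $n$ large. Hence the share of total Katz--Bonacich centrality held by group $l$ is $q_l(t)=\big(s_l/(1-ts_l)\big)\big/\sum_j\big(s_j/(1-ts_j)\big)$, which depends on the link probabilities only through $t$.

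\textbf{Step 3 (the deterministic Lorenz comparison) and the main obstacle.} Order groups so $s_1\ge\cdots\ge s_m$; then $\gamma_1\ge\cdots\ge\gamma_m$, the poorest agents are those of the smallest group, and the Lorenz curve of $\boldsymbol{\gamma}$ is piecewise linear through the points $\big(\sum_{l>j}s_l,\ \sum_{l>j}q_l(t)\big)$. Because $\bar{A}$ and $\bar{A}'$ share group sizes, these breakpoints sit at the same population fractions for both, so it is enough to show $L_j(t):=\sum_{l>j}q_l(t)$ is non-increasing in $t$; equivalently, writing $f_l(t)=s_l/(1-ts_l)$ and using $f_l'=f_l^2$, that $\big(\sum_{l\le j}f_l^2\big)\big/\big(\sum_{l\le j}f_l\big)\ge\big(\sum_l f_l^2\big)\big/\big(\sum_l f_l\big)$ --- the $f$-weighted average of $\{f_l\}$ over the largest prefix is at least the $f$-weighted average over all groups, a Chebyshev-type inequality that follows from $\min_{l\le j}f_l\ge\max_{l>j}f_l$. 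Since $t\le t'$ is exactly the hypothesis $p_s\le p_s'$, $p_d\ge p_d'$, this gives that the Lorenz curve of $\mathbf{c}(\bar{A}(n),\phi)$ lies weakly above that of $\mathbf{c}(\bar{A}'(n),\phi)$. The genuinely delicate point is that these inequalities are only weak, so to finish Step 1's reduction I need a strictly positive margin: from the proof, $L_j$ is \emph{strictly} decreasing at $t$ whenever $f_j(t)>f_{j+1}(t)$, i.e. whenever groups $j,j{+}1$ differ in size, so the margin is of order $(t'-t)$ times the size gaps, which I would bound below along the sequence (passing to a subsequence so that $t,t'\to t_\infty<t'_\infty$, the two being strictly ordered unless $\phi(n)n\to 0$). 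I expect this margin bookkeeping, together with the Chebyshev inequality, to be the only real work; the degenerate cases --- all groups of equal size, or $\phi(n)\to 0$, where the two centrality vectors collapse to $\mathbf{1}$ and the comparison is vacuous at leading order --- I would dispose of separately.
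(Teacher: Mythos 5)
Your proposal is correct and its deterministic core is exactly the paper's: both reduce to the matrix of link probabilities and exploit the fact that on $\bar{A}(n)$ the Katz--Bonacich centrality is constant on groups with $c_l \propto (1-\phi n(p_s-p_d)s_l)^{-1}$, so that the Lorenz curve moves monotonically in the single parameter $t=\phi n(p_s-p_d)$ (the paper phrases this through the ratio $c_i/c_{i'}=(1-s_{l'}t)/(1-s_lt)$ and its elasticities in $p_s,p_d$, then invokes the reciprocal-sum identity from Proposition~\ref{twoeig}; your Sherman--Morrison derivation and Chebyshev-sum inequality on the group shares are an equivalent packaging). Where you genuinely diverge is the random-to-deterministic transfer: the paper routes through Lemma~\ref{stochlem}, which upgrades Theorem~\ref{thm2} to \emph{entrywise multiplicative} bounds $c_i(A)/c_i(\bar{A})\in[1-\epsilon,1+\epsilon]$ via a separate Chebyshev concentration argument on group-wise degree counts, whereas you work directly with the raw $\ell^2$ bound of Theorem~\ref{thm2}, Cauchy--Schwarz, and the fact that sorting is an $\ell^2$-contraction. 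Your route is more self-contained and suffices for Lorenz comparisons (which only involve partial sums); the paper's entrywise lemma buys per-agent control that it reuses elsewhere. You are also more candid than the paper about the one real weak point: the deterministic dominance is only strict with a margin of order $t'-t$, so the transfer needs that margin to beat the $O(\eta)$ perturbation, which fails in the degenerate regimes ($p_s=p_s'$, $p_d=p_d'$, or $\phi(n)n\to 0$) --- a gap the paper's ``the remainder of the proof is the same'' silently shares, so it should not count against you.
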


The proof and intuition are similar to Proposition~\ref{twoeig}. The relevant elasticities are now
$$\frac{\partial }{\partial p_s}\left( \frac{c_i(\bar{A},\phi)}{c_j(\bar{A},\phi)} \right)= -\frac{\partial }{\partial p_d} \left( \frac{c_i(\bar{A},\phi)}{c_j(\bar{A},\phi)}
\right) =\frac{(s_l-s_{l'})(p_s-p_d)\phi n}{(1-s_l(p_s-p_d)\phi n)^2},$$
where $s_l$ and $s_l'$ are the population shares of the groups containing $i$ and $i'$. The basic intuition about group sizes is also similar, but there are not clean results because the analog of the cutoff $\overline{s}$ depends on $n$ and on the choice of constant $\phi(n)$.

\subsection{Comparisons between Comparative Statics}\label{comparative}

We now focus on comparative statics of Katz-Bonacich centrality and ask which groups benefit more in absolute terms from changes in link parameters.\footnote{Related results have been derived for comparative statics of the first eigenvector, e.g. \cite{magnus1985differentiating} and \cite{conlisk1985comparative}. Stronger conditions are needed to determine the sign of the derivatives of eigenvector centrality due to the normalization.} Even with only a within-group and a between-group link probability, changes in link parameters can have more complicated effects on levels of centrality. We find that adding within-group links still benefits larger groups, but adding between-group links can have ambiguous effects. We then give an example showing that without the restriction to two link probabilities, extra links between group $1$ and group $2$ can benefit other well-connected groups more than group $1$.

In Appendix~\ref{appendixstatics}, we provide two formulas for the comparative statics of Katz-Bonacich centrality as some link probability $p_{ij}$ changes. The first expresses the derivative of $c_i(A,\phi)$ as a suitably weighted count of the number of walks beginning at agent $i$. The second is a more explicit expression depending on the link probabilities $p_{ij}$, and is easy to compute in examples. This result replace each group with a representative agent and then count walks in the network of representative agents. These formulas are used to prove the propositions in this section, and may also be useful tools in other settings.

We again consider stochastic block models with $m$ groups and relative group sizes fixed at $s_in$.\footnote{We again round the group sizes by any convention giving the appropriate total population.} The next two propositions maintain the assumption that there are two link probabilities $p_s$ and $p_d$, depending on whether the relevant nodes are in the same group or different groups. 

To facilitate simple statements of results, all comparative statics are first given for deterministic networks $\bar{A}$ with non-integral weights. The corresponding results for sequences of random networks follow as corollaries. 

Let $\frac{\partial}{\partial p_s} \textbf{c}(\bar{A},\phi)$ denote the derivative of Katz-Bonacich centrality as the entries of $\bar{A}$ corresponding to within-group links vary.

\begin{proposition}\label{same}
Suppose that $\phi < \overline{\lambda}^{-1}$. Then $$\frac{\partial}{\partial p_s} c_i(\bar{A},\phi) > \frac{\partial}{\partial p_s} c_j(\bar{A},\phi).$$
whenever $i$'s group is larger than $j$'s group.
\end{proposition}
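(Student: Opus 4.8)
The plan is to work with the deterministic weighted matrix $\bar A$ directly (the random network version follows via Theorem~\ref{thm2}) and to exploit the block structure of $\bar A$ to collapse the computation onto an $m\times m$ matrix of ``representative agents.'' Since $\bar A_{kl}$ depends only on the groups of $k$ and $l$, any power $\bar A^t\mathbf{1}$ is constant on groups, so $c_k(\bar A,\phi)$ depends only on the group of $k$; write $\gamma_l$ for the common Katz--Bonacich value of agents in group $l$. The vector $(\gamma_1,\dots,\gamma_m)$ solves a reduced fixed-point equation $\boldsymbol\gamma = \phi B\boldsymbol\gamma + \mathbf 1$, where $B$ is the $m\times m$ matrix with $B_{ll} = s_l n\, p_s$ and $B_{ll'} = s_{l'} n\, p_d$ for $l\ne l'$ (each entry weights the representative agent of group $l'$ by that group's size). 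The derivative $\frac{\partial}{\partial p_s}c_i(\bar A,\phi)$ is then $\frac{\partial \gamma_l}{\partial p_s}$ where $l$ is $i$'s group, and likewise for $j$.

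Next I would differentiate the reduced equation. Writing $\boldsymbol\gamma = (I-\phi B)^{-1}\mathbf 1$ and $M := (I-\phi B)^{-1}$, we get $\frac{\partial \boldsymbol\gamma}{\partial p_s} = \phi\, M\, \frac{\partial B}{\partial p_s}\, M\, \mathbf 1 = \phi\, M\, \frac{\partial B}{\partial p_s}\, \boldsymbol\gamma$. Here $\frac{\partial B}{\partial p_s}$ is the diagonal matrix with entries $s_l n$, so $\frac{\partial B}{\partial p_s}\boldsymbol\gamma$ has $l$-th entry $s_l n\,\gamma_l$. Thus
$$
\frac{\partial \gamma_l}{\partial p_s} \;=\; \phi n \sum_{l'} M_{ll'}\, s_{l'}\,\gamma_{l'}.
$$
The claim $\frac{\partial \gamma_l}{\partial p_s} > \frac{\partial \gamma_{l''}}{\partial p_s}$ when $s_l > s_{l''}$ thus reduces to showing the weighted row sums $\sum_{l'} M_{ll'} s_{l'}\gamma_{l'}$ are strictly larger for larger groups. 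Equivalently, I want to compare $\sum_{l'} M_{ll'} x_{l'}$ across rows $l$, where $x_{l'} := s_{l'}\gamma_{l'} > 0$ is a fixed positive vector; the goal is monotonicity of $(M\mathbf x)_l$ in $s_l$.

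To get this I would expand $M = \sum_{t\ge 0}\phi^t B^t$ and argue term by term: it suffices to show $(B^t \mathbf x)_l$ is increasing in $s_l$ for every $t\ge 1$ (it is constant for $t=0$). For $t=1$, $(B\mathbf x)_l = s_l n\, p_s x_l + \sum_{l'\ne l} s_{l'} n\, p_d x_{l'}$; since $x_l$ is itself increasing in $s_l$ (larger groups have larger Katz--Bonacich centrality, itself provable by the same inductive walk-counting argument used to define $\gamma$) and $p_s > p_d$, the ``own'' term dominates and this is increasing in $s_l$. The inductive step writes $(B^{t}\mathbf x)_l = (B(B^{t-1}\mathbf x))_l$ and reuses the $t=1$ argument with $\mathbf x$ replaced by the positive, $s$-monotone vector $B^{t-1}\mathbf x$. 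The only delicate point — and the step I expect to be the main obstacle — is keeping the monotonicity \emph{strict} and making sure it genuinely uses $p_s > p_d$ and $\phi < \bar\lambda^{-1}$ (so that the geometric series converges and $M$ has strictly positive entries, allowing the strict inequality from a single term to survive summation); one must also handle the bookkeeping of how $s_l$ enters both the matrix $B$ and the vector $\mathbf x$ simultaneously. The cleanest way to organize this is probably to prove the combined statement ``for every $t$, $(B^t\mathbf 1)_l$ is nondecreasing in $s_l$, strictly increasing for $t\ge 1$'' by induction, then feed it into the series for both $\boldsymbol\gamma$ and $\partial\boldsymbol\gamma/\partial p_s$. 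Finally, the random-network corollary follows because Theorem~\ref{thm2} makes $\textbf{c}(A(n),\phi)$ uniformly close to $\textbf{c}(\bar A(n),\phi)$ with high probability, and the strict inequality above is of order $n$ (each side is $\Theta(n)$ in the relevant regime), hence survives an $o(\sqrt n)$-per-coordinate perturbation.
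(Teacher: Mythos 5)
Your argument is correct, but it takes a genuinely different route from the paper's. The paper first expresses the derivative as a weighted walk count (Lemma~\ref{pathformula}): $\frac{\partial}{\partial p_s}c_i(\bar A,\phi)$ is $\frac{1}{p_s}\sum_k \phi^k$ times the sum over length-$k$ walks starting at $i$ of the number of within-group edges on the walk times the walk's probability weight; it then compares the $k$-th terms for $i$ and $j$ by constructing, inductively in $k$, an injection from walks starting at $j$ into walks starting at $i$ that preserves the within-group-edge count and the edge weights while always terminating in a weakly larger group. You instead collapse everything onto the $m\times m$ quotient matrix $B$, differentiate the resolvent to get $\partial\gamma_l/\partial p_s=\phi n\sum_{l'}M_{ll'}s_{l'}\gamma_{l'}$ with $M=(I-\phi B)^{-1}$, and run a monotonicity induction on the iterates $(B^t\mathbf{x})_l$; the cleanest way to close your ``delicate point'' about strictness is the decomposition $(B\mathbf{y})_l=s_l n(p_s-p_d)y_l+np_d\sum_{l'}s_{l'}y_{l'}$, whose second term is independent of $l$, so that $(B\mathbf{y})_l-(B\mathbf{y})_{l''}=n(p_s-p_d)(s_ly_l-s_{l''}y_{l''})>0$ whenever $s_l>s_{l''}$ and $\mathbf{y}$ is positive and weakly monotone in group size. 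Your approach is closer in spirit to the paper's Lemma~\ref{matrixformula} (which the paper deploys only for Proposition~\ref{exam}) and is more mechanical, with strictness falling out of a one-line identity; the paper's injection buys a combinatorial interpretation, matching each walk available to the smaller group's member with a weakly better walk available to the larger group's member. Two minor remarks: your parenthetical that the $t=0$ term is constant is inaccurate --- $x_l=s_l\gamma_l$ is itself strictly increasing in $s_l$ --- but the slip is in the harmless direction; and the monotonicity of $\gamma_l$ in $s_l$ that you invoke follows from the same induction applied to $\mathbf{1}$, so the argument is self-contained.
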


When more links are added within groups, agents in larger groups benefit more than agents in smaller groups. Because $c_i(\bar{A},\phi)$ is increasing in $i$'s group size in this setting, this means that adding connections within groups increases inequality (in absolute terms).

There are two effects contributing to this result. First, increasing $p_s$ increases the degree of agents in larger groups more than the degree of agents in smaller groups. Second, because agents in larger groups are more central already, these agents are better positioned to benefit from extra links in other areas of the network. By contrast, when $p_d$ changes we will see these two effects work in opposite directions.

We next consider a sequence of random networks. Let $\bar{A}(n)$ be the matrix of link probabilities arising from these group sizes, within-group link probability $p_s$ and between-group link probability $p_d$. Define $\bar{A}'(n)$ similarly, with probabilities $p_s'$ and $p_d'$.

\begin{corollary}\label{pscor}
Suppose that $p_s'> p_s$ and $p_d' =p_d$. Fix a sequence of $\phi(n)$ bounded away from $\overline{\lambda}_1(n)$ and $\overline{\lambda}_1'(n)$. Then with probability approaching $1$ as $n \rightarrow \infty$,
$$c_i({A'(n)},\phi(n)) - c_i({A}(n),\phi(n)) > c_j({A}'(n), \phi(n))-c_j({A}(n),\phi(n))$$
whenever $i$'s group is larger than $j$'s group.
\end{corollary}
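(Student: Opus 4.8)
The plan is to reduce the stochastic claim to the deterministic Proposition~\ref{same} and then transfer it to the realized networks using the concentration estimates underlying Theorem~\ref{thm2}. For $p\in[p_s,p_s']$ let $\bar A_p(n)$ be the stochastic block model probability matrix with within-group probability $p$, between-group probability $p_d$, and the given fixed group sizes, so $\bar A_{p_s}(n)=\bar A(n)$ and $\bar A_{p_s'}(n)=\bar A'(n)$. Since $\overline{\lambda}_1(\bar A_p(n))$ is increasing in $p$, the hypothesis that $\phi(n)$ is bounded away from $\overline{\lambda}_1(n)^{-1}$ and $\overline{\lambda}'_1(n)^{-1}$ guarantees $\phi(n)<\overline{\lambda}_1(\bar A_p(n))^{-1}$ for every such $p$, so Proposition~\ref{same} applies to each $\bar A_p(n)$ and gives $\frac{\partial}{\partial p}c_i(\bar A_p(n),\phi(n))-\frac{\partial}{\partial p}c_j(\bar A_p(n),\phi(n))>0$ whenever $i$'s group is bigger than $j$'s. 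Using the explicit elasticity formula preceding this subsection (equivalently, the representative-agent walk count from Appendix~\ref{appendixstatics}), one checks this derivative gap is bounded below by a constant $\eta>0$ independent of $n$ and uniform over $p$ in the compact interval and over the finitely many ordered pairs of group sizes; here one uses that every denominator in that formula stays bounded away from its pole because $\phi(n)\overline{\lambda}_1(\bar A_p(n))$ is bounded away from $1$. Integrating in $p$ yields the deterministic finite-difference gap $\bigl(c_i(\bar A'(n),\phi(n))-c_i(\bar A(n),\phi(n))\bigr)-\bigl(c_j(\bar A'(n),\phi(n))-c_j(\bar A(n),\phi(n))\bigr)\ge\eta(p_s'-p_s)=:\eta_0>0$ for all $n$.

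It then remains to show that with probability tending to one $\max_i|c_i(A(n),\phi(n))-c_i(\bar A(n),\phi(n))|<\eta_0/8$, and the same for $A'(n)$. The $\ell^2$ bound of Theorem~\ref{thm2} is not by itself enough, since it only controls the average coordinate deviation; instead I would write, via the resolvent identity, $c_i(A,\phi)-c_i(\bar A,\phi)=\phi\,e_i^\top(I-\phi A)^{-1}(A-\bar A)\,\textbf{c}(\bar A,\phi)$ and split $e_i^\top(I-\phi A)^{-1}=e_i^\top(I-\phi\bar A)^{-1}+\phi\,e_i^\top(I-\phi A)^{-1}(A-\bar A)(I-\phi\bar A)^{-1}$. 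The leading term $\phi\,[e_i^\top(I-\phi\bar A)^{-1}](A-\bar A)\,\textbf{c}(\bar A,\phi)$ is a linear form in the independent, centered, bounded entries $(A-\bar A)_{kl}$ with deterministic coefficients; since both $(I-\phi\bar A)^{-1}e_i$ and $\textbf{c}(\bar A,\phi)$ are delocalized (the former of $\ell^2$-norm $O(\overline{\lambda}_1^{-1})$, the latter of $\ell^\infty$-norm $O(1)$ and $\ell^2$-norm $O(\sqrt n)$), this form has variance $O(1/n)$ and increments $O(1/n)$, so by Bernstein's inequality it exceeds any fixed threshold with probability $e^{-\Omega(n)}$, and a union bound over $i$ makes it uniformly below $\eta_0/16$ with high probability. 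The correction term is bounded in operator norm using the Chung--Radcliffe estimate $\|A-\bar A\|_2=o(\overline{\lambda}_1)$ (which holds with high probability by the large-enough-eigenvalues hypothesis) together with $\|(I-\phi A)^{-1}\|_2,\|(I-\phi\bar A)^{-1}\|_2=O(1)$ — the bound on the random resolvent follows because $\lambda_1(A)$ is within $o(\overline{\lambda}_1)$ of $\overline{\lambda}_1$ with high probability — so it is $o(1)$ uniformly in $i$. The same argument applies verbatim to $A'(n)$ and $\bar A'(n)$.

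Finally, on the intersection of these high-probability events, $\bigl(c_i(A'(n))-c_i(A(n))\bigr)-\bigl(c_j(A'(n))-c_j(A(n))\bigr)$ differs from the deterministic gap by a sum of four coordinate deviations each below $\eta_0/8$, hence is at least $\eta_0-4\cdot(\eta_0/8)=\eta_0/2>0$ for all the relevant pairs simultaneously, which is the assertion. I expect the main obstacle to be the transfer step: Theorem~\ref{thm2} only delivers an $\ell^2$ bound, so one must genuinely exploit that the perturbation of an \emph{individual} coordinate is, to leading order, a linear functional of the independent link variables evaluated at spread-out vectors — and therefore of order $n^{-1/2}$ rather than the order-one bound a naive operator-norm estimate gives — in order to obtain a uniform-over-agents $o(1)$ deviation that survives the fixed, $n$-independent deterministic gap $\eta_0$. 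One should also verify carefully the uniform lower bound on the derivative gap in the first step, i.e.\ that no denominator in the elasticity formula degenerates along the sequence, which is precisely what the hypothesis "bounded away from $\overline{\lambda}_1(n)$ and $\overline{\lambda}'_1(n)$" secures.
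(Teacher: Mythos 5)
Your overall strategy --- establish the deterministic inequality for $\bar{A}(n)$ versus $\bar{A}'(n)$ via Proposition~\ref{same} and then transfer it to the realized networks by a per-coordinate concentration bound --- is exactly the paper's, and your argument is correct; the difference lies in how the transfer step is executed. The paper's proof of Corollary~\ref{pscor} nominally cites Theorem~\ref{thm2}, but the coordinate-wise bounds $|c_i(A(n),\phi(n))-c_i(\bar{A}(n),\phi(n))|<\epsilon$ it actually uses come from Lemma~\ref{stochlem}, which is proved by combining the recursion $c_i = 1+\phi\sum_{j\in N_i}c_j$ with an $\ell^1$ consequence of the $\ell^2$ bound of Theorem~\ref{thm2} and a concentration bound on the number of each agent's neighbors in each group. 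You correctly diagnose that the raw $\ell^2$ bound is insufficient --- a point the paper's write-up of this corollary glosses over --- and instead build the coordinate-wise bound from scratch via the resolvent identity, Bernstein's inequality for the term linear in $A-\bar{A}$, and the Chung--Radcliffe operator-norm estimate for the quadratic remainder; this is a genuinely different (and arguably sharper, with explicit $e^{-\Omega(n)}$ rates) route to the same uniform $o(1)$ deviation. You are also more careful than the paper on the deterministic side: the paper simply asserts the finite-difference inequality ``by Proposition~\ref{same}'', whereas you make explicit the integration over $p\in[p_s,p_s']$ and the $n$-uniform lower bound $\eta_0$ on the gap. One caveat applies to both proofs: that uniform lower bound (via Lemma~\ref{matrixformula} the derivative gap is of order $\phi(n)n$) requires $\phi(n)\overline{\lambda}_1(n)$ to be bounded away from $0$ as well as from $1$; if $\phi(n)\overline{\lambda}_1(n)\to 0$ the deterministic gap collapses and a fixed threshold $\eta_0/8$ no longer suffices. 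This is implicit in the intended reading of the hypothesis, and you flag it yourself, so it is not a defect of your argument relative to the paper's.
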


Using Theorem~\ref{thm2}, we can reduce from random networks to deterministic weighted networks. Then the result is a straightforward application of Proposition~\ref{same}.

When we vary the probability $p_d$ of links between groups, results are more ambiguous. Let $\frac{\partial}{\partial p_d} \textbf{c}(\bar{A},\phi)$ denote the derivative of Katz-Bonacich centrality as the entries of $\bar{A}$ corresponding to between-group links vary. 

\begin{proposition}\label{dif}
There exist constants $0 < \underline{\phi} < \overline{\phi}<\overline{\lambda}^{-1}$ such that:

(i) for $0 < \phi < \underline{\phi}$,
$$\frac{\partial}{\partial p_d} c_i(\bar{A},\phi) > \frac{\partial}{\partial p_d} c_j(\bar{A},\phi).$$
whenever $i$'s group is smaller than $j$'s group.

(ii) for $\overline{\phi}< \phi < \overline{\lambda}^{-1}$,
$$\frac{\partial}{\partial p_d} c_i(\bar{A},\phi) > \frac{\partial}{\partial p_d} c_j(\bar{A},\phi).$$
whenever $i$'s group is larger than $j$'s group.
\end{proposition}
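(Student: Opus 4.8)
The plan is to exploit the block symmetry of $\bar{A}$ to pass to an $m\times m$ ``quotient'' problem, and then do elementary resolvent perturbation at the two ends of the interval $(0,\overline{\lambda}_1^{-1})$. By symmetry of the stochastic block structure (together with uniqueness of the relevant solutions) $c_k(\bar{A},\phi)$ depends only on the group of $k$; write $b_i$ for the common value on group $i$, let $P=(p_{ij})$ be the $m\times m$ block probability matrix, and $D=\mathrm{diag}(s_1 n,\dots,s_m n)$. Then $\mathbf{b}=(I-\phi PD)^{-1}\mathbf{1}$, and $\overline{\lambda}_1$ is the Perron eigenvalue of the entrywise positive matrix $PD$ (which, being similar to the symmetric $D^{1/2}PD^{1/2}$, is diagonalizable with real eigenvalues). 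Writing $M=I-\phi PD$ and using that in the two-probability case $P=p_s I_m+p_d E$ with $E=J_m-I_m$ (all-ones matrix minus identity), differentiation gives $\frac{\partial \mathbf{b}}{\partial p_d}=\phi\,M^{-1}EDM^{-1}\mathbf{1}$ --- exactly the representative-agent / walk-counting formula of Appendix~\ref{appendixstatics} in matrix form --- and every comparison in the statement is between two coordinates of this vector.

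\emph{Part (i): $\phi$ near $0$.} Expanding $M^{-1}=I+\phi PD+O(\phi^2)$ gives $\partial_{p_d}\mathbf{b}=\phi\,ED\mathbf{1}+O(\phi^2)$, where $(ED\mathbf{1})_i=n\sum_{j\neq i}s_j=n(1-s_i)$. Hence for groups with $s_i<s_j$ we get $\partial_{p_d}b_i-\partial_{p_d}b_j=\phi n(s_j-s_i)+O(\phi^2)>0$ once $\phi$ is below a threshold depending on the fixed network. Taking $\underline{\phi}$ to be the minimum of these finitely many thresholds proves (i).

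\emph{Part (ii): $\phi$ near $\overline{\lambda}_1^{-1}$.} Let $\mathbf{u},\mathbf{w}$ be the right and left Perron eigenvectors of $PD$, normalized by $\mathbf{w}^{\top}\mathbf{u}=1$; both are strictly positive. Then $M^{-1}=\frac{1}{1-\phi\overline{\lambda}_1}\mathbf{u}\mathbf{w}^{\top}+R(\phi)$ with $R(\phi)$ bounded as $\phi\uparrow\overline{\lambda}_1^{-1}$, so
\[
\partial_{p_d}\mathbf{b}=\frac{\phi\,(\mathbf{w}^{\top}\mathbf{1})(\mathbf{w}^{\top}ED\mathbf{u})}{(1-\phi\overline{\lambda}_1)^2}\,\mathbf{u}+O\!\Big(\tfrac{1}{1-\phi\overline{\lambda}_1}\Big).
\]
The scalar prefactor is common to all coordinates, so for $\phi$ close enough to $\overline{\lambda}_1^{-1}$ the sign of $\partial_{p_d}b_i-\partial_{p_d}b_j$ equals that of $(\mathbf{w}^{\top}ED\mathbf{u})(u_i-u_j)$. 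Two facts finish the argument. First, $u_i$ is increasing in group size --- the eigenvector-centrality monotonicity already used for Proposition~\ref{twoeig}, and visible directly from $u_i\propto(\overline{\lambda}_1-s_i n(p_s-p_d))^{-1}$. Second, $\mathbf{w}^{\top}ED\mathbf{u}>0$: since $ED=\tfrac1{p_d}(PD-p_sD)$ we have $\mathbf{w}^{\top}ED\mathbf{u}=\tfrac1{p_d}(\overline{\lambda}_1-p_s\,\mathbf{w}^{\top}D\mathbf{u})$, and passing to $\mathbf{r}=D^{1/2}\mathbf{q}$, where $\mathbf{q}$ is the unit Perron eigenvector of $D^{1/2}PD^{1/2}$ (so $\mathbf{u}=D^{-1/2}\mathbf{q}$, $\mathbf{w}=D^{1/2}\mathbf{q}$), gives $\mathbf{w}^{\top}D\mathbf{u}=\|\mathbf{r}\|_2^2$ and $\overline{\lambda}_1=\mathbf{r}^{\top}P\mathbf{r}=(p_s-p_d)\|\mathbf{r}\|_2^2+p_d(\mathbf{1}^{\top}\mathbf{r})^2$, so $\overline{\lambda}_1-p_s\|\mathbf{r}\|_2^2=p_d\big((\mathbf{1}^{\top}\mathbf{r})^2-\|\mathbf{r}\|_2^2\big)>0$ because $\mathbf{r}$ has strictly positive entries and $m\geq 2$. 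Taking $\overline{\phi}$ to be the maximum of the finitely many resulting pairwise thresholds proves (ii).

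\emph{Main obstacle.} Everything except the sign of the prefactor $\mathbf{w}^{\top}ED\mathbf{u}$ in part (ii) is routine resolvent perturbation; the real content is the symmetrization above, which reduces that sign to the elementary inequality $(\mathbf{1}^{\top}\mathbf{r})^2>\|\mathbf{r}\|_2^2$ for a strictly positive vector. One also has to be careful that it is the \emph{right} Perron eigenvector $\mathbf{u}$ (not $\mathbf{w}$) that orders the groups in the leading term, and to invoke the monotonicity of $\mathbf{u}$ in group size; pairs with $s_i=s_j$ need no treatment since the hypotheses exclude them.
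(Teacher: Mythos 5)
Your proof is correct, and while part (i) is essentially the paper's argument (for small $\phi$ the derivative is governed by the degree term, and $\partial_{p_d}d_i=n(1-s_i)$ is decreasing in group size), part (ii) takes a genuinely different route. The paper works from the walk-counting formula of Lemma~\ref{pathformula} and argues via a Markov chain on groups that long walks ``forget'' their starting group, so the $f$-weighted walk counts become proportional to the unweighted ones, which are larger for larger groups; the combinatorial picture is intuitive but the key convergence step (expected number of switches $\to ck$ with $c$ independent of the initial state, uniformly enough to sum over $k$) is asserted rather than carefully quantified. You instead pass to the $m\times m$ quotient $M=I-\phi PD$ and do a rank-one resolvent expansion at the Perron eigenvalue, which isolates the leading term $\frac{\phi(\mathbf{w}^{\top}\mathbf{1})(\mathbf{w}^{\top}ED\mathbf{u})}{(1-\phi\overline{\lambda}_1)^2}\mathbf{u}$ against an $O((1-\phi\overline{\lambda}_1)^{-1})$ remainder; the comparison then reduces to monotonicity of $\mathbf{u}$ in group size (the same formula $u_i\propto(\overline{\lambda}_1-s_in(p_s-p_d))^{-1}$ the paper uses for Proposition~\ref{twoeig}, valid since $p_s>p_d$) plus the sign of $\mathbf{w}^{\top}ED\mathbf{u}$, which your symmetrization $\mathbf{u}=D^{-1/2}\mathbf{q}$, $\mathbf{w}=D^{1/2}\mathbf{q}$ converts into the elementary inequality $(\mathbf{1}^{\top}\mathbf{r})^2>\|\mathbf{r}\|_2^2$ for a positive vector. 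Your version buys an explicit, fully quantified asymptotic (and in fact a sharper statement: the direction of the divergence is exactly $\mathbf{u}$), and it sidesteps the uniformity issues in the paper's ergodicity step; the cost is that the positivity of $\mathbf{w}^{\top}ED\mathbf{u}$ exploits the special form $P=p_sI+p_dE$, whereas the paper's walk-counting heuristic is phrased in a way that suggests how one would attack general block structures. Both arguments correctly handle only pairs with $s_i\neq s_j$, as the hypotheses require.
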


There are now two opposing effects, so extra links between groups can help smaller groups more or larger groups more. First, increasing $p_d$ now increases the degree of those in smaller groups more than the degree of those in larger groups. But second, agents in larger groups are still better positioned to take advantage of extra links at a distance.

When $\phi$ is small, Katz-Bonacich centrality primarily counts shorter walks. So the first effect wins out, and smaller effects benefit more. When $\phi$ is close enough to $\overline{\lambda}_1^{-1}$, Katz-Bonacich centrality is dominated by walks of very long length. Then the first effect is tiny while the second effect is large. So agents in a central majority group actually benefit more from extra connections between groups than those in a disadvantaged minority.

The first result is a straightforward consequence of the relationship between degree centrality and Katz-Bonacich centrality with $\phi$ small. When $\phi$ is small, Katz-Bonacich centrality is approximately equal to an affine transformation of degree centrality, and the comparative static is easy for degree centrality.

For the second result, the Katz-Bonacich centralities are equal to suitably weighted counts of the number of walks beginning in each group. We can compute the derivatives of Katz-Bonacich centrality by adding a very small number of extra links and counting the walks which pass through these new links (see Appendix~\ref{appendixstatics}). When $\phi$ is large, these counts are dominated by very long walks, and we claim that for very long walks the starting group has little effect on which groups the walk passes through. To show the claim formally, we represent a random walk as a Markov chain with a state corresponding to each agent and transition probabilities proportional to link probabilities, and then study the stationary distribution of this Markov chain. Once the claim is shown, we can compute the comparative statics of interest by counting the total number of walks starting from a given agent, and this number is higher for members of larger groups.

For the following Corollary, we define $\bar{A}(n)$ and $\bar{A'}(n)$ as in Corollary~\ref{pscor}.

\begin{corollary}\label{pdcor}
Suppose that $p_s = p_s'$ and $p_d'> p_d$. Fix a sequence of $\phi(n)$ bounded away from $\overline{\lambda}_1(n)$ and $\overline{\lambda}_1'(n)$. Then there exist constants $\underline{C},\overline{C} \in (0,1)$ such that with probability approaching $1$ as $n \rightarrow \infty$,

(i) if $\phi(n) < \underline{C}  \cdot \overline{\lambda}_1(n)^{-1} $ for all $n$, then $$c_i({A'}(n),\phi(n)) - c_i({A}(n),\phi(n)) > c_j({A}'(n), \phi(n))-c_j({A}(n),\phi(n))$$
whenever $i$'s group is smaller than $j$'s group.

(ii) if $\phi(n) > \overline{C} \cdot \overline{\lambda}_1(n)^{-1}$ for all $n$, 
$$c_i({A'}(n),\phi(n)) - c_i({A}(n),\phi(n)) > c_j({A}'(n), \phi(n))-c_j({A}(n),\phi(n))$$
whenever $i$'s group is larger than $j$'s group.
\end{corollary}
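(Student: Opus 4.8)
The plan is to reduce the random comparison to the deterministic one supplied by Proposition~\ref{dif}, and then to make that deterministic comparison quantitative and uniform in $n$. Throughout, write $\bar A^{(t)}(n)$ for the stochastic-block probability matrix with the given group sizes $s_l n$, within-group probability $p_s$, and between-group probability $t$, so that $\bar A(n) = \bar A^{(p_d)}(n)$ and $\bar A'(n) = \bar A^{(p_d')}(n)$. I treat part (ii); part (i) is identical with Proposition~\ref{dif}(i) in place of Proposition~\ref{dif}(ii) and the roles of larger and smaller groups exchanged.

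\emph{Step 1: from $A(n)$ to $\bar A(n)$, uniformly over agents.} Theorem~\ref{thm2} only bounds $\|\mathbf{c}(A(n),\phi(n)) - \mathbf{c}(\bar A(n),\phi(n))\|_2$, which controls the typical agent but not every agent, so I would revisit its proof in the present fixed-parameter setting. Using $\mathbf{c}(A,\phi) - \mathbf{c}(\bar A,\phi) = (I-\phi A)^{-1}\,\phi(A-\bar A)\,\mathbf{c}(\bar A,\phi)$ together with the \cite*{Chung11} bound $\|A(n)-\bar A(n)\|_2 = O(\sqrt n)$ already invoked for Theorems~\ref{thm1} and~\ref{thm2} (which also gives $\phi(n)\|A(n)\|_2$ bounded away from $1$, hence $\|(I-\phi(n)A(n))^{-1}\|_2 = O(1)$ and well-defined centralities), the facts that $\mathbf{c}(\bar A(n),\phi(n))$ is constant on groups with coordinates of order $1$, that $\max_i \mathrm{deg}_i(A(n)) = O(n)$, and that each coordinate of $(A(n)-\bar A(n))\,\mathbf{c}(\bar A(n),\phi(n))$ is a sum of $n$ independent bounded mean-zero terms, one obtains $\max_i |c_i(A(n),\phi(n)) - c_i(\bar A(n),\phi(n))| = O(\sqrt{\log n/n})$ with probability approaching $1$ (Bernstein together with a union bound over the $n$ agents); the same holds for $A'(n)$ and $\bar A'(n)$. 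It therefore suffices to prove that, for some constant $c_0 > 0$ independent of $n$,
$$\bigl[c_i(\bar A'(n),\phi(n)) - c_i(\bar A(n),\phi(n))\bigr] - \bigl[c_j(\bar A'(n),\phi(n)) - c_j(\bar A(n),\phi(n))\bigr] \ \geq\ c_0$$
whenever $i$'s group is larger than $j$'s.

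\emph{Step 2: the deterministic inequality.} I would write the left-hand side as $\int_{p_d}^{p_d'}\bigl[\tfrac{\partial}{\partial p_d}c_i(\bar A^{(t)}(n),\phi(n)) - \tfrac{\partial}{\partial p_d}c_j(\bar A^{(t)}(n),\phi(n))\bigr]\,dt$ and apply Proposition~\ref{dif}(ii) to each $\bar A^{(t)}(n)$. For that I need $\phi(n)$ to lie above the threshold $\overline{\phi}$ of Proposition~\ref{dif} for every $t\in[p_d,p_d']$ at once. The key simplification is that a fixed-parameter block model is governed, up to lower-order corrections, by its $m\times m$ quotient matrix: $c_i(\bar A^{(t)}(n),\phi)$ restricted to group $l$ equals $\mathbf{e}_l^\top(I_m - \rho\,\hat N(t))^{-1}\mathbf{1}_m$ up to $O(1/n)$, where $\hat N(t)$ is the quotient matrix normalized to spectral radius $1$ (a function of $t$, $p_s$ and the shares only) and $\rho = \phi\,\overline{\lambda}_1^{(t)}(n)$, with no further dependence on $n$; moreover $\overline{\lambda}_1^{(t)}(n)/\overline{\lambda}_1^{(p_d)}(n)$ is a continuous positive function of $t$ on the compact interval $[p_d,p_d']$, independent of $n$. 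Combined with the hypotheses $\limsup_n\phi(n)\overline{\lambda}_1(n)<1$ and $\limsup_n\phi(n)\overline{\lambda}_1'(n)<1$, this lets me choose $\overline C<1$ so that $\phi(n)>\overline C\,\overline{\lambda}_1(n)^{-1}$ forces $\phi(n)\,\overline{\lambda}_1^{(t)}(n)$ into the range where Proposition~\ref{dif}(ii) applies, for all $t$ in the interval, while keeping it bounded away from $1$. The integrand is then positive for each $t$; and since, to leading order, it is a continuous function of the finitely many scale-free parameters $(s_1,\dots,s_m,p_s,t,\rho)$ ranging over a compact set bounded away from the degenerate boundary, it is bounded below there by a positive constant. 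Integrating over $t$ gives the desired $c_0>0$, uniform in $n$.

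\emph{Step 3 and the main obstacle.} On the high-probability event of Step~1, for $n$ large the random analog of the displayed quantity exceeds $c_0 - O(\sqrt{\log n/n}) > 0$ simultaneously for every admissible pair $(i,j)$, since the Step~1 bound is already uniform over agents; this is part (ii), and part (i) follows the same way with a constant $\underline C$ chosen so that $\phi(n)<\underline C\,\overline{\lambda}_1(n)^{-1}$ keeps $\phi(n)\,\overline{\lambda}_1^{(t)}(n)$ below the lower threshold $\underline\phi$ of Proposition~\ref{dif} along the whole path. I expect the main obstacle to be Step~1: Theorem~\ref{thm2} by itself controls only the Euclidean distance between centrality vectors, whereas the corollary compares \emph{every} pair of agents, so one must reopen the proof of Theorem~\ref{thm2} and exploit the special structure of fixed-parameter block models --- that the perturbation acts on the all-ones vector, that the resolvent $(I-\phi A)^{-1}$ has bounded operator norm, and that linear functionals of $A-\bar A$ concentrate tightly enough to survive a union bound over agents --- to upgrade the $\ell_2$ estimate to an $\ell_\infty$ one. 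The uniform-in-$n$ lower bound on the deterministic gap in Step~2 is a secondary point, handled by the quotient-matrix reduction that renders the relevant quantities scale-free on a compact parameter set.
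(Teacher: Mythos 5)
Your proposal is correct and its skeleton matches the paper's: reduce to a deterministic comparison of $\textbf{c}(\bar{A}'(n),\phi(n))$ and $\textbf{c}(\bar{A}(n),\phi(n))$, invoke Proposition~\ref{dif}, and use the scale-invariance of the fixed-parameter block model (everything factors through the $m\times m$ quotient matrix and the single parameter $\phi\overline{\lambda}_1$) to turn the threshold of Proposition~\ref{dif} into constants $\underline{C},\overline{C}$ that work for all $n$ --- the paper does exactly this by fixing an $n_0$, extracting $\underline{\phi}$ there, and transferring via the proportionality of centralities to those of the representative-agent matrix. Where you genuinely diverge is the random-to-deterministic step. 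The paper does not reopen Theorem~\ref{thm2}; it has a dedicated Lemma~\ref{stochlem} that upgrades the $\ell_2$ bound to a per-coordinate (multiplicative) bound for fixed-parameter block models, using the fixed-point equation $c_i = 1+\phi\sum_{j\in N_i}c_j$ together with Chebyshev bounds on each agent's number of neighbors in each group, and the proof of Corollary~\ref{pscor} (which Corollary~\ref{pdcor} defers to) then compares the four centrality terms within $\epsilon$. Your Step~1 instead runs the resolvent identity $\textbf{c}(A,\phi)-\textbf{c}(\bar{A},\phi)=(I-\phi A)^{-1}\phi(A-\bar{A})\textbf{c}(\bar{A},\phi)$ through Bernstein plus a union bound; this works and even yields a rate, though to land the final $\ell_\infty$ bound you need one more step than you wrote (the operator bound $\|(I-\phi A)^{-1}\|_2=O(1)$ alone costs a factor $\sqrt{n}$ when passing to coordinates; you must peel off $(I-\phi A)^{-1}=I+\phi A(I-\phi A)^{-1}$ and use $\phi\max_i\sqrt{\deg_i}=O(n^{-1/2})$ on the second term). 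You are also more careful than the paper on two points it elides: Proposition~\ref{dif} is a statement about derivatives while the corollary concerns finite differences, so one must integrate over $t\in[p_d,p_d']$ with thresholds uniform in $t$ (your compactness argument), and one needs the deterministic gap to be bounded below by a constant independent of $n$ for the $\epsilon$-comparison to close uniformly; both are handled correctly by your quotient-matrix reduction. So: same strategy, a different (and somewhat heavier, but self-contained) concentration argument in place of Lemma~\ref{stochlem}, and a more explicit treatment of the uniformity issues.
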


The preceding results rely on the assumption that there are only two link probabilities $p_s$ and $p_d$, depending on whether a pair of agents are in the same group or in separate groups. With more link probabilities, a wide range of comparative statics are possible. We now give one interesting example.

Let $\frac{\partial }{\partial p_{ij}} \textbf{c}(\bar{A},\phi)$ be the derivative of Katz-Bonacich centrality as the entries of $\bar{A}$ corresponding to links between group $i$ and group $j$ vary.

\begin{proposition}\label{exam}There exist a constant  $\underline{\phi} < \overline{\lambda}_1$, link-formation probabilities $\bar{A}$ and groups $i,$ $j$ and $k$ such that
$$\frac{\partial}{\partial p_{ij}} c_k(\bar{A},\phi) > \frac{\partial}{\partial p_{ij}} c_i(\bar{A},\phi)$$
for all $\underline{\phi} < \phi < \overline{\lambda}_1$.
\end{proposition}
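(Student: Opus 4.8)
The plan is to reduce to the deterministic weighted network of representative agents and then analyze the regime where $\phi$ is near the upper end $\overline{\lambda}_1^{-1}$ of its admissible range, in which indirect effects dominate. First I would differentiate the closed form $\mathbf{c}(\bar{A},\phi) = (I-\phi\bar{A})^{-1}\mathbf{1}$ with respect to the block of entries of $\bar{A}$ governing links between groups $i$ and $j$. Writing $E$ for the symmetric $0$--$1$ matrix supported exactly on that block, elementary matrix calculus gives
$$\frac{\partial}{\partial p_{ij}}\mathbf{c}(\bar{A},\phi) = \phi\,(I-\phi\bar{A})^{-1} E\, (I-\phi\bar{A})^{-1}\mathbf{1} = \phi\,(I-\phi\bar{A})^{-1} E\, \mathbf{c}(\bar{A},\phi),$$
which agrees with the walk-counting formulas of Appendix~\ref{appendixstatics}. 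Since all agents in a group share the same centrality, $E\,\mathbf{c}$ is supported on groups $i$ and $j$, with coordinate value proportional (via a positive, group-size-dependent constant) to the centrality of the opposite group; reading off coordinate $k$ then expresses $\partial_{p_{ij}} c_k$ purely in terms of the resolvent entries $[(I-\phi\bar{A})^{-1}]_{k,\cdot}$ and the representative centralities of $i$ and $j$.

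Next I would expand the resolvent in the eigenbasis of $\bar{A}$, namely $(I-\phi\bar{A})^{-1} = \sum_{\ell} (1-\phi\overline{\lambda}_\ell)^{-1}\mathbf{u}_\ell\mathbf{u}_\ell^\top$. Because $\bar{A}$ is a fixed positive matrix, Perron--Frobenius gives $|\overline{\lambda}_2| < \overline{\lambda}_1$, so every term with $\ell\ge 2$ stays bounded as $\phi\uparrow\overline{\lambda}_1^{-1}$ while the $\ell=1$ term blows up like $(1-\phi\overline{\lambda}_1)^{-1}$. Substituting this into the displayed formula, and also into $\mathbf{c}$ itself, the leading behaviour as $\phi\uparrow\overline{\lambda}_1^{-1}$ is
$$\frac{\partial}{\partial p_{ij}} c_k(\bar{A},\phi) = \frac{\kappa(\phi)}{(1-\phi\overline{\lambda}_1)^2}\Big( v_k(\bar{A}) + O(1-\phi\overline{\lambda}_1)\Big),$$
where $\kappa(\phi) > 0$ does not depend on $k$ (it involves only $\phi$, the sizes of groups $i$ and $j$, the centralities $v_i,v_j$, and $\mathbf{v}^\top\mathbf{1}$), and the error is bounded uniformly in $\phi$ thanks to the spectral gap. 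Hence
$$\frac{\partial}{\partial p_{ij}} c_k(\bar{A},\phi) - \frac{\partial}{\partial p_{ij}} c_i(\bar{A},\phi) = \frac{\kappa(\phi)}{(1-\phi\overline{\lambda}_1)^2}\Big( v_k(\bar{A}) - v_i(\bar{A}) + O(1-\phi\overline{\lambda}_1)\Big),$$
so for any $\bar{A}$ with $v_k(\bar{A}) > v_i(\bar{A})$ there is a threshold $\underline{\phi} < \overline{\lambda}_1^{-1}$ above which the bracket is positive, yielding the claimed inequality on the whole interval $(\underline{\phi},\overline{\lambda}_1^{-1})$.

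It then remains only to exhibit link-formation probabilities and groups $i,j,k$ with $v_k(\bar{A}) > v_i(\bar{A})$, which is straightforward: take a three-group stochastic block model in which group $k$ is simultaneously large and densely connected internally, while groups $i$ and $j$ are small and only weakly linked to one another, so that the representative agent of group $k$ has strictly larger eigenvector centrality than that of group $i$. One can check $v_k > v_i$ directly from the eigenvector equations for the $3\times 3$ representative matrix. The economic reading is exactly the heuristic in the text: when $\phi$ is large, $\partial_{p_{ij}}\mathbf{c}$ counts very long walks through the new $i$--$j$ edges, and such walks terminate disproportionately at the most central group, so the benefit is ordered by eigenvector centrality rather than by proximity to the new links.

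The main obstacle is the uniformity of the error estimate: the calculation makes the leading-order claim transparent only in the $\phi\uparrow\overline{\lambda}_1^{-1}$ limit, whereas the proposition asks for the inequality on an entire interval, so one must show the subleading resolvent terms are dominated by the $\ell=1$ term uniformly on $(\underline{\phi},\overline{\lambda}_1^{-1})$. This is precisely where the strict separation $|\overline{\lambda}_2| < \overline{\lambda}_1$ (automatic here, since $\bar{A}$ is a fixed positive matrix) enters, bounding $\sum_{\ell\ge 2}(1-\phi|\overline{\lambda}_\ell|)^{-1} \le (1-\phi|\overline{\lambda}_2|)^{-1} = O(1)$ while the leading term grows without bound. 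A secondary, more bookkeeping-heavy point is tracking the positive group-size constants so that $\kappa(\phi)$ is genuinely independent of $k$; this holds because $E\mathbf{c}$ is constant on each of groups $i$ and $j$.
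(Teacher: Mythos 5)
Your argument is correct, but it takes a genuinely different route from the paper. The paper proves the proposition by brute-force construction: it takes three equal groups with $p_{11}=1$, $p_{12}=\delta$, and all other probabilities zero, computes $(I-\phi P)^{-1}$ for the $3\times 3$ representative matrix via Lemma~\ref{matrixformula} to first order in $\delta$, and reads off that $\frac{\partial}{\partial p_{23}}c_1 \approx \frac{\phi^2K^2\delta}{1-\phi K}$ diverges as $\phi \uparrow 1/K$ while $\frac{\partial}{\partial p_{23}}c_{n/3+1}\approx \phi K$ stays bounded. You instead prove a general structural fact: expanding the resolvent in the eigenbasis, the derivative $\frac{\partial}{\partial p_{ij}}c_k$ behaves like $\kappa(\phi)(1-\phi\overline{\lambda}_1)^{-2}\,v_k(\bar{A})$ with $\kappa$ independent of $k$, so for $\phi$ near $\overline{\lambda}_1^{-1}$ the comparative static is ordered by eigenvector centrality, and any positive $\bar{A}$ with $v_k>v_i$ finishes the proof. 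This is cleaner and more informative (it explains \emph{why} central third parties capture the gains from new links, and matches the paper's verbal intuition), and the uniformity of the error on $(\underline{\phi},\overline{\lambda}_1^{-1})$ is correctly handled via $|\overline{\lambda}_2|<\overline{\lambda}_1$. One caveat worth making explicit: your leading term requires $\kappa(\phi)>0$, i.e. $\mathbf{u}_1^\top E\,\mathbf{u}_1 = 2(\sum_{a\in i}v_a)(\sum_{b\in j}v_b)>0$, which holds for your example because all link probabilities are positive so the Perron vector is strictly positive --- but note it \emph{fails} for the paper's own example, where group $3$ is isolated and $v_j\approx 0$, so there the divergence is only of order $(1-\phi\overline{\lambda}_1)^{-1}$ and comes from the cross term. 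Your argument and the paper's therefore rest on genuinely different examples; yours is valid as stated provided you pick strictly positive link probabilities (consistent with the paper's standing assumption) and carry out the routine check that $v_k>v_i$, e.g. by making group $k$ the largest group in a two-parameter block model as in Proposition~\ref{twoeig}. (Both you and the paper implicitly read the proposition's bound $\phi<\overline{\lambda}_1$ as the typo it is for $\phi<\overline{\lambda}_1^{-1}$.)
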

Adding links between groups $i$ and $j$ increases the Katz-Bonacich centrality of agents in group $k$ more than those in group $i$. This is an indication of the power of central network position: group $k$ is so well connected that its members are better positioned to take advantage of new links than the agents in group $i$ actually forming those links.

To prove the proposition, we give an example with three groups such that connections within group $1$ are very dense, connections between groups $1$ and $2$ are somewhat sparse, and all other connections are very sparse (see Figure~\ref{fig:threegroup}).

\begin{figure}
\center{\includegraphics[scale=.8,trim=15cm 17cm 15cm 5cm]{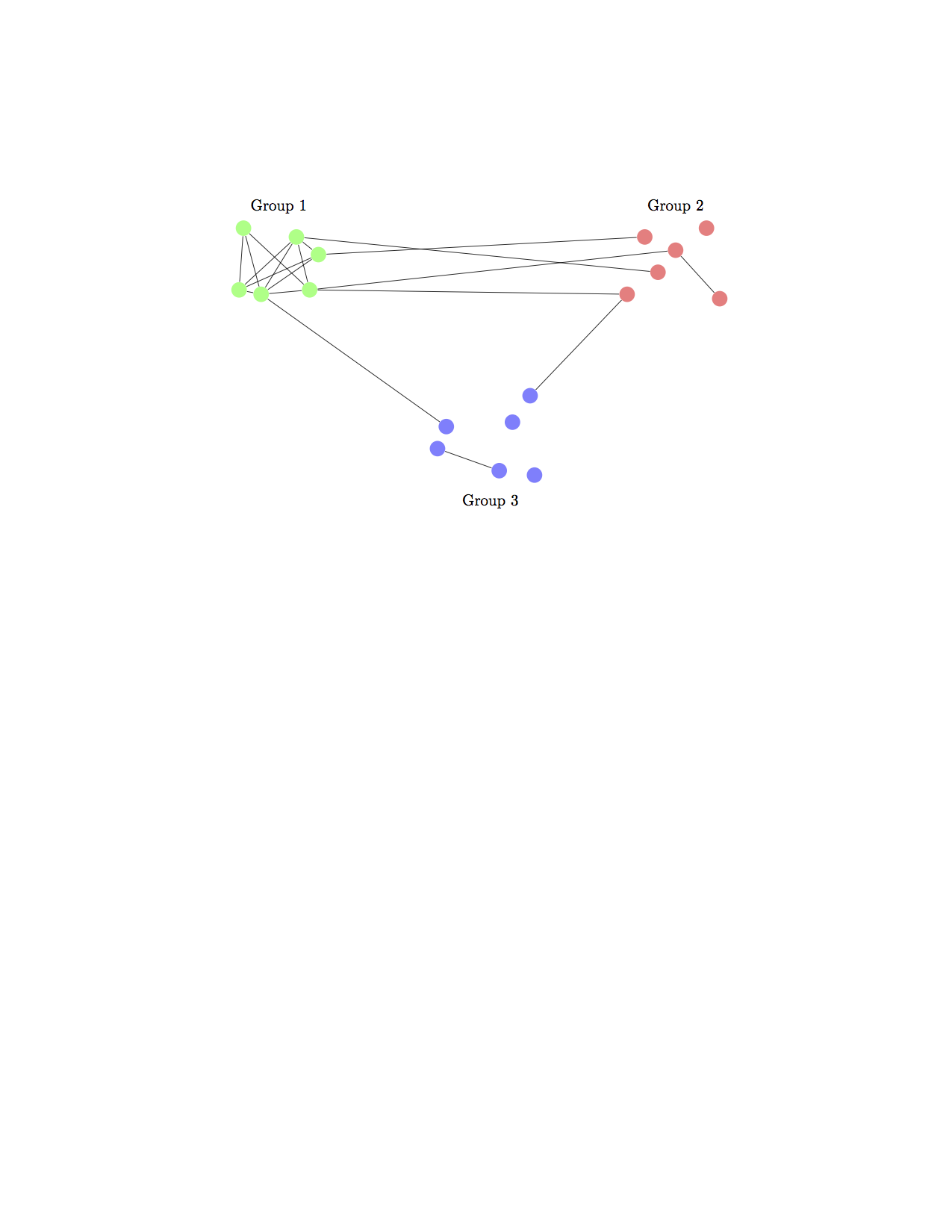}}

\caption{Illustration of a random network with link probabilities as in the proof of Proposition~\ref{exam}.}\label{fig:threegroup}
\end{figure}

Then if extra links are added between groups $2$ and $3$, group $2$ adds more short connections than group $1$. But group $1$ adds more longer walks. This is because there are many long walks that begin in group $1$, remain there for many links, and then finish by crossing to group $2$ and then group $3$. But most long walks beginning in group $2$ and spending most of their time in group $1$ must pass between groups $1$ and $2$ twice, which is less likely. So when $\phi$ is large, the Katz-Bonacich centrality of a member of group $1$ increases more than the Katz-Bonacich centrality of members of group $2$.

The example extends to actual random networks. We again consider $m$ groups with relative group sizes fixed, but now allow the entries $p_{ij}$ of $\bar{A}$ to be any constant probabilities. The entries of $\bar{A}'$ are equal to the entries of $\bar{A}$, except for a single pair of groups $i$ and $j$.

\begin{corollary}\label{excor}
There exists $p_{ij}' > p_{ij}$, a sequence of constants $\underline{\phi}(n)$ bounded away from $\overline{\lambda}_1(n)$ and $\overline{\lambda}_1'(n)$ and groups $i$, $j$ and $k$ such that with probability approaching $1$ as $n \rightarrow \infty$,
$$c_k({A'}(n),\phi(n)) - c_k({A}(n),\phi(n)) > c_i({A}'(n), \phi(n))-c_i({A}(n),\phi(n))$$
for any $\underline{\phi}(n) < \phi(n) < \overline{\lambda}_1(n)$.
\end{corollary}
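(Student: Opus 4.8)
The plan is to lift the deterministic derivative inequality of Proposition~\ref{exam} to the realized random networks in two stages: first convert the derivative statement into a discrete comparison of the \emph{expected} centralities with a gap bounded away from zero uniformly in $n$, and then transport that comparison to the realized $A(n)$ and $A'(n)$ using Theorem~\ref{thm2}. Throughout, $\bar{A}(n)$ denotes the scaled three-group construction from the proof of Proposition~\ref{exam}, with group sizes $s_\ell n$ and the fixed probabilities, and $\bar{A}'(n)$ raises only the block of entries between groups $i$ and $j$ from $p_{ij}$ to some $p_{ij}'>p_{ij}$.

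For the deterministic discrete difference, write $\bar{A}(p)$ for the interpolation of the $(i,j)$-block from $p_{ij}$ to $p_{ij}'$. The fundamental theorem of calculus gives
$$\big[c_k(\bar{A}')-c_k(\bar{A})\big]-\big[c_i(\bar{A}')-c_i(\bar{A})\big]=\int_{p_{ij}}^{p_{ij}'}\left(\frac{\partial c_k(\bar{A}(p),\phi)}{\partial p}-\frac{\partial c_i(\bar{A}(p),\phi)}{\partial p}\right)dp.$$
Proposition~\ref{exam} makes the integrand strictly positive at $p=p_{ij}$ for every admissible $\phi$; since it depends continuously on $p$ and the reduced representative-agent system is identical across $n$ up to the scaling $\phi\mapsto\phi n$, this positivity is uniform in $n$ and persists on all of $[p_{ij},p_{ij}']$ once $p_{ij}'$ is chosen close enough to $p_{ij}$. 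Hence there is a constant $\gamma>0$, independent of $n$ and of $\phi$ in the admissible range, with
$$\big[c_k(\bar{A}'(n),\phi)-c_k(\bar{A}(n),\phi)\big]-\big[c_i(\bar{A}'(n),\phi)-c_i(\bar{A}(n),\phi)\big]>\gamma.$$

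To transport this to random networks, fix $\phi(n)$ in the admissible range and also satisfying the hypotheses $\limsup_n\phi(n)\overline{\lambda}_1(n)<1$ and $\limsup_n\phi(n)\overline{\lambda}_1'(n)<1$ of Theorem~\ref{thm2}, and apply that theorem to both $A(n)\sim\bar{A}(n)$ and $A'(n)\sim\bar{A}'(n)$. For any $\epsilon>0$, with probability at least $1-2\epsilon$ we have $\|\textbf{c}(A(n),\phi(n))-\textbf{c}(\bar{A}(n),\phi(n))\|_2<\epsilon\sqrt{n}$ together with the analogous primed bound. The difficulty is that these are only aggregate $\ell_2$ bounds: per coordinate they permit deviations of order $\sqrt{n}$, far larger than the $\Theta(1)$ centralities themselves. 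The key is to read them as statements about \emph{typical} agents. Since the squared norm is below $\epsilon^2 n$, at most $\epsilon^2 n/t^2$ coordinates can deviate by more than $t$ in either comparison, so the number of ``bad'' agents (deviating by more than $t$ in the $A$- or the $A'$-comparison) is at most $2\epsilon^2 n/t^2$. Taking $t=\gamma/8$ and then $\epsilon$ small enough that $2\epsilon^2/t^2<\min(s_i,s_k)$, each of groups $i$ and $k$ (of size $\Theta(n)$ by the fixed-relative-size assumption) contains a ``good'' agent; call them $a_i$ and $a_k$.

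Finally I would combine the pieces. Because agents in a common group share the same expected centrality, a good $a_k$ satisfies $c_{a_k}(A')-c_{a_k}(A)\geq[c_k(\bar{A}')-c_k(\bar{A})]-2t$ and a good $a_i$ satisfies $c_{a_i}(A')-c_{a_i}(A)\leq[c_i(\bar{A}')-c_i(\bar{A})]+2t$; subtracting and invoking the deterministic gap yields
$$\big[c_{a_k}(A')-c_{a_k}(A)\big]-\big[c_{a_i}(A')-c_{a_i}(A)\big]\geq\gamma-4t=\gamma/2>0,$$
which is the asserted inequality, with $a_k,a_i$ the named agents of groups $k,i$. I expect the main obstacle to be the transport step: bridging the coordinate-wise uselessness of the $O(\sqrt{n})$ bound in Theorem~\ref{thm2} and the per-agent claim of the corollary. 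The pigeonhole argument works only because each group contains $\Theta(n)$ exchangeable agents and the deterministic advantage $\gamma$ is a positive constant, so one must fix $\gamma$ first and choose $t$ then $\epsilon$ in that order; additional care is needed to keep $\gamma$ uniform over both $n$ and the range of $\phi(n)$, which rests on the self-similarity of the representative-agent reduction across population sizes.
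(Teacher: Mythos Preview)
Your deterministic lift from Proposition~\ref{exam} to a discrete comparison of expected centralities is essentially the paper's argument, just with the fundamental-theorem-of-calculus step made explicit; the paper absorbs this into ``by Proposition~\ref{exam}'' together with the scale-invariance observation from Corollary~\ref{pdcor}.

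The substantive divergence is in the transport step, and here your argument is weaker than the paper's. You correctly note that Theorem~\ref{thm2} alone gives only an $\ell_2$ bound of order $\sqrt n$, which is coordinate-wise useless, and you work around this with a pigeonhole: since only $O(\epsilon^2 n)$ coordinates can be bad, each $\Theta(n)$-sized group contains a good agent $a_k$ or $a_i$, and you establish the inequality for that specific pair. But the paper does not do this. Instead, it invokes ``the same arguments as the proof of Corollary~\ref{pscor}'', and that proof rests on Lemma~\ref{stochlem}, which gives a \emph{uniform multiplicative} per-coordinate bound $c_\ell(A(n),\phi(n))/c_\ell(\bar A(n),\phi(n))\in[1-\epsilon,1+\epsilon]$ for \emph{every} agent $\ell$, with high probability. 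This is strictly stronger than what Theorem~\ref{thm2} yields and is the tool you are missing.

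The practical consequence is that your proof only delivers the inequality for some pair of representative agents $a_k\in$ group $k$, $a_i\in$ group $i$, whereas the paper's route establishes it for \emph{every} agent in group $k$ against \emph{every} agent in group $i$ (this is the reading consistent with the parallel Corollaries~\ref{pscor} and~\ref{pdcor}, which are stated ``whenever $i$'s group is larger than $j$'s group''). If the intended statement is the universal one, your pigeonhole argument has a gap; the fix is to replace the appeal to Theorem~\ref{thm2} by Lemma~\ref{stochlem}, which is proved precisely for stochastic block models by combining Theorem~\ref{thm2} with a Chebyshev bound on realized group-to-group degrees and the recursive formula for Katz--Bonacich centrality.
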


\section{Other Network Formation Models}\label{gen}

The basic technique of the previous section was to compute centralities on random networks by applying the main theorems and then carrying out a deterministic computation. This method is more general, and in this section we consider applications to more general random network models beyond stochastic block models.

The first subsection discusses spatial models of network formation and gives a numerical example where certain agents are well-positioned when links at a substantial distance are very unlikely or very likely, but not for intermediate probabilities. The second subsection gives a model of random networks in which link formation probabilities depend on several characteristics. Under independence assumptions, the centralities can be computed ``characteristic by characteristic".

\subsection{Spatial Models}

A common alternative to network formation models based upon group structure relies instead on spatial structure (for a few examples, see \cite*{leung2015two}, \cite*{chaney2014network} and \cite*{breza2017using}). Agents are distributed in a continuous space with a distance metric, and closer agents are more likely to be connected. This space could represent physical locations in a geographic space or more generally some space of characteristics.\footnote{Deriving link probabilities from distance imposes a restriction due to the triangle inequality: if $i$ and $j$ are likely to be linked and $j$ and $k$ are likely to be linked, then the probability of $i$ and $k$ linking is not too low. This need not hold in stochastic block models.} Our tools can help determine how network centralities in these models depend on locations and underlying parameters.

In a latent space model based on \cite*{Hoff02}, the link between each pair of agents $i$ and $j$ forms with probability $\bar{A}_{ij}$ proportional to
$$\exp( \beta \cdot x_{i,j} + \gamma d(i,j)),$$
where $x_{i,j}$ are covariates of the pair, $\beta$ is a constant vector, $\gamma$ is a constant coefficient and $d(i,j)$ is the distance between $i$ and $j$. When all links form independently, our theorems apply to sequences of such networks. Our theoretical results thus complement \cite*{breza2017using}, who give evidence based on simulations and empirical data that centralities (as well as other network statistics) can be approximated well based on only the parameters of latent space models. Our Theorems~\ref{thm1} and~\ref{thm2} imply that these centralities can indeed be identified asymptotically given the underlying parameters when agents are spatially distributed such that the conditions of the theorems hold.\footnote{For example, if distances $d(i,j)$ are uniformly bounded, there is a uniform positive lower bound on $\bar{A}(n)_{ij}$ so both conditions hold.}

We give a numerical example in a similar model showing that certain agents are central when links decay at distance very quickly or very slowly, but not when the decay rate is intermediate. A functional form commonly used in empirical work sets the probability $\bar{A}_{ij}$ of a link between agents $i$ and $j$ to $d(i,j)^{-\rho}$. The parameter $\rho$ determines how quickly connections decay at a distance.

To define $\bar{A}$, consider agents arranged on an $(k+1) \times (k+1)$ grid in the plane, so that there is one agent at each point in $$\{(x,y): x,y \in \mathbb{Z}, 0 \leq x, y \leq k\}.$$ Each link between agents at distinct coordinates $(x,y)$ and $(x',y')$ has weight $d((x,y),(x',y'))^{-\rho}$ where $d$ is Euclidean distance, and each self-link has weight $1$. Figure~\ref{fig:Grid} illustrates an example with $k=20$ and $\rho=\frac12$.

\begin{figure}
\center{\includegraphics[scale=.6]{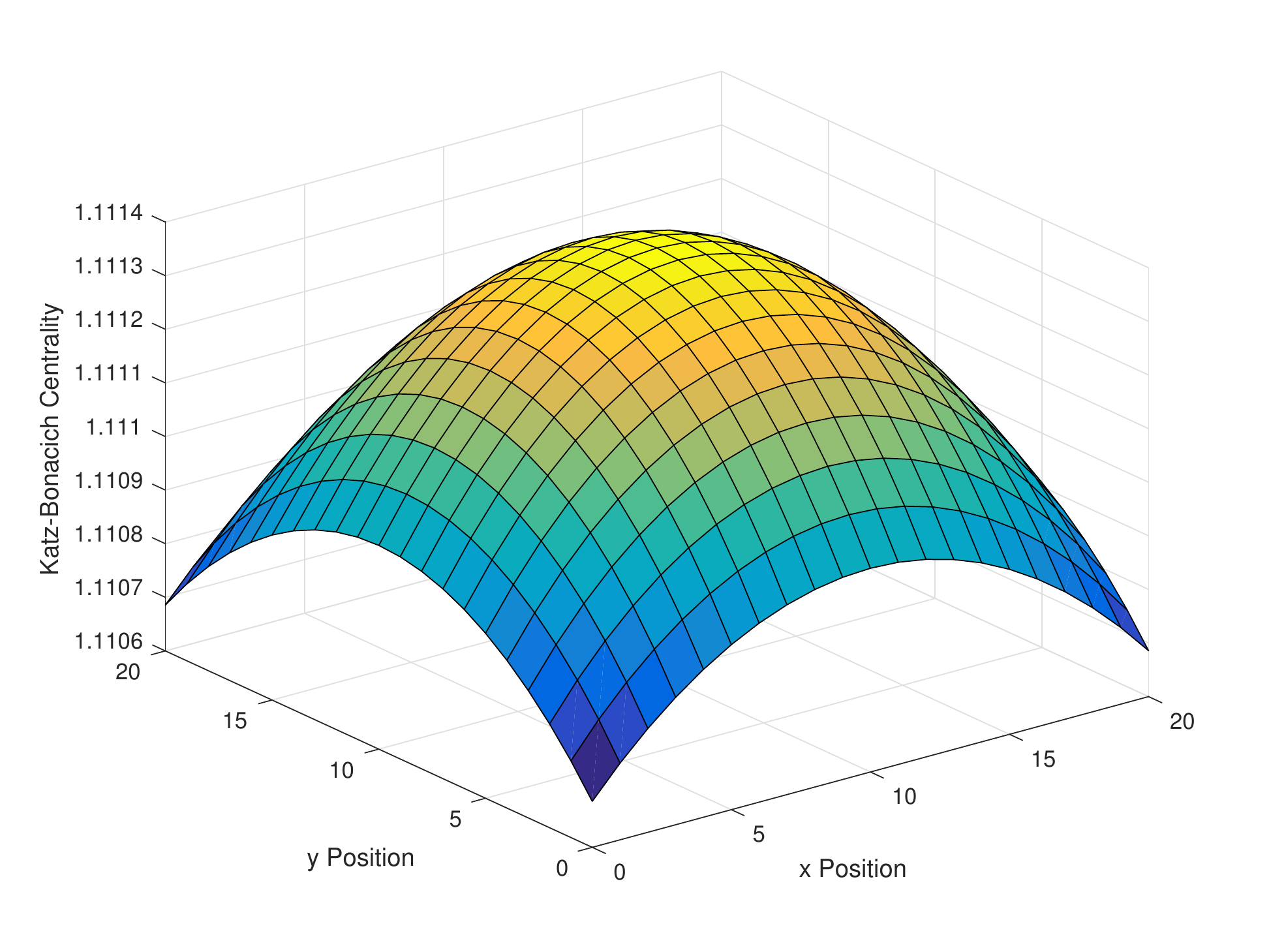}}

\caption{The Katz-Bonacich centralities of $\bar{A}$ for a grid with $k=20$, $\rho=\frac12$ and ${\phi}  \cdot \overline{\lambda}_1= \frac{1}{2 }$.}\label{fig:Grid}
\end{figure}

We consider eigenvector centralities in this network. The following claim is verified in the appendix:
\begin{claim}\label{gridclaim}
For $\rho \leq 1$, the sequence of random networks $\bar{A}(n)$ has large enough eigenvalues.
\end{claim}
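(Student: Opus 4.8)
Write $n=(k+1)^2$, so $\log n=\Theta(\log k)$. The ``large enough eigenvalues'' condition depends only on the deterministic weight matrix $\bar{A}(n)$, so the claim reduces to two elementary estimates on its row sums: that $\Delta=O(k^{2-\rho})$ and $\overline{\lambda}_1(n)=\Omega(k^{2-\rho})$. Granting these,
$$\frac{\overline{\lambda}_1(n)}{\sqrt{\Delta\log n}}=\Omega\!\left(\frac{k^{2-\rho}}{\sqrt{k^{2-\rho}\,\log k}}\right)=\Omega\!\left(\frac{k^{(2-\rho)/2}}{\sqrt{\log k}}\right)\longrightarrow\infty,$$
because $\rho\le 1$ forces $2-\rho\ge 1>0$.

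For the upper bound on $\Delta$, I would fix a vertex $v$ and group the other vertices $u$ according to $j=\lfloor d(v,u)\rfloor\ge 1$. By the standard (unit-square packing) count of lattice points in a disk, $\#\{u:\,j\le d(v,u)<j+1\}=O(j)$, uniformly over $v$ and $k$; each such $u$ contributes at most $j^{-\rho}$, and every $u$ satisfies $d(v,u)\le\sqrt{2}\,k$. Hence the row sum at $v$ is at most $1+\sum_{j=1}^{O(k)}O(j)\cdot j^{-\rho}=O\!\big(\sum_{j=1}^{O(k)}j^{\,1-\rho}\big)=O(k^{2-\rho})$, where $1-\rho\ge 0$ makes the partial sums of $j^{1-\rho}$ grow like $(\text{upper limit})^{2-\rho}$. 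Taking the maximum over $v$ gives $\Delta=O(k^{2-\rho})$.

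For the lower bound on $\overline{\lambda}_1(n)$, I would use that $\bar{A}(n)$ is symmetric and entrywise positive, so by Perron-Frobenius $\overline{\lambda}_1(n)=\|\bar{A}(n)\|_2$ equals its spectral radius, which dominates the Rayleigh quotient at $\mathbf{1}$: $\overline{\lambda}_1(n)\ge \mathbf{1}^{\top}\bar{A}(n)\,\mathbf{1}/\|\mathbf{1}\|_2^2=\tfrac1n\sum_{i,j}\bar{A}(n)_{ij}$, the average row sum. (Equation~(\ref{leveq}) gives an alternative route.) To bound the average below, restrict to the central square $S=\{(x,y):\,k/4\le x,y\le 3k/4\}$, which has $|S|=\Omega(n)$ vertices; for each $v\in S$ the Euclidean disk of radius $k/4$ about $v$ lies inside the grid, so by the same lattice count its row sum is at least $\sum_{j=1}^{\lfloor k/4\rfloor-1}\Omega(j)\cdot(j+1)^{-\rho}=\Omega(k^{2-\rho})$, again using $1-\rho\ge 0$. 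Thus $\tfrac1n\sum_{i,j}\bar{A}(n)_{ij}\ge\tfrac1n\,|S|\,\Omega(k^{2-\rho})=\Omega(k^{2-\rho})$, so $\overline{\lambda}_1(n)=\Omega(k^{2-\rho})$, completing the proof.

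The only mildly delicate point — the ``hard part,'' though it is entirely standard — is getting the two-sided lattice count $\#\{u:\,j\le d(v,u)<j+1\}=\Theta(j)$ uniformly in $v$ and $k$; everything else is bookkeeping with geometric sums. The same argument in fact yields the conclusion for every $\rho<2$, and the hypothesis $\rho\le 1$ is used only to keep $1-\rho$ nonnegative so that $\sum_{j\le m}j^{1-\rho}=\Theta(m^{2-\rho})$.
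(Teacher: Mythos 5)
Your proof is correct, but it takes a genuinely different route from the paper's. The paper never estimates $\Delta$ or $\overline{\lambda}_1$ individually: it invokes the cited inequality $\overline{\lambda}_1(n)^2 \geq (\min_{i,j}\bar{A}(n)_{ij})\sum_{i}\sum_j \bar{A}(n)_{ij}$ from equation~(\ref{leveq}), observes via a quadrant-splitting argument that $\Delta \leq 4\min_i\sum_j\bar{A}(n)_{ij}$ (so the double sum is at least $n\Delta/4$), and uses $\min_{i,j}\bar{A}(n)_{ij}\geq(\sqrt{2}k)^{-\rho}$ to conclude $\overline{\lambda}_1^2/(\Delta\log n)\gtrsim k^{-\rho}n/\log n\to\infty$. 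You instead compute the actual growth rates $\Delta=\Theta(k^{2-\rho})$ and $\overline{\lambda}_1=\Omega(k^{2-\rho})$ via annulus-by-annulus lattice-point counts and the Rayleigh quotient at $\mathbf{1}$. Your version yields more information (explicit rates, hence a rate in the convergence discussion of Section~\ref{thms}) at the cost of the lattice-counting lemma; the paper's version is shorter because it only needs the one-sided comparison between $\Delta$ and the minimum row sum, which the grid's geometry hands over for free. Both arguments in fact cover all $\rho<2$, as you note. One small caution: the per-annulus \emph{lower} bound $\#\{u: j\le d(v,u)<j+1\}=\Omega(j)$ is true but is the one step that genuinely needs an argument (the crude Gauss-circle error term $O(j)$ per annulus does not immediately give it; you need, e.g., to exhibit a lattice point in the annulus above each abscissa $a\le j/2$). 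You can sidestep it entirely for the lower bound on $\overline{\lambda}_1$ by noting that every one of the $\Omega(k^2)$ vertices within distance $k/4$ of a central vertex contributes at least $(k/4)^{-\rho}$ to its row sum, which already gives $\Omega(k^{2-\rho})$.
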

\noindent Thus, Theorem~\ref{thm2} implies that the centralities on the corresponding random network are close for $k$ large.

As $\rho$ varies, the rankings between Katz-Bonacich centralities of various agents vary as well. To give an example, we take $k=20$ and set ${\phi} = \frac{1}{10}\cdot \overline{\lambda}_1^{-1}$. When $\rho=\frac{1}{2}$, the agent in position $(0,10)$ is more central than the agent in position $(3,3)$. But for $\rho$ sufficiently close to $1$, the agent in position $(3,3)$ is more central than $(0,10)$. The first agent is on the boundary in one direction, while the second is near the boundary in both directions and further from the central agent at $(10,10)$. When $\rho$ is large, potential neighbors nearby in the grid are most important and $(3,3)$ has many of these. As $\rho$ grows small, having many connections at intermediate distances becomes more valuable. Note that in the limit $\rho \rightarrow 0$ all agents become equally central, so the difference (or ratio) between the centralities is non-monotonic in $\rho$.

When $\rho$ is very small or large, the agent $(3,3)$ who is on the interior of a diagonal of the grid is relatively central. But when $\rho$ is intermediate, that agent suffers due to a small number of intermediate-length connections and the agent $(0,10)$ on the center of an edge of the grid is more central.
%
%
%
%
%
\subsection{Multicharacteristic Networks}

We can build a larger class of network formation models, which we call multicharacteristic random networks, by combining several characteristics. For example, we could model agents of different races or genders living in different neighborhoods by combining a stochastic block model with a model based upon geography. When characteristics are independent, centralities depend on the two characteristics in a simple, separable manner.

Let $\bar{A}^{(1)}$ and $\bar{A}^{(2)}$ be two matrices of link probabilities for networks of sizes $n_1$ and $n_2$, respectively. We will let $\otimes$ denote the Kronecker product. We define multicharacteristic random networks, which combine the two networks by multiplying link probabilities.

\begin{defin}
The \textbf{multicharacteristic random network} $\bar{A} = \bar{A}^{(1)}\otimes \bar{A}^{(2)}$ with layers $\bar{A}^{(1)}$ and $\bar{A}^{(2)}$ has $n_1n_2$ agents indexed by $(i_1,i_2)$ and link probabilities between $(i_1,i_2)$ and $(j_1,j_2)$ given by the product $\bar{A}^{(1)}_{i_1j_1}\bar{A}^{(2)}_{i_2j_2}$ of the corresponding link probabilities in the two layers.
\end{defin}

If $\bar{A}^{(1)}$ and $\bar{A}^{(2)}$ correspond to two characteristics in the network, then $\bar{A}$ is the network formed by assuming the two characteristics are independently distributed and links form when they would form in both layers independently. These assumptions are strong, but let us obtain clean expressions for centralities in the multicharacteristic network.

\begin{lemma}\label{layers}
The eigenvector centrality of agent $(i_1,i_2)$ in the multicharacteristic random network $\bar{A}$ is given by
$$v_{(i_1,i_2)}(\bar{A})= v_{i_1}(\bar{A}^{(1)})v_{i_2}(\bar{A}^{(2)}).$$
\end{lemma}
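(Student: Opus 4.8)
The plan is to exploit the standard spectral theory of Kronecker products. First I would recall the mixed-product identity $(\bar{A}^{(1)}\otimes \bar{A}^{(2)})(\mathbf{u}\otimes\mathbf{w}) = (\bar{A}^{(1)}\mathbf{u})\otimes(\bar{A}^{(2)}\mathbf{w})$, valid for any $\mathbf{u}\in\mathbf{R}^{n_1}$ and $\mathbf{w}\in\mathbf{R}^{n_2}$. An immediate consequence is that if $\mathbf{u}$ is an eigenvector of $\bar{A}^{(1)}$ with eigenvalue $\mu$ and $\mathbf{w}$ an eigenvector of $\bar{A}^{(2)}$ with eigenvalue $\nu$, then $\mathbf{u}\otimes\mathbf{w}$ is an eigenvector of $\bar{A}$ with eigenvalue $\mu\nu$. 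Picking eigenbases for $\bar{A}^{(1)}$ and $\bar{A}^{(2)}$ (possible since both are symmetric) produces $n_1 n_2$ linearly independent eigenvectors of $\bar{A}$ of this product form, so the spectrum of $\bar{A}$ is exactly $\{\,\overline{\lambda}_i^{(1)}\overline{\lambda}_j^{(2)} : 1\le i\le n_1,\ 1\le j\le n_2\,\}$.

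Second, I would identify the Perron eigenvalue of $\bar{A}$ among these products. Since every entry of $\bar{A}^{(1)}$ and $\bar{A}^{(2)}$ is a strictly positive link probability, every entry of $\bar{A}$ is strictly positive as well, so the Perron--Frobenius theorem applies to $\bar{A}$ (and to each layer). In each layer, Perron--Frobenius gives $\overline{\lambda}_1^{(k)} > |\overline{\lambda}_j^{(k)}|$ for all $j\ge 2$, hence $|\overline{\lambda}_i^{(1)}\overline{\lambda}_j^{(2)}| = |\overline{\lambda}_i^{(1)}|\,|\overline{\lambda}_j^{(2)}| \le \overline{\lambda}_1^{(1)}\overline{\lambda}_1^{(2)}$ with equality only when $i=j=1$. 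Therefore the largest-modulus eigenvalue of $\bar{A}$ is $\overline{\lambda}_1^{(1)}\overline{\lambda}_1^{(2)}$, it is simple, and its eigenspace is spanned by $\mathbf{v}(\bar{A}^{(1)})\otimes\mathbf{v}(\bar{A}^{(2)})$; in particular the eigenvector centrality $\mathbf{v}(\bar{A})$ is well defined and proportional to this Kronecker product.

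Third, I would check the normalization and sign conventions so that equality, not mere proportionality, holds. Using $\|\mathbf{u}\otimes\mathbf{w}\|_2 = \|\mathbf{u}\|_2\|\mathbf{w}\|_2$ together with $\|\mathbf{v}(\bar{A}^{(1)})\|_2 = \|\mathbf{v}(\bar{A}^{(2)})\|_2 = 1$ shows $\mathbf{v}(\bar{A}^{(1)})\otimes\mathbf{v}(\bar{A}^{(2)})$ has unit Euclidean norm; and since both layer Perron vectors have non-negative entries, so does the product, matching the Perron--Frobenius sign convention. Hence $\mathbf{v}(\bar{A}) = \mathbf{v}(\bar{A}^{(1)})\otimes\mathbf{v}(\bar{A}^{(2)})$, and reading off the $(i_1,i_2)$ coordinate of the Kronecker product — which is $v_{i_1}(\bar{A}^{(1)})\,v_{i_2}(\bar{A}^{(2)})$ under the agent-indexing used in the definition of $\bar{A}$ — gives the claim.

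There is no substantive obstacle here; the only points needing care are (a) confirming the dominant eigenvalue of $\bar{A}$ is \emph{simple}, which is exactly where positivity of the two layers (hence of $\bar{A}$), and equivalently the strict domination $\overline{\lambda}_1^{(k)} > |\overline{\lambda}_2^{(k)}|$ in each layer, gets used, and (b) matching the coordinate indexing of the Kronecker product to the $(i_1,i_2)$-labeling of agents; everything else is the mixed-product identity and the multiplicativity of the Euclidean norm under $\otimes$.
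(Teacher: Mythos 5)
Your proof is correct and is in substance the same approach as the paper's: the paper simply cites the standard Kronecker-product spectral fact (Theorem 13.12 of Laub, 2005) that you prove in full via the mixed-product identity, the positivity of both layers, and the Perron--Frobenius identification of the simple dominant eigenvalue. The only added value of your write-up is that it makes explicit the normalization, sign, and indexing checks that the citation leaves implicit.
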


The lemma says that eigenvector centralities can be computed separately in each layer and multiplied together. Note that we can rephrase the result as stating that
$$\textbf{v}(\bar{A}) = \textbf{v}(\bar{A}^{(1)}) \otimes \textbf{v}(\bar{A}^{(2)}).$$ This standard fact can be used to show that centralities behave similarly on actual random networks. 

\begin{proposition}\label{layerscor}
Suppose $\bar{A}^{(1)}(n_1)$ and $\bar{A}^{(2)}(n_2)$ are sequences of random networks which both have non-vanishing spectral gaps and large enough eigenvalues. Let $\bar{A}(n_1n_2)$ be the corresponding sequence of multicharacteristic random networks. For any $\epsilon > 0$, when $n$ is sufficiently large
$$\|\textbf{v}(A(n_1n_2))-\textbf{v}(A^{(1)}(n_1))\otimes  \textbf{v}(A^{(2)}(n_2)\|_2 < \epsilon$$
with probability at least $1-\epsilon.$
\end{proposition}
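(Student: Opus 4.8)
The plan is to deduce the result from Theorem~\ref{thm1} applied three times --- once to the combined network $\bar{A}(n_1n_2)=\bar{A}^{(1)}(n_1)\otimes\bar{A}^{(2)}(n_2)$ and once to each layer --- together with the deterministic identity $\textbf{v}(\bar{A})=\textbf{v}(\bar{A}^{(1)})\otimes\textbf{v}(\bar{A}^{(2)})$ from (the restatement of) Lemma~\ref{layers}, glued by a triangle inequality that passes through the deterministic tensor product. First I would check that the combined sequence $\bar{A}(n_1n_2)$ satisfies the hypotheses of Theorem~\ref{thm1}. Since $\bar{A}$ is a Kronecker product of two positive symmetric matrices, it is again positive and symmetric, and its eigenvalues are exactly the products $\overline{\lambda}_i^{(1)}\overline{\lambda}_j^{(2)}$; as both Perron eigenvalues $\overline{\lambda}_1^{(1)},\overline{\lambda}_1^{(2)}$ are positive, $\overline{\lambda}_1(\bar{A})=\overline{\lambda}_1^{(1)}\overline{\lambda}_1^{(2)}$, and the next-largest modulus among these products is $\max\bigl(|\overline{\lambda}_2^{(1)}|\overline{\lambda}_1^{(2)},\,\overline{\lambda}_1^{(1)}|\overline{\lambda}_2^{(2)}|\bigr)$. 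Using the two spectral-gap constants $\delta_1,\delta_2$ from the layers, this is at most $(1-\min(\delta_1,\delta_2))\,\overline{\lambda}_1(\bar{A})$, so the combined sequence has non-vanishing spectral gap with $\delta=\min(\delta_1,\delta_2)$.

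For the large-enough-eigenvalues condition, note that the expected degree of node $(i_1,i_2)$ in $\bar{A}$ factors as the product of the expected degree of $i_1$ in $\bar{A}^{(1)}$ and of $i_2$ in $\bar{A}^{(2)}$, so the maximum expected degree of $\bar{A}$ is $\Delta=\Delta^{(1)}\Delta^{(2)}$. Hence $\overline{\lambda}_1(\bar{A})/\sqrt{\Delta\log(n_1n_2)}$ equals $\bigl(\overline{\lambda}_1^{(1)}/\sqrt{\Delta^{(1)}\log n_1}\bigr)\bigl(\overline{\lambda}_1^{(2)}/\sqrt{\Delta^{(2)}\log n_2}\bigr)$ times $\sqrt{\log n_1\log n_2/(\log n_1+\log n_2)}$. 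The first factor is a product of two quantities each tending to $\infty$ by hypothesis, and the second factor is bounded below by $\sqrt{\tfrac12\min(\log n_1,\log n_2)}\to\infty$ because $\log n_1+\log n_2\le\log n_1\log n_2$ once both $n_k$ are large. So the combined sequence has large enough eigenvalues and Theorem~\ref{thm1} applies to it.

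With the hypotheses verified, I would fix $\epsilon>0$, apply Theorem~\ref{thm1} to $\bar{A}(n_1n_2)$ to get $\|\textbf{v}(A(n_1n_2))-\textbf{v}(\bar{A}(n_1n_2))\|_2<\epsilon/2$ with probability at least $1-\epsilon/3$ for $n_1,n_2$ large, and to each layer to get $\|\textbf{v}(A^{(k)}(n_k))-\textbf{v}(\bar{A}^{(k)}(n_k))\|_2<\epsilon/4$, each with probability at least $1-\epsilon/3$. On the intersection of these events (probability at least $1-\epsilon$ by a union bound) I would use Lemma~\ref{layers} to rewrite $\textbf{v}(\bar{A}(n_1n_2))=\textbf{v}(\bar{A}^{(1)}(n_1))\otimes\textbf{v}(\bar{A}^{(2)}(n_2))$ and then estimate, writing $a=\textbf{v}(\bar{A}^{(1)})$, $a'=\textbf{v}(A^{(1)})$, $b=\textbf{v}(\bar{A}^{(2)})$, $b'=\textbf{v}(A^{(2)})$,
\[
\|a'\otimes b'-a\otimes b\|_2\le\|(a'-a)\otimes b'\|_2+\|a\otimes(b'-b)\|_2=\|a'-a\|_2+\|b'-b\|_2,
\]
using $\|x\otimes y\|_2=\|x\|_2\|y\|_2$ and that $a,b,b'$ are unit vectors. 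Combining this with the triangle inequality through $a\otimes b$ gives the bound $<\epsilon$.

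The main thing to be careful about --- rather than a deep obstacle --- is twofold: the logarithmic bookkeeping in the last display of the second paragraph, since $n_1$ and $n_2$ may grow at genuinely different rates and $\log(n_1n_2)$ must not wash out the product of the two per-layer ratios; and the normalization and sign convention for eigenvector centrality. For the latter, since $\bar{A}$, $\bar{A}^{(1)}$, $\bar{A}^{(2)}$ are positive matrices their Perron vectors are the unique nonnegative unit eigenvectors and the tensor identity of Lemma~\ref{layers} is a genuine equality; and on the high-probability event above each realized centrality vector is within $\epsilon/4$ of its positive Perron target, so choosing the sign conventions for $\textbf{v}(A(n_1n_2))$ and for $\textbf{v}(A^{(1)})\otimes\textbf{v}(A^{(2)})$ to agree with those targets makes the telescoping estimate legitimate.
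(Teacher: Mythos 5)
Your proposal is correct and follows essentially the same route as the paper: verify that the Kronecker-product sequence inherits both hypotheses, apply Theorem~\ref{thm1} to the combined network and to each layer, and glue via Lemma~\ref{layers} and a triangle inequality through the tensor product, using $\|x\otimes y\|_2=\|x\|_2\|y\|_2$. If anything, your verification of the large-enough-eigenvalues condition is more careful than the paper's, which asserts that $\log(n)$ is ``multiplicative'' in the Kronecker product when in fact $\log(n_1n_2)=\log n_1+\log n_2$; your explicit bookkeeping showing the ratio still diverges closes that small gap.
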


So $v_{(i_1,i_2)}({A})$ is very likely to be close to $v_{i_1}({A}^{(1)})v_{i_2}({A}^{(2)})$ for each agent $(i_1,i_2)$. When link probabilities are multiplicative, so are centralities. The proof applies Theorem~\ref{thm1} and Lemma~\ref{layers}.

The results extend immediately to networks with more than two characteristics.

\section{Conclusion}\label{conc}

We gave asymptotic characterizations of the eigenvector and Katz-Bonacich centralities of agents in random networks. Theoretically, these results simplify questions about how changing network structure affects who is central. We provided several applications: (1) In stochastic block models, we showed the interaction between network segregation and differences in group sizes increases inequality. (2) We could calculate and compare the network centralities of agents in various locations when links depend on spatial locations. (3) When networks depend on several independent characteristics, centralities could be computed by treating each characteristic separately.

We conclude by proposing several consequences of our techniques beyond these applications:
\begin{itemize}
\item Centrality can be studied in applied settings where full network data is unavailable: in large networks we can approximate agents' centralities well using only link formation probabilities for various groups.
\item Conversely, when an econometrician has access to network data but parameters in the underlying network formation process are unobserved, our theorems justify a structural model of network formation. The unobserved parameters can be estimated using centralities or related quantities as moments.
\item A large literature in industrial organization models pricing decisions as a quadratic game, with each firm competing with their neighbors in a network (\cite*{vives2010information}, \cite*{vives2017endogenous}). While this literature has focused on simple network structures, our methods could be used to solve for expected prices on random networks and study how network structure influences prices.
\end{itemize}
Like the applications developed in the present work, each relies on the basic method of reducing questions about random graphs to deterministic calculations.
\appendix

\section{Proofs}
We will sometimes refer to the Euclidean norm and the induced matrix norm as $\|\cdot\|$, omitting the subscripts $2$.

\begin{proof}[Proof of Theorem~\ref{thm1}]

The outline of the proof is as follows: the first step is to show that the matrix norm $\|A(n)-\bar{A}(n)\|_2$ is small with high probability. To show that this implies the corresponding eigenvector centralities are close, we observe that $\textbf{v}(A(n))$ is the vector in the unit sphere with the largest image under the linear operator $A(n)$. When the spectral gap is not too small, any other vector in the unit sphere with image almost as large under $A(n)$ must be close to $\textbf{v}(A(n))$. When $\|A(n)-\bar{A}(n)\|_2$ is small we can show that the eigenvector centrality of $\bar{A}(n)$ has large image under $A(n)$, and so must be close to the eigenvector centrality of $A(n)$.\footnote{An alternate approach to the eigenvector perturbation theory argument, which we produce directly here, is to appeal to the Davis-Kahan theorem (as in \cite*{avella2017centrality}.)}

Let $\epsilon > 0$. Recall we defined $\Delta = \max_i \sum_{j = 1}^n \bar{A}(n)_{ij} $ to be the maximum expected degree of a node. We sometimes drop arguments $n$ in the remainder of the proof, so for example $A(n)$ will be referred to as $A$.

It follows from the proof of Theorem 1 of \cite*{Chung11} that with probability at least $1-\epsilon$
\begin{equation}\label{eq:evbound}|\lambda_1 - \overline{\lambda}_1| \leq \sqrt{4 \Delta \log(2n/\epsilon)}.\end{equation}
In fact it follows from the proof of the theorem cited that with probability at least $1-\epsilon$
\begin{equation}\label{eq:normbound}\|A - \bar{A}\| \leq  \sqrt{4 \Delta \log(2n/\epsilon)}.\end{equation}
Note that the theorem cited uses our assumption that $A$ is symmetric. We will show that when these inequalities hold, $$\|\textbf{v}(A(n)) - \textbf{v}(\bar{A}(n))\|_2 < \epsilon.$$ 

Because the sequence has large enough eigenvalues, \begin{equation}\label{eq:largeev}\sqrt{4 \Delta \log(2n/\epsilon)}  \leq f(n) \overline{\lambda}_1\end{equation} for some sequence $f(n) \rightarrow 0$. So we have
\begin{align*} \|\bar{A} \textbf{v}(A)\| & \geq  \|A \textbf{v}(A)\| - \|A - \bar{A}\| \|\textbf{v}(A)\| & \text{by the triangle inequality}
\\ &  \geq \lambda_1  - f(n) \overline{\lambda}_1 & \text{by equations \ref{eq:normbound} and \ref{eq:largeev}}
\\ & = \overline{\lambda}_1 - (\overline{\lambda}_1-\lambda_1) - f(n) \overline{\lambda}_1 &
\\ & \geq \overline{\lambda}_1(1 - 2f(n)) & \text{ by equation \ref{eq:evbound}}.\end{align*}

On the other hand, we can write $\textbf{v}(A) = \alpha_1 \textbf{v}(\bar{A}) + \alpha_2 \textbf{w},$ where $\textbf{w}$ is a unit vector orthogonal to $\textbf{v}(\bar{A})$ and $\alpha_1^2 + \alpha_2^2 =1$ (where the coefficients $\alpha_1$ and $\alpha_2$ vary with $n$). Because the orthogonal complement to $\textbf{v}(\bar{A})$ is the span of the other eigenvectors of $\bar{A}$, we have $$\|\bar{A} \textbf{w} \| \leq \overline{\lambda}_2.$$
Because the sequence has non-vanishing spectral gap, there exists $\delta>0$ such that $|\overline{\lambda}_2| < (1-\delta)\overline{\lambda}_1.$ So
\begin{align*}\|\bar{A} \textbf{v}(A)\| & \leq \sqrt{(\alpha_1 \overline{\lambda}_1 )^2+ (\alpha_2 \overline{\lambda}_2)^2} \\ & \leq \overline{\lambda}_1 \sqrt{\alpha_1^2 +(1-\delta)^2\alpha_2^2}.  \end{align*}

Combining the upper and lower bound on $\|\bar{A}\textbf{v}(A)\|,$ we conclude that
$$\sqrt{\alpha_1^2 +(1-\delta)^2\alpha_2^2} \geq (1-2f(n)).$$
Because $f(n) \rightarrow 0$ and $\alpha_1^2 + \alpha_2^2 = 1$, this implies that $\alpha_2 \rightarrow 0$ as $n \rightarrow 0$. So taking $n$ sufficiently large, we conclude that $$\|\textbf{v}(A(n)) - \textbf{v}(\bar{A}(n))\| < \epsilon$$
with probability at least $1-\epsilon$.\end{proof}
\begin{proof}[Proof of Theorem~\ref{thm2}]
The outline of the proof is as follows: we show as in the proof of Theorem~\ref{thm1} that the matrix norm $\|A(n)-\bar{A}(n)\|_2$ is small with high probability. We then expand $\textbf{c}(A(n),\phi(n))$ as a sum of powers of the adjacency matrix. Using the bound on $\|A(n)-\bar{A}(n)\|_2$, we show these powers are also close in the matrix norm.

We have
$$\textbf{c}( \bar{A}(n),\phi(n)) = \sum_{k = 0}^{\infty} \phi(n)^k \bar{A}(n)^k \mathbf{1}_n,$$
and the analogous formula for $A(n)$.

Because $\limsup_n \phi(n)\overline{\lambda}_1(n) <1$, there  exists $K$ and $N$ such that $$\left\|\sum_{k=K}^{\infty} \phi(n)^k \bar{A}(n)^k\right\| \leq \sum_{k=K}^{\infty} (\phi(n) \overline{\lambda}_1(n))^k < \epsilon$$
whenever $n \geq N$. Increasing $N$ and decreasing $\epsilon$ if necessary, we can assume (by equations~\ref{eq:normbound} and~\ref{eq:largeev} from the proof of Theorem~\ref{thm1}) that
$$\|\phi(n) (\bar{A}(n)-A(n))\| < \epsilon/(K^2), $$ $$\left\|\sum_{k=K}^{\infty} \phi(n)^k \bar{A}(n)^k\right\| < \epsilon \text{ and }\left\|\sum_{k=K}^{\infty} \phi(n)^k {A}(n)^k\right\| < \epsilon$$
with probability at least $1-\epsilon$ whenever $n \geq N$. We condition on this event.

We have taken $N$ sufficiently large so that $\|\phi(n) \bar{A}(n)\|< 1$, and (taking $\epsilon$ smaller if necessary) it follows from the bound on $\|\phi(n) (\bar{A}(n)-A(n))\| $ that $\|\phi(n) A(n)\| \leq 1$. Dropping the argument $n$, 
\begin{align*}\|\phi^{k+1} {A}^{k+1} -\phi^{k+1} \bar{A}^{k+1} \| & = \| \phi(A-\bar{A})(\phi^kA^k)+ \phi \bar{A} (\phi^kA^k - \phi^k \bar{A}^k)\|
\\ & \leq  \|  \phi(A-\bar{A})\|  \|\phi^k A^k \| + \|\phi \bar{A}\|\|\phi^k A^k - \phi^k\bar{A}^k\|,\end{align*}
where the second line is by the triangle inequality and submultiplicativity of the matrix norm. Thus, the left-hand side increases by at most $\epsilon/K^2$ each time we increase $k$ by one. So by induction on $k$, the left-hand side is less than $\epsilon/K$ for all $k < K$.

Then 
\begin{align*} \|\textbf{c}( {A}(n),\phi(n)) - \textbf{c}( \bar{A}(n),\phi(n))\| &  \leq \sum_{k = 0}^{\infty} \phi(n)^k \|A(n)^k - \bar{A}(n)^k\| \|\mathbf{1}_n\|
\\ & = \sum_{k = 0}^{K-1} \phi(n)^k \|A(n)^k - \bar{A}(n)^k\| \|\mathbf{1}_n\| \\ & + \sum_{k=K}^{\infty} \phi(n)^k \|A(n)^k - \bar{A}(n)^k\| \|\mathbf{1}_n\|.
\end{align*}
The first term is less than $\epsilon \|\mathbf{1}_n\|$ because each summand is less than $\epsilon/K$. The second term is less than $2\epsilon\|\mathbf{1}_n\|$  by the bounds on  $ \left\|\sum_{k=K}^{\infty} \phi(n)^k \bar{A}(n)^k\right\|$ and $ \left\|\sum_{k=K}^{\infty} \phi(n)^k {A}(n)^k\right\|$. So the difference of Katz-Bonacich centralities is less than $3\epsilon \|\mathbf{1}_n\| = 3\epsilon \sqrt{n}.$
\end{proof}

\begin{lemma}\label{stochlem}
Suppose ${A}(n)$ is a sequence of random networks generated from a stochastic block model (with $m$ groups and fixed positive link probabilities) with non-vanishing spectral gap. Let $\epsilon > 0$. For $n$ sufficiently large, with probability at least $1-\epsilon$ we have
$$\frac{{v}_i(A(n))}{{v}_i(\bar{A}(n))} \in [1-\epsilon,1+\epsilon] \mbox{ and }\frac{{c_i}(A(n))}{{c_i}(\bar{A}(n))} \in [1-\epsilon,1+\epsilon]$$
for all $i$.
\end{lemma}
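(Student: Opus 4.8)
The plan is to upgrade the $\ell^2$ estimates of Theorems~\ref{thm1} and~\ref{thm2} to \emph{uniform relative} bounds, by substituting the $\ell^2$-close vectors back into the equations that define the two centrality measures and exploiting the block structure of $\bar A(n)$. A direct appeal to the two theorems does not suffice: in a stochastic block model with $m$ groups and fixed positive link probabilities one has $\overline{\lambda}_1(n)=\Theta(n)$, so each $v_i(\bar A(n))$ is only of order $n^{-1/2}$ --- comparable to, indeed dominated by, the $\epsilon$ in $\|\textbf{v}(A(n))-\textbf{v}(\bar A(n))\|_2<\epsilon$ --- while $c_i(\bar A(n))=\Theta(1)$ but Theorem~\ref{thm2} controls the $\ell^2$ error only up to $\epsilon\sqrt n$. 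So a genuinely coordinatewise argument is needed (throughout I read $c_i(A(n))$ as $c_i(A(n),\phi(n))$ for a sequence with $\limsup_n\phi(n)\overline{\lambda}_1(n)<1$).

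First I would record the deterministic structure of $\bar A(n)$. Write $p_{\min}\le p_{gh}\le p_{\max}$ for the fixed link probabilities, $\kappa:=p_{\max}/p_{\min}$, and $n_g$ for the size of group $g$. Since $\bar A(n)$ has constant blocks, $\textbf{v}(\bar A(n))$ is constant on groups, equal to the Perron vector $(u_g)$ of the $m\times m$ matrix $M_{gh}=n_h p_{gh}$ (eigenvalue $\overline{\lambda}_1(n)$); from $\overline{\lambda}_1 u_g=\sum_h n_h p_{gh}u_h$ and $\sum_g n_g u_g^2=1$ one gets $v_i(\bar A(n))\in[\kappa^{-1}n^{-1/2},\kappa n^{-1/2}]$ for all $i$, and in particular $\overline{\lambda}_1(n)\in[p_{\min}n,p_{\max}n]$, so the sequence automatically has large enough eigenvalues (using $\Delta\le n$) and Theorems~\ref{thm1} and~\ref{thm2} both apply. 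Likewise $\textbf{c}(\bar A(n),\phi(n))$ is constant on groups with $c_i(\bar A)\ge 1$, and $c_i(\bar A)=1+\phi\sum_h n_h p_{gh}c_h(\bar A)\le 1+\phi p_{\max}\langle\textbf{1},\textbf{c}(\bar A)\rangle$, where $\langle\textbf{1},\textbf{c}(\bar A)\rangle=\sum_k\phi^k\langle\textbf{1},\bar A^k\textbf{1}\rangle\le n\sum_k(\phi\overline{\lambda}_1)^k=\Theta(n)$ and $\phi(n)=O(1/n)$; hence $c_i(\bar A(n))\in[1,\overline C]$ for a constant $\overline C$. Then I would fix the high-probability events to condition on: by Theorems~\ref{thm1} and~\ref{thm2}, $\|\textbf{v}(A)-\textbf{v}(\bar A)\|\le\eta_n$ and $\|\textbf{c}(A,\phi)-\textbf{c}(\bar A,\phi)\|\le\eta_n\sqrt n$ with $\eta_n\to 0$ (taking the nonnegative Perron vector of $A$, which exists since $A$ is connected w.h.p.); by equation~(\ref{eq:evbound}), $|\lambda_1-\overline{\lambda}_1|=O(\sqrt{n\log n})$, hence $\lambda_1/\overline{\lambda}_1\to 1$; by Chernoff and a union bound all degrees satisfy $d_i=\Theta(n)$; and by Hoeffding's inequality with a union bound over the $mn$ pairs $(i,g)$, $|\sum_{j\in\,\text{group}\,g}(A_{ij}-\bar A_{ij})|=O(\sqrt{n\log n})$ uniformly in $i$ and $g$.

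The eigenvector claim then follows by writing $\textbf{v}(A)=\textbf{v}(\bar A)+\textbf{e}$, substituting into $v_i(A)=\lambda_1^{-1}(A\textbf{v}(A))_i$, and subtracting $\overline{\lambda}_1 v_i(\bar A)=(\bar A\textbf{v}(\bar A))_i$:
$$v_i(A)-v_i(\bar A)=\Big(\tfrac{\overline{\lambda}_1}{\lambda_1}-1\Big)v_i(\bar A)+\lambda_1^{-1}\big((A-\bar A)\textbf{v}(\bar A)\big)_i+\lambda_1^{-1}(A\textbf{e})_i .$$
The first term is $o(1)\cdot v_i(\bar A)$ since $\lambda_1/\overline{\lambda}_1\to 1$. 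Since $|(A\textbf{e})_i|\le\sqrt{d_i}\,\|\textbf{e}\|=O(\sqrt n)\,\eta_n$ and $\lambda_1=\Theta(n)$, the third term is $O(\eta_n/\sqrt n)=o(1)\cdot v_i(\bar A)$. For the second term, $\textbf{v}(\bar A)$ constant on groups gives $\big((A-\bar A)\textbf{v}(\bar A)\big)_i=\sum_g u_g\sum_{j\in\,\text{group}\,g}(A_{ij}-\bar A_{ij})=O(n^{-1/2}\sqrt{n\log n})=O(\sqrt{\log n})$ on our event, so dividing by $\lambda_1=\Theta(n)$ again gives $o(1)\cdot v_i(\bar A)$. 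Adding up, $|v_i(A)-v_i(\bar A)|\le o(1)\cdot v_i(\bar A)$ uniformly in $i$, which is the first half of the lemma.

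The Katz--Bonacich claim is parallel: with $\textbf{c}(A,\phi)=\textbf{c}(\bar A,\phi)+\textbf{e}$, substituting into $c_i(A)=1+\phi(A\textbf{c}(A))_i$ and subtracting the identity for $\bar A$ gives $c_i(A)-c_i(\bar A)=\phi\big((A-\bar A)\textbf{c}(\bar A)\big)_i+\phi(A\textbf{e})_i$; with $\phi=O(1/n)$, $c_i(\bar A)$ constant on groups, the same Hoeffding bound, and $|\phi(A\textbf{e})_i|\le\phi\sqrt{d_i}\,\|\textbf{e}\|=O(\eta_n)$, both pieces are $o(1)$, and since $c_i(\bar A)\ge 1$ this yields $c_i(A)/c_i(\bar A)\to 1$ uniformly in $i$. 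Intersecting the finitely many high-probability events gives the stated conclusion with probability at least $1-\epsilon$ for $n$ large. The step I expect to be the real obstacle is exactly this passage from an $\ell^2$ bound to uniform coordinatewise control; everything else amounts to using the block structure to collapse $(A-\bar A)$, tested against a group-constant vector, into $O(mn)$ scalar concentration inequalities.
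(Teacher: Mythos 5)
Your proposal is correct and follows essentially the same route as the paper's proof: both substitute the $\ell^2$-close vectors back into the recursive definitions $v_i = \lambda_1^{-1}(A\textbf{v})_i$ and $c_i = 1 + \phi(A\textbf{c})_i$, split the coordinatewise error into an eigenvalue-discrepancy term, a term controlling $(A-\bar A)$ applied to the group-constant expected centrality vector via concentration of group-wise neighbor counts, and a term pushing the $\ell^2$ residual through a row of $A$, then compare orders against $v_i(\bar A)=\Theta(n^{-1/2})$ and $c_i(\bar A)=\Theta(1)$. The only differences are cosmetic (Hoeffding plus a union bound versus the paper's Chebyshev/Chernoff event $E_{ik}$, and Cauchy--Schwarz versus the $\ell^1$--$\ell^2$ norm inequality for the residual term).
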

\begin{proof}
We first consider eigenvector centrality. Because the link probabilities do not depend on the size of the matrix, the eigenvalue $\overline{\lambda}_1$ and the maximum expected degree $\Delta$ are $O(n)$, so the sequence has large enough eigenvalues. By Theorem~\ref{thm1},
$$\|\textbf{v}(A(n)) - \textbf{v}(\bar{A}(n))\|_2 < \epsilon$$
with probability $1-\epsilon/2$ for each $n$ sufficiently large.

By a standard bound on the ratio between the norms $\|\cdot\|_1$ and $\|\cdot \|_2$, we have 
$$\|\textbf{v}(A(n)) - \textbf{v}(\bar{A}(n))\|_1 < \epsilon \sqrt{n}.$$

For each $i$,
$$v_i(A(n)) = {\lambda}_1^{-1} \sum_{j \in N_i} v_j(A(n))$$
and
$$v_i(\bar{A}(n)) = \overline{\lambda}_1^{-1} \sum_{j } \bar{A}(n)_{ij} v_j(\bar{A}(n)).$$
Applying the bound on $\|\textbf{v}(A(n)) - \textbf{v}(\bar{A}(n))\|_1$,
$$|v_i(A(n)) - {\lambda}_1^{-1} \sum_{j \in N_i} v_j(\bar{A}(n))| = |\lambda_1^{-1} \sum_{j \in N_i}v_j(A(n)) - {\lambda}_1^{-1} \sum_{j \in N_i} v_j(\bar{A}(n))|< {\lambda}_1^{-1}  \epsilon \sqrt{n}.$$

Now, we will use Chebyshev's theorem to bound the number of connections to each group uniformly in $i$. We consider the event $E_{ik}$ that the number $ \sum_{j \in N_i, \text{ group k}} A(n)_{ij}$ of agents in a given group $k$ who are neighbors of $i$ is within a factor of $\epsilon$ of its expected value $\sum_{j \text{ in group }k}\bar{A}(n)_{ij}$.
By Chebyshev's theorem, the probability of the complement of $E_{ik}$ vanishes at an exponential rate. Thus, for $n$ sufficiently large we can assume that $E_{ik}$ holds for all $i$ and $k$ with probability at least $1-\epsilon/2$.

When this event $E_{ik}$ occurs, we have
$$|\sum_{j \in N_i} v_j(\bar{A}(n))- \sum_{j} \bar{A}_{ij} v_j(\bar{A}(n))| < C \epsilon \sqrt{n}$$
for some constant $C \geq \sqrt{n} \max_j \bar{A}_{ij} v_j(\bar{A}(n))$ independent of $n$.

Thus for $n$ large, with probability at least $1-\epsilon$, we have
$$|v_i(A(n)) - {\lambda}_1^{-1} \sum_{j} \bar{A}_{ij} v_j(\bar{A}(n))| < {\lambda}_1^{-1} ( \epsilon \sqrt{n} + C \epsilon \sqrt{n})$$
 Combining this with the bound on $|\lambda_1 - \overline{\lambda}_1|$ from the proof of Theorem 1, we have
\begin{align*}|v_i(A(n))-v_i(\bar{A}(n))| & \leq |v_i(A(n)) - {\lambda}_1^{-1} \sum_{j} \bar{A}_{ij} v_j(\bar{A}(n))|+|\overline{\lambda}_1^{-1} \sum_{j} \bar{A}_{ij} v_j(\bar{A}(n))-{\lambda}_1^{-1} \sum_{j} \bar{A}_{ij} v_j(\bar{A}(n))|
\\ & = |v_i(A(n)) - {\lambda}_1^{-1} \sum_{j} \bar{A}_{ij} v_j(\bar{A}(n))|+|\lambda_1-\overline{\lambda}_1| \cdot \lambda_1^{-1} \cdot | \overline{\lambda_1}^{-1} \sum_{j} \bar{A}_{ij} v_j(\bar{A}(n))|
\\ & = {\lambda}_1^{-1} ( \epsilon \sqrt{n} + C \epsilon \sqrt{n} )+ \epsilon v_i(\bar{A}(n)).\end{align*}
Because $\lambda_1^{-1}$ grows at a linear rate while $v_i(\bar{A}(n))$ is $O(n^{-1/2})$, the desired bound follows.

The proof for Katz-Bonacich centrality follows by the same argument, up to the difference in normalizations. We now use the expressions
$$c_i(A(n), \phi(n)) = 1 + \phi(n) \sum_{j\in N_i} c_j(A(n), \phi(n))$$
and
$$c_i(\bar{A}(n), \phi(n)) = 1 + \phi(n) \sum_{j}\bar{A}(n)_{ij} c_j(\bar{A}(n), \phi(n))$$
from Definition~\ref{katzbondef} instead of the recursive expressions of eigenvector centrality.
\end{proof} 

\begin{proof}[Proof of Proposition~\ref{twoeig}]

Lemma~\ref{stochlem} tells us that for a sequence of random matrices generated as in the proposition statement with population size $N_i$, each of the entries of the first eigenvector of each matrix $A$ is close to the corresponding entry of the first eigenvector of $\bar{A}$ with probability close to one.

\begin{figure}
\center{\includegraphics[scale=.55]{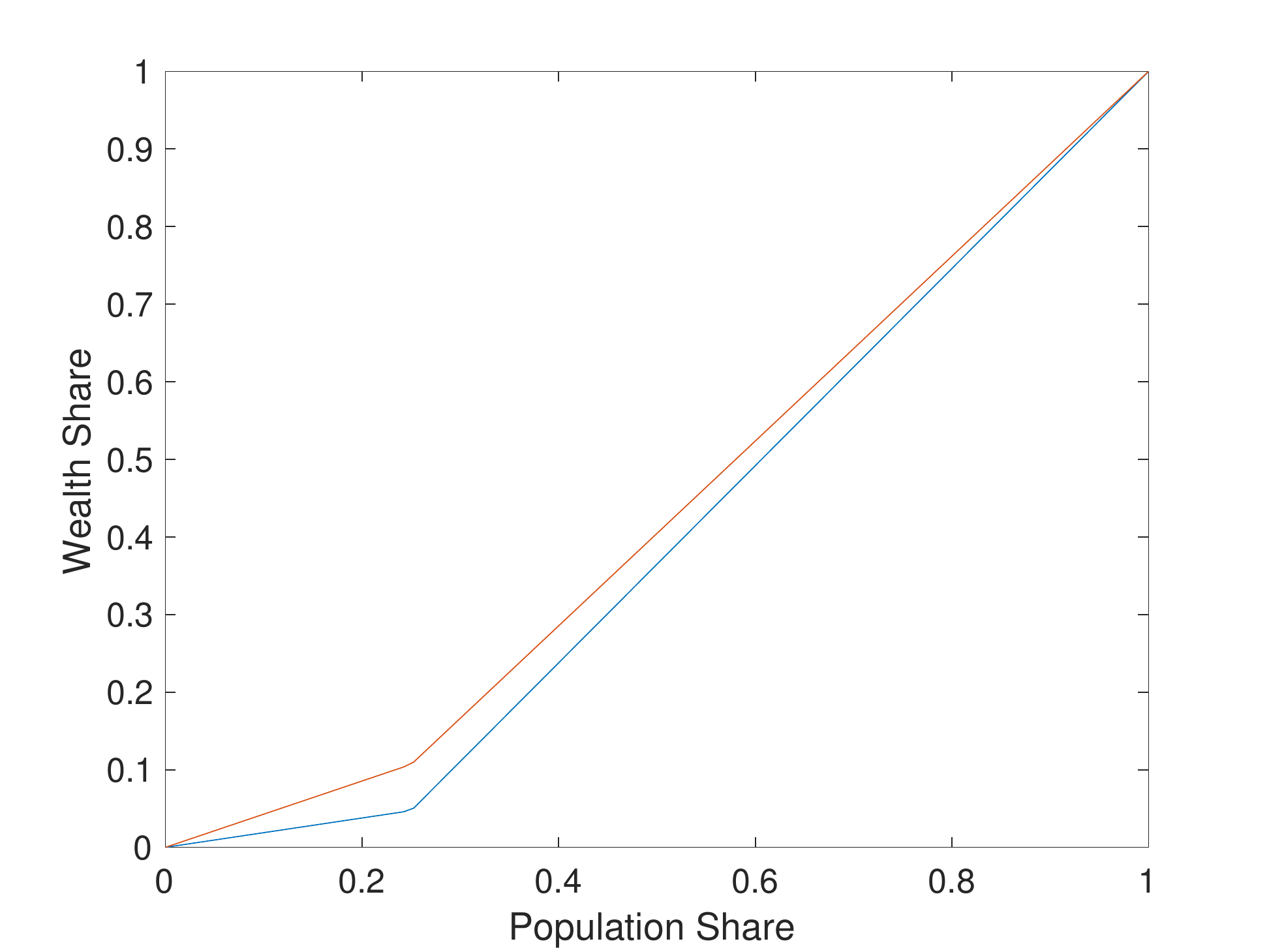}}
\caption{Lorenz curves for eigenvector centralities of two groups with $p_s = .5, p_d = .05$, $s_1 = .75$ in blue, and $p_s' = .4, p_d'=.1,s_1'=.75$ in red. The red distribution Lorenz dominates the blue distribution.}\label{fig:groupsize}
\end{figure}

We first show (i). By definition, the eigenvector centralities satisfy
$$v_i(\bar{A}(n)) = \overline{\lambda}_1^{-1} (s_ln (p_s-p_d) v_i(\bar{A}(n)) +p_d \sum_{j=1}^n v_j(\bar{A}(n))),$$
where $s_l$ is the size of the group containing $i$. For any $i'$ in a group of size $s_{l'}$, we have
$$ \frac{v_i(\bar{A}(n)) }{v_{i'}(\bar{A}(n)) }=\frac{1-\overline{\lambda}_1^{-1}s_{l'}n (p_s-p_d) }{1-\overline{\lambda}_1^{-1}s_{l}n (p_s-p_d) }.$$
Thus,
$$\frac{\partial}{\partial p_s}\left( \frac{v_i(\bar{A}(n)) }{v_{i'}(\bar{A}(n)) }\right)=-\frac{\partial}{\partial p_d}\left( \frac{v_i(\bar{A}(n)) }{v_{i'}(\bar{A}(n)) }\right)=\frac{\overline{\lambda}_1^{-1}(s_{l}-s_{l'})n}{(1-\overline{\lambda}_1^{-1}s_{l}n (p_s-p_d))^2}.$$
The right-hand side is positive if and only if $s_l > s_{l'}$.

Now, we can order the agents $1,\hdots,n$ so that their group sizes are increasing. Increasing $p_s$ and decreasing $p_d$ will increase $\frac{v_i(\bar{A}(n))}{v_{i'}(\bar{A}(n))}$ for any $i > i'$. Thus, this change will decrease
$$\frac{\sum_{i=1}^k v_i(\bar{A}(n))}{\sum_{i=1}^n v_i(\bar{A}(n)) }=\left(\sum_{i=1}^n \frac{1}{\sum_{i'=1}^k\frac{v_{i'}(\bar{A}(n))}{v_{i}(\bar{A}(n))}}\right)^{-1}.$$
for each $1 \leq k < n$. This proves the Lorenz dominance result.

To prove (ii), we compute the eigenvector centrality when there are two groups from the first eigenvector of the 2-by-2 matrix $$\begin{pmatrix} s_1N p_s & (1-s_1)Np_d \\ s_1Np_d & (1-s_1)Np_s \end{pmatrix},$$
which replaces each group with a representative agent.

The first eigenvector of $$\begin{pmatrix} s_1N p_s & (1-s_1)Np_d \\ s_1Np_d & (1-s_1)Np_s \end{pmatrix}$$
is equal to 
$$\begin{pmatrix} 1 \\ \frac{(1-2s_1)p_s + \sqrt{(1-2s_1)^2 p_s^2 + 4s_1(1-s_1)p_d^2} }{(2-2s_1)p_d}\end{pmatrix},$$
up to rescaling.

So the first eigenvector of $\bar{A}$ has first $s_1N$ entries equal to $1$ and final $(1-s_1)N$ entries equal to $$\frac{(1-2s_1)p_s + \sqrt{(1-2s_1)^2 p_s^2 + 4s_1(1-s_1)p_d^2} }{(2-2s_1)p_d},$$
up to rescaling.

Let \begin{align*}T(s) & = sN + \frac{(1-2s_1)p_s + \sqrt{(1-2s_1)^2 p_s^2 + 4s_1(1-s_1)p_d^2} }{2p_d} \cdot N \end{align*}
be the total of the entries of the first eigenvector.

Differentiating this shows that $\frac{\partial T}{\partial s}$ is non-negative if and only if
$$s_1 \leq \frac12 + \frac{p_d}{2\sqrt{2p_d^2+2p_sp_d}}.$$

We conclude that the first eigenvector corresponding to $(s,p_s,p_d)$ Lorenz dominates the first eigenvector corresponding to $(s_1',p_s,p_d)$ if $$s'_1 \leq  \frac12 + \frac{p_d}{2\sqrt{2p_d^2+2p_sp_d}}$$
and $s_1 \leq s'_1$. \end{proof}

\begin{figure}

\center{\includegraphics[scale=.55]{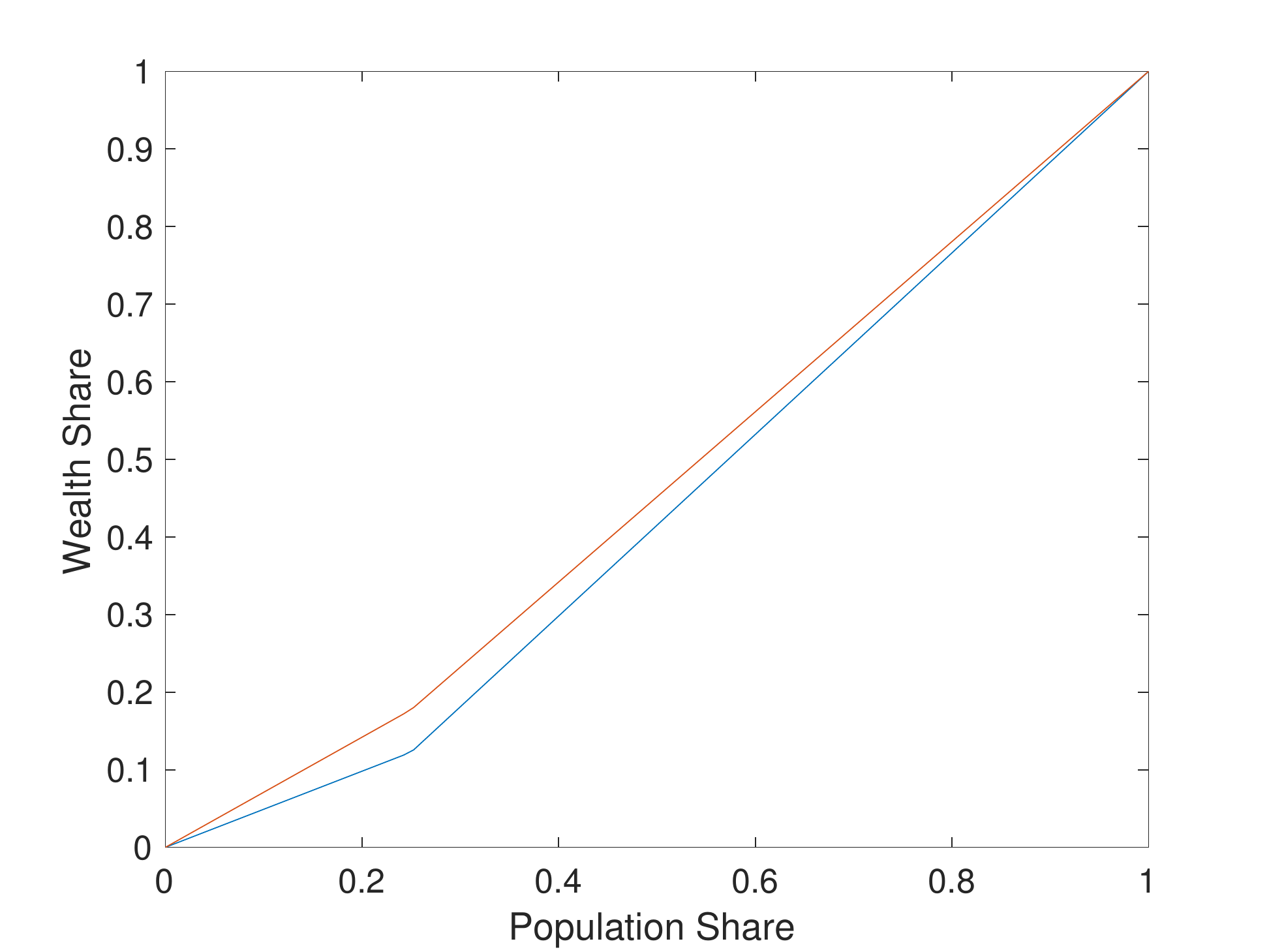}}
\caption{Lorenz curves for Katz-Bonacich centralities of two groups with $100$ agents and parameter values $p_s = .5, p_d = .05$, $s_1 = .75$, $\phi=.02$ in blue, and $p_s' = .4, p_d'=.1,s_1'=.75$, $\phi=.02$ in red. The red distribution Lorenz dominates the blue distribution.}\label{fig:groupsize}
\end{figure}

\begin{proof}[Proof of Proposition~\ref{twobon}]

The proof follows the same argument as the proof of Proposition~\ref{twoeig}. Definition \ref{katzbondef} now gives
$$\frac{c_i(\bar{A},\phi)}{c_{i'}(\bar{A},\phi)}  = \frac{1-s_{l'}(p_s-p_d)\phi n}{1-s_{l}(p_s-p_d)\phi n}.$$
Thus, we compute the elasticities
$$\frac{\partial }{\partial p_s}\left( \frac{c_i(\bar{A},\phi)}{c_j(\bar{A},\phi)} \right)= -\frac{\partial }{\partial p_d} \left( \frac{c_i(\bar{A},\phi)}{c_j(\bar{A},\phi)}
\right) =\frac{(s_l-s_{l'})(p_s-p_d)\phi n}{(1-s_l(p_s-p_d)\phi n)^2}.$$
Just as for eigenvector centrality, this has the same sign as $s_l-s_l'$. The remainder of the proof is the same.

\end{proof}

\begin{proof}[Proof of Proposition~\ref{same}]
We want to show that $$\frac{\partial}{\partial p_s}  c_i(\bar{A},\phi) > \frac{\partial}{\partial p_s}  c_{i'}(\bar{A},\phi)$$when $i$ is in a larger group than $i'$. By Lemma~\ref{pathformula}, we have 
$$\frac{\partial}{\partial p_s}  c_i(\bar{A},\phi) = \frac{1}{p_{s}} \sum_{k=0}^{\infty} \phi^k \sum_{\gamma^k \text{ beginning at }i}   f(\gamma^k) \prod_{j=0}^{k-1} p_{i_ji_{j+1}},$$
where the inner sum is over walks of length $k$ beginning at agent $i$ and $f(\gamma^k)$ is the number of edges of $\gamma^k$ from a group to itself.

We claim that $$\sum_{\gamma^k \text{ beginning at }i}   f(\gamma^k) \prod_{j=0}^{k-1} p_{i_ji_{j+1}}> \sum_{\gamma^k \text{ beginning at }i'}   f(\gamma^k) \prod_{j=0}^{k-1} p_{i_ji_{j+1}}$$
for each $k$.

For length $1$ walks, note that the number of walks from $i$ to an agent in another group is independent of $i$'s group, and each of these walks has $f(\gamma^1)=0$. The number of walks from $i$ to an agent in the same group is increasing in the size of $i$'s group, and each of these walks has $f(\gamma^1)=1$. So we can choose an injection $\phi_1$ from the set of length $1$ walks beginning at $i'$ to the set of length $1$ walks beginning at $i$ preserving $f(\gamma^1)$ and edge weights and sending each walk $\gamma^1$ to a walk $\phi_1(\gamma^1)$ such that the group $\phi_1(\gamma^1)$ ends in is at least as large as the group that $\gamma^1$ ends in.

Given a map $\phi_{k-1}$ satisfying these properties, we can construct $\phi_k$ satisfying the same properties. We send $\gamma^k$ to a walk defined by using $\phi_{k-1}$ to determine the first $k$ agents and then defining the last agent as in the previous paragraph. This proves the claim.
\end{proof}

\begin{proof}[Proof of Corollary~\ref{pscor}]
By Proposition~\ref{same},
$$c_i({\bar{A}'(n)},\phi(n)) - c_i(\bar{A}(n),\phi(n)) > c_j(\bar{A}'(n), \phi(n))-c_j(\bar{A}(n),\phi(n))$$
for all $n$. We must check that Theorem~\ref{thm2} applies.

The sequence has large enough eigenvalues because $\overline{\lambda}_1(n)$ is $O(n)$ and $\Delta$ is linear in $n$. The non-zero eigenvalues of $\bar{A}(n)$ are proportional to the eigenvalues of the matrix
$$\begin{pmatrix} s_1p_s & s_1 p_d & \hdots  & s_1p_d
\\ s_2p_d & s_2p_s & \hdots & s_2p_d \\ \vdots & \vdots & \ddots & \vdots \\ s_m p_d & s_m p_d & \hdots & s_m p_s  \end{pmatrix},$$
where $s_1,s_2,...,s_m$ are the population shares in each group. By the Perron-Frobenius theorem the first and second eigenvalues of this matrix are distinct, so the sequence has non-vanishing spectral gap.

Choose $\epsilon$ small enough so that 
$$c_i({\bar{A}'(n)},\phi(n)) - c_i(\bar{A}(n),\phi(n)) - 2\epsilon > c_j(\bar{A}'(n), \phi(n))-c_j(\bar{A}(n),\phi(n)) + 2\epsilon.$$
With probability approaching $1$ as $n \rightarrow \infty$, $$c_i({\bar{A}'(n)},\phi(n)) -c_i(A'(n),\phi(n)) <\epsilon$$
and similarly for the other three centrality terms. So the desired inequality holds with probability approaching $1$ as $n\rightarrow \infty$.
\end{proof}

\begin{proof}[Proof of Proposition~\ref{dif}]
(i) Let $d_i(\bar{A})$ denote the degree of agent $i$. As $\phi \rightarrow 0$, the Katz-Bonacich centrality $$c_i(\bar{A},\phi) = 1 + \phi d_i(\bar{A}) + O(\phi^2)$$
is equal to positive affine transformation of the degree centrality of $\phi$, plus terms of order $\phi^2$. As $p_d$ increases, the change in degree $\frac{\partial }{\partial p_d} d_i(\bar{A})$ is equal to the total number of agents $n$ minus the size of agent $i$'s group. In particular, this derivative is decreasing in the group size.

(ii) Suppose $i$'s group is larger than $j$'s group.

By Lemma~\ref{pathformula}, the derivative of the Katz-Bonacich centrality of agent $i$ is
\begin{equation}\label{eq:bonderiv}\frac{\partial }{\partial p_{d}}c_i(\bar{A},\phi) = \frac{1}{p_d} \sum_{k=0}^{\infty} \phi^k  \sum_{\gamma^k \text{ beginning at }i}   f(\gamma^k) \prod_{l=0}^{k-1} p_{i_li_{l+1}}\end{equation}
where the sum is over walks $\gamma^k$ of length $k$ starting at agent $i$, the indices $i_0,i_1,...,i_k$ are the groups that the walks pass through, and $f(\gamma^k)$ is the number of times $\gamma^k$ switches groups.

Consider the Markov chain with states corresponding to the groups in our network and transition probabilities from state $s'$ to state $s''$ given by $p_{s's''}/\sum_{s=1}^m p_{s's}.$ The probability that the Markov chain is in state $s$ at time $k$ given an initial state is equal to the probability that a random walk beginning in the group corresponding to the initial state of length $k$ ends at $s$, where each walk is chosen with probability proportional to $ \prod_{l=0}^{k-1} p_{i_li_{l+1}}$.

The Markov chain converges to a stationary distribution. Therefore as $k \rightarrow \infty$, the expected number of times the Markov chain has switched states by time $k$ approaches $ck$ for some constant $c$ independent of the initial state. So as $k$ grows large, $$ \sum_{\gamma^k: i \text{ in group }i_0 }   f(\gamma^k) \prod_{l=0}^{k-1} p_{i_li_{l+1}} \rightarrow c \sum_{\gamma^k: i \text{ in group }i_0 } \prod_{l=0}^{k-1} p_{i_li_{l+1}} .$$
Because $i$'s group is larger than $j$'s group,
$$\frac{\sum_{\gamma^k: i \text{ in group }i_0 } \prod_{l=0}^{k-1} p_{i_li_{l+1}} }{\sum_{\gamma^k: j \text{ in group }i_0 } \prod_{l=0}^{k-1} p_{i_li_{l+1}} }$$
is then bounded below by a constant greater than $1$.

As $\phi$ approaches $\overline{\lambda}^{-1}$, the ratio on the right-hand side of equation \ref{eq:bonderiv} diverges to $\infty$. Moreover, we showed in the previous paragraph that $$\frac{ \sum_{\gamma^k: i \text{ in group }i_0 }   f(\gamma^k) \prod_{l=0}^{k-1} p_{i_li_{l+1}}}{\sum_{\gamma^k: j \text{ in group }i_0 }   f(\gamma^k) \prod_{l=0}^{k-1} p_{i_li_{l+1}}}$$
is bounded below by a constant greater than $1$ for all but finitely many $k$. This implies that for all but finitely many values of $k$, the terms in equation \ref{eq:bonderiv} corresponding to length $k$ paths are at least this constant times the same terms when the starting agent $i$ is replaced by $j$. So summing over $k$, we can conclude that
$$ \frac{\partial }{\partial p_{d}}c_i(\bar{A},\phi) > \frac{\partial }{\partial p_{d}}c_j(\bar{A},\phi)$$
for $\phi$ large enough.

\end{proof}
\begin{proof}[Proof of Corollary~\ref{pdcor}]
Fix some $n_0$. By Proposition~\ref{dif}, there exists $\underline{\phi}$ such that
$$c_i({A'}(n_0),\phi(n_0)) - c_i({A}(n_0),\phi(n_0)) > c_j({A}'(n_0), \phi(n_0))-c_j({A}(n_0),\phi(n_0)) $$
whenever $0 < \phi < \underline{\phi}$ and $i$'s group is larger than $j$'s group. 

The Katz-Bonacich centralities of $\bar{A}(n_0)$ are a positive affine transformation of the Katz-Bonacich centralities of the matrix
$$\begin{pmatrix} s_1p_s & s_1 p_d & \hdots  & s_1p_d
\\ s_2p_d & s_2p_s & \hdots & s_2p_d \\ \vdots & \vdots & \ddots & \vdots \\ s_m p_d & s_m p_d & \hdots & s_m p_s  \end{pmatrix},$$
where $s_i$ are the population shares of each group. This is because the centralities are proportional except the constant term corresponding to length zero walks. The same statement holds for $\bar{A}'(n_0)$ with the same rescaling.

The upshot is that the inequality comparing centralities does not depend on the choice of $n_0$. So it follows that for any $n$ such that $\phi(n)/\overline{\lambda}_1(n) < \underline{\phi}/\overline{\lambda}_1(n_0),$ we have
$$c_i({A'}(n),\phi(n)) - c_i({A}(n),\phi(n)) > c_j({A}'(n), \phi(n))-c_j({A}(n),\phi(n)).$$

The remainder of the proof of part (i) proceeds as in the proof of Corollary~\ref{pscor}, and the analogous argument shows part (ii).
\end{proof}

\begin{proof}[Proof of Proposition~\ref{exam}]
Suppose we have three groups of equal size $K = n/3$ such that $p_{11} = 1,$ $p_{12} = p_{21} = \delta$, and all other link probabilities are $0$. We will show that for $\phi$ sufficiently large and $\delta$ sufficiently small, $$\frac{\partial }{\partial p_{23}} c_1(\bar{A},\phi) >\frac{\partial }{\partial p_{23}} c_{n/3+1}(\bar{A},\phi).$$
Note that agent $1$ is in group $1$ while agent $n/3+1$ is in group $2$.

To check this identity, note that $$(I - \phi P)^{-1}  \approx \begin{pmatrix} \frac{1}{1-\phi K } & \frac{\phi K \delta}{1-\phi K } &  0 \\  \frac{\phi K \delta}{1-\phi K } & 1 & 0 \\ 0 & 0 & 1 \end{pmatrix},$$
where $P$ is the $3 \times 3$ matrix with entries $P_{ij}=Kp_{ij}$ equal to the number of links from a given agent in group $i$ to any agent in group $j$. The approximation is dropping terms of order $\delta^2$. The entries $(I-\phi A)^{-1}_{kl}$ are equal to the entries $(I-\phi P)^{-1}_{ij}$ corresponding to the groups of $k$ and $l$.

Using Lemma~\ref{matrixformula}, we compute
$$\frac{\partial }{\partial p_{23}} c_1(\bar{A},\phi)\approx \frac{\phi^2 K^2 \delta}{1-\phi K } \text{ and }\frac{\partial }{\partial p_{23}} c_{n/3+1}(\bar{A},\phi)  \approx \phi K .$$
Fixing $\delta$ sufficiently small and taking $\phi  \rightarrow 1/K$ (keeping $\phi$ feasible), we find the first expression is larger than the second for all $\phi$ sufficiently large.
\end{proof}

\begin{proof}[Proof of Corollary~\ref{excor}]
By Proposition~\ref{exam}, we can choose $\bar{A}$ and $\bar{A}'$ with $n_0$ agents each such that $p_{ij}$ is larger in $\bar{A}'$ than $\bar{A}$ and other link probabilities agree, groups $i,j$ and $k$ and a constant $\underline{\phi} < \overline{\lambda}_1$ such that
$$c_k(\bar{A}',\phi) - c_k(\bar{A},\phi)>c_i(\bar{A}',\phi) - c_i(\bar{A},\phi)$$
whenever $\underline{\phi} < \phi < \overline{\lambda}_1$.
 
Define sequences of stochastic block networks by letting all relative group sizes and link formation-probabilities be the same as in $\bar{A}$ and $\bar{A}',$ respectively. Let $\underline{\phi}(n) = \frac{n}{n_0}\underline{\phi}$ for each $n$. Then arguing as in the proof of Corollary~\ref{pdcor}, we find that 
$$c_k(\bar{A}'(n),\phi(n)) - c_k(\bar{A}(n),\phi(n))>c_i(\bar{A}'(n),\phi(n)) - c_i(\bar{A}(n),\phi(n))$$
for each $n$.

The proof concludes with the same arguments as the proof of Corollary~\ref{pscor}.
\end{proof}

\begin{proof}[Proof of Claim~\ref{gridclaim}]
By equation (\ref{leveq}), we must check that
$$(\min_{i,j}\bar{A}(n)_{ij} )\sum_{i} \sum_j \bar{A}(n)_{ij}\geq \Delta \log n,$$
where $n = (k+1)^2$.

We first note that
$$\Delta = \max_i \sum_j \bar{A}(n)_{ij} \leq 4 \min_i \sum_j \bar{A}(n)_{ij}.$$
This is because the maximum is obtained at the node(s) at the center of the grid while the minimum is obtained by the corner nodes. Dividing the grid into four quadrants vertically and horizontally, the expected number of links from a corner node to the nearest quadrant is equal to the expected number number of links from a center node to any quadrant.

Since $\min_{i,j}\bar{A}(n)_{ij} \geq (\sqrt{2} k)^{-\rho}$, we conclude that $$\min_{i,j}\bar{A}(n)_{ij} \sum_{i} \sum_j \bar{A}(n)_{ij} \geq (\sqrt{2} k)^{-\rho} \cdot (4 \Delta n).$$
The right-hand side grows faster than $\Delta \log n$.
\end{proof}

\begin{proof}[Proof of Lemma~\ref{layers}]
See for example Theorem 13.12 of \cite*{Laub05}.
\end{proof}

\begin{proof}[Proof of Proposition~\ref{layerscor}]
By Theorem~\ref{thm1}, we have
$$\|\textbf{v}(A^{(i)}(n_i) )- \textbf{v}(\bar{A}^{(i)}(n_i))\|_2 < {\epsilon}/2$$
for $n_i$ sufficiently large and $i=1,2$. By the triangle inequality 
$$  \| \textbf{v}(\bar{A}^{(1)}(n_1))\otimes \textbf{v}(\bar{A}^{(2)}(n_2))-\textbf{v}(A^{(1)}(n_1))\otimes \textbf{v}(A^{(2)}(n_2))\|_2$$ is less than or equal to  \begin{align*} \| \textbf{v}(\bar{A}^{(1)}(n_1))\otimes \textbf{v}({A}^{(2)}(n_2))-\textbf{v}(A^{(1)}(n_1))\otimes \textbf{v}(A^{(2)}(n_2))\|_2  + \\  \| \textbf{v}(\bar{A}^{(1)}(n_1))\otimes \textbf{v}(\bar{A}^{(2)}(n_2))-\textbf{v}(\bar{A}^{(1)}(n_1))\otimes \textbf{v}(A^{(2)}(n_2))\|_2.\end{align*}
Because $\textbf{v}$ is normalized to have Euclidean norm one, each of these two terms is less than $\epsilon/2$.

Because $\overline{\lambda}_1,\Delta$ and $\log(n)$ are all multiplicative in the Kronecker product, $\bar{A}(n_1n_2)$ has non-vanishing spectral gap and large enough eigenvalues. So by Theorem~\ref{thm1},
$$\|\textbf{v}(A(n_1n_2) )- \textbf{v}(\bar{A}(n_1n_2))\|_2 < {\epsilon}$$
for $n_1n_2$ sufficiently large.

So we have\begin{align*}\|\textbf{v}(A(n_1n_2))-\textbf{v}(A^{(1)}(n_1))\otimes  \textbf{v}(A^{(2)}(n_2))\|_2 & \leq \|\textbf{v}(\bar{A}(n_1n_2))-\textbf{v}(\bar{A}^{(1)}(n_1))\otimes  \textbf{v}(\bar{A}^{(2)}(n_2))\|_2 \\ & + \| \textbf{v}(\bar{A}^{(1)}(n_1))\otimes \textbf{v}(\bar{A}^{(2)}(n_2))-\textbf{v}(A^{(1)}(n_1))\otimes \textbf{v}(A^{(2)}(n_2))\|_2 
\\ & + \|\textbf{v}(\bar{A}(n_1n_2))-\textbf{v}(A(n_1n_2))\|_2.\end{align*}
The first term is zero by Lemma~\ref{layers}. The second and third term are each at most $\epsilon$ by the preceding paragraphs.
\end{proof}


\section{Extensions}\label{ext}

\subsection{Weighted Edges}\label{weights}

In this subsection, we relax the assumption that $A_{ij} \in \{0,1\}$ to allow non-integral edge weights $A_{ij} \geq 0$. The centrality measures $\textbf{v}(A)$ and $\textbf{c}(A)$ are defined as before.

Suppose that for each pair of groups $i$ and $j$, the edge weights between agents in group $i$ and group $j$ are independent uniform random variables on some interval $[l_{ij}(n),u_{ij}(n)]$. Then Theorem~\ref{thm1} and Theorem~\ref{thm2} hold as before with $\bar{A}(n) = \mathbb{E}[A(n)]$.

\begin{proof}
First note that we can approximate uniform random variables on some non-negative interval $[l,u]$ by linear combinations of i.i.d. Bernoulli random variables with positive coefficients. We discuss the case $l=0$ and $u=1$, and the argument extends easily.

Consider the weighted sum with coefficients $2^{-k}$ for $k = 1,...,m$ of i.i.d. Bernoulli random variables equal to $1$ with probability $\frac12$. As $m \rightarrow \infty$, this sum converges in probability to the uniform random variable on $[0,1]$.

In the proof of Theorem 1 of \cite*{Chung11}, the authors express $A$ as a sum of independent random matrices $X_{ij}$, where $X_{ij}$ is the product of a Bernoulli random variable equal to one with probability $p_{ij}$ and the matrix with entries $(i,j)$ and $(j,i)$ equal to one and all other entries zero. We instead express $A$ as a sum of independent variables $X_{ij, k}$ corresponding to the terms in the sum in the previous paragraph. Because all relevant quantities are continuous and our previous arguments do not depend on the number of random variables, our proof proceeds as before when we take $m$ sufficiently large.
\end{proof}

\subsection{Clustering}\label{clustering}

When edges have non-integral weights as in Appendix \ref{weights}, it is straightforward to allow clustering in the network. We present a variation of our model in which weights between $i$ and $j$ are likely to be higher when both agents are connected to some other agent $k$. This tendency is produced by allowing triangles to form in addition to pairwise connections.

As before, for each pair of agents $k$ in group $i$ and $l$ in group $j$ we draw an independent Bernoulli random variable equal to one with probability $p_{ij}$. In addition, for each triplet of agents $k,k'$ and $k''$ in groups $i,i'$ and $i''$ we draw an independent Bernoulli random variable equal to one with probability $p_{ii'i''}$. The entry $A_{ij}$ is equal to the sum of these random variables over the pair $i,j$ and all triplets including $i$ and $j$.

So if the direct link between $i$ and $j$ does not form, $A_{ij}$ is equal to the number of triplets containing $i$ and $j$ which form. If the direct link between $i$ and $j$ does form, $A_{ij}$ is equal to one plus the number of triplets containing $i$ and $j$ which form.

Then Theorem~\ref{thm1} and Theorem~\ref{thm2} still hold with $\bar{A}(n) = \mathbb{E}[A(n)]$. The proof is a straightforward modification of the proof in Section~\ref{weights}.

Alternately, we can allow the weights on each link and each triplet to be uniform random variables. Suppose for each pair of agents $k$ in group $i$ and $l$ in group $j$ we draw a uniform random variable on $[l_{ij}(n),u_{ij}(n)]$. In addition, for each triplet of agents $k,k'$ and $k''$ in groups $i,i'$ and $i''$ we draw a uniform random variable on $[l_{ii'i''}(n),u_{ii'i''}(n)]$. If entry $A_{ij}$ is again equal to the sum of these random variables over the pair $i,j$ and all triplets including $i$ and $j$, then Theorems~\ref{thm1} and~\ref{thm2} continue to hold.

Finally, the discussion above includes only direct links and triangles. The results extend similarly to allow other subgraphs, such as cliques with more than three agents.

\section{Comparative Statics of Katz-Bonacich Centrality}\label{appendixstatics}

We give two formulas for the derivatives of Katz-Bonacich centrality in stochastic block models. Our first formula expresses the derivative of $c_l(A,\phi)$ as a weighted count of the number of walks starting at agent $l$. Let $\gamma^k$ denote a walk of length $k$, let $i_0,i_1,...,i_k$ be the groups the walk passes through, and let $f(\gamma^k)$ be the number of times $\gamma^k$ passes between groups $i$ and $i'$.

\begin{lemma}\label{pathformula}
The derivative of the Katz-Bonacich centrality of agent $l$ is
$$\frac{\partial }{\partial p_{ii'}}c_l(A,\phi) = \frac{1}{p_{ii'}} \sum_{k=0}^{\infty} \phi^k \sum_{\gamma^k\text{ beginning at }l}   f(\gamma^k) \prod_{j=0}^{k-1} p_{i_ji_{j+1}}.$$
This is equal to the number of walks with each counted with multiplicity given by the probability of the walk forming times the number of times the walk passes between groups $i$ and $i'$, each discounted by the walk length.
\end{lemma}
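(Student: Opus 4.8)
The plan is to differentiate the series expansion of Katz-Bonacich centrality term by term. Recall from Definition~\ref{katzbondef} that $\mathbf{c}(A,\phi) = \sum_{k=0}^{\infty} \phi^k A^k \mathbf{1}$, so that
$$c_l(A,\phi) = \sum_{k=0}^{\infty} \phi^k \sum_{m} (A^k)_{lm} = \sum_{k=0}^{\infty} \phi^k \sum_{\gamma^k \text{ beginning at } l} \prod_{j=0}^{k-1} p_{i_j i_{j+1}},$$
where the last equality uses the standard fact that $(A^k)_{lm}$ is the sum over length-$k$ walks from $l$ to $m$ of the products of traversed edge weights, together with the block structure $A_{ab} = p_{i_a i_b}$ (so a walk's weight depends only on the sequence of groups it visits). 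I would record this identity first, then justify differentiating under the summation: since $\phi < \|A\|_2^{-1}$ and the eigenvalues of $A$ depend continuously on its entries, there is a neighborhood of the given probability vector on which $\phi \|A\|_2 \leq \theta$ for a fixed $\theta < 1$, and on this neighborhood both the series above and the series of its term-by-term $p_{ii'}$-derivatives are dominated by a convergent series of the form $\sum_k C k \theta^k$ (using $f(\gamma^k) \le k$ and $\sum_{\gamma^k \text{ from } l} \prod p_{i_j i_{j+1}} = (A^k \mathbf{1})_l \le \sqrt{n}\,\|A\|_2^k$). Hence differentiation commutes with the sum.

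The combinatorial heart of the argument is elementary: fix a walk $\gamma^k$ and view its weight $\prod_{j=0}^{k-1} p_{i_j i_{j+1}}$ as a monomial in the link probabilities. Because $A$ is symmetric, $p_{ii'}$ and $p_{i'i}$ denote the same variable, so the $j$-th factor equals $p_{ii'}$ precisely when the unordered pair of groups visited at that step is $\{i,i'\}$ (with the convention that this includes the case $i=i'$, i.e.\ a step internal to group $i$). Thus $p_{ii'}$ occurs in the monomial with exponent exactly $f(\gamma^k)$, and all remaining factors are constants in $p_{ii'}$, so
$$\frac{\partial}{\partial p_{ii'}} \prod_{j=0}^{k-1} p_{i_j i_{j+1}} = \frac{f(\gamma^k)}{p_{ii'}} \prod_{j=0}^{k-1} p_{i_j i_{j+1}}.$$
Summing this over all walks $\gamma^k$ beginning at $l$ and over $k$, weighted by the discount factors $\phi^k$, yields the claimed formula; the $k=0$ term contributes $0$ since the trivial walk has $f(\gamma^0)=0$, which is consistent with its appearance in the sum. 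Reading off this expression gives the stated interpretation: each walk is counted with multiplicity equal to (the probability $\prod_j p_{i_j i_{j+1}}$ that it forms) times (the number $f(\gamma^k)$ of steps between groups $i$ and $i'$), discounted by $\phi^k$ according to length.

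The only genuine obstacle is bookkeeping rather than any deep idea: one must track carefully that $p_{ii'}$ and $p_{i'i}$ are the same parameter, so that an edge of $\gamma^k$ contributes to $f(\gamma^k)$ irrespective of the direction in which the walk traverses it; handle the degenerate case $i=i'$ by the same counting; and confirm that the domination bound survives the perturbation of $\|A\|_2$ as $p_{ii'}$ varies, which it does on a small enough neighborhood by continuity. (Alternatively, the same identity can be obtained by differentiating $\mathbf{c} = (I-\phi A)^{-1}\mathbf{1}$ via the resolvent identity and then expanding $(I-\phi A)^{-1}$ as a Neumann series, but the direct series argument above is the cleanest route to the walk-counting form.)
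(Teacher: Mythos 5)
Your proposal is correct and follows essentially the same route as the paper: expand $c_l(A,\phi)$ as the discounted sum over walks of the products of link probabilities, observe that $p_{ii'}$ appears in each walk's monomial with exponent $f(\gamma^k)$, and differentiate term by term. The paper's own proof is just this expansion followed by ``the result follows from computing the derivative''; your added care about term-by-term differentiation and the identification of $p_{ii'}$ with $p_{i'i}$ only fills in details the paper leaves implicit.
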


\begin{proof}[Proof of Lemma~\ref{pathformula}]
We have $$c_l(A,\phi) = \sum_{k=0}^{\infty}\phi^k  \sum_{\gamma^k\text{ beginning at }l }   \left(\prod_{j=0}^{k-1} p_{i_ji_{j+1}}\right).$$
The result follows from computing the derivative.
\end{proof}

Our second formula expresses the derivative of $c_k(A,\phi)$ explicitly in terms of the link probabilities $p_{ij}$. Denote the number of agents in each group $i$ by $s_in$. Let $P$, which can be thought of as the adjacency matrix of the network with each group replaced by a representative agent, be the $m \times m$ matrix with entries $P_{ij} = s_jnp_{ij}$.

\begin{lemma}\label{matrixformula}
The derivative of the Katz-Bonacich centrality of agent $k$ in group $l$ in terms of $p_{ij}$ is
$$\frac{\partial }{\partial p_{ij}} c_k(A,\phi)
 = \phi n\left( \sum_{l'=1}^m s_j(I-\phi P)^{-1}_{li} (I-\phi P)^{-1}_{jl'} +s_i (I-\phi P)^{-1}_{lj} (I-\phi P)^{-1}_{il'}\right)
.$$
\end{lemma}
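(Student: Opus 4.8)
## Proof Proposal for Lemma~\ref{matrixformula}

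The plan is to start from the first formula (Lemma~\ref{pathformula}) and evaluate the walk sum explicitly by exploiting the block structure of the network. Since link probabilities depend only on the groups of the endpoints, a walk of length $k$ beginning at agent $k$ in group $l$ contributes $\prod_{j=0}^{k-1} p_{i_j i_{j+1}}$, which depends only on the sequence of groups $i_0 = l, i_1, \dots, i_k$ visited. Summing over all agent-level walks with a fixed group-sequence just multiplies by the number of ways to choose the intermediate agents, i.e.\ $\prod_{j=1}^{k} (s_{i_j} n)$. Packaging these together, the weight of the group-walk $(l = i_0, i_1, \dots, i_k)$ becomes $\prod_{j=0}^{k-1} (s_{i_{j+1}} n \, p_{i_j i_{j+1}}) = \prod_{j=0}^{k-1} P_{i_j i_{j+1}}$, so that $\sum_{\gamma^k \text{ beginning at } k} \prod p_{i_j i_{j+1}} = (P^k \mathbf{1})_l$ and hence $c_k(A,\phi) = \bigl((I - \phi P)^{-1}\mathbf{1}\bigr)_l$ (up to the constant term, which does not depend on $p_{ij}$ and so drops out upon differentiation).

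Next I would differentiate with respect to $p_{ij}$. By Lemma~\ref{pathformula}, the derivative counts walks weighted by the number of times they traverse an edge between group $i$ and group $j$ (in either direction). Concretely, each traversal of an $i$-$j$ or $j$-$i$ edge inside a group-walk can be "marked"; summing over the position of the marked edge and using $\frac{\partial}{\partial p_{ij}} P_{i j} = s_j n$ and $\frac{\partial}{\partial p_{ij}} P_{j i} = s_i n$ (recall $P_{ab} = s_b n \, p_{ab}$ and $p_{ij} = p_{ji}$), the derivative splits into two geometric-series convolutions: one where the marked step goes $i \to j$ and one where it goes $j \to i$. This gives
\begin{equation*}
\frac{\partial}{\partial p_{ij}} c_k(A,\phi) = \phi \sum_{a=0}^{\infty}\sum_{b=0}^{\infty} \phi^{a+b}\Bigl( s_j n \,(P^a)_{li}(P^b \mathbf{1})_j + s_i n\,(P^a)_{lj}(P^b \mathbf{1})_i \Bigr),
\end{equation*}
where $a$ is the number of steps before the marked edge and $b$ the number after. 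Resumming the two geometric series in $a$ and $b$ yields $(I - \phi P)^{-1}_{li}$ (resp.\ $(I-\phi P)^{-1}_{lj}$) for the first factor and $\sum_{l'} (I - \phi P)^{-1}_{jl'}$ (resp.\ $\sum_{l'}(I-\phi P)^{-1}_{il'}$) for the second, which is exactly the claimed expression after factoring out $\phi n$.

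The main things to get right are bookkeeping rather than deep: (i) correctly tracking the two orientations of the $\{i,j\}$ edge, since $p_{ij}$ appears in both $P_{ij}$ and $P_{ji}$ and these carry the asymmetric factors $s_j n$ and $s_i n$; and (ii) justifying the termwise differentiation and rearrangement of the double series, which is legitimate because $\|\phi P\|$ has spectral radius below $1$ (as $\phi < \overline{\lambda}^{-1}$ and the nonzero eigenvalues of the block-reduced $P$ coincide with those of $\bar A$), so all the series converge absolutely and the Fubini-style interchange is valid. The convergence/interchange step is where I expect the only real care to be needed; everything else is the combinatorial identity $c_k = \bigl((I-\phi P)^{-1}\mathbf 1\bigr)_l$ together with the product rule applied to $(I-\phi P)^{-1}$.
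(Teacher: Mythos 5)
Your proposal is correct and follows essentially the same route as the paper: both differentiate the Neumann series term by term, which amounts to marking the perturbed edge and splitting each walk into the segments before and after it, reducing to the representative-agent matrix $P$ (with the two orientations $i\to j$ and $j\to i$ carrying the factors $s_j n$ and $s_i n$), and resumming the two geometric series into $(I-\phi P)^{-1}$ factors. The only difference is cosmetic — you reduce to $P$ before differentiating while the paper differentiates $(\phi A)^t$ at the agent level first — and your explicit justification of the series interchange via the spectral radius of $\phi P$ is a welcome addition the paper leaves implicit.
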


\begin{proof}[Proof of Lemma~\ref{matrixformula}]
We have
\begin{align*} \frac{\partial}{\partial p_{ij}}  c_k(A,\phi)& = \sum_{t = 0}^{\infty} \frac{\partial (\phi^tA^t \mathbf{1})_k}{\partial p_{ij}}
\\ & = \sum_{t = 0}^{\infty} \sum_{t' = 0}^{\infty} ((\phi A)^{t} \frac{\partial  (\phi A)}{\partial p_{ij}} (\phi A)^{t'} \mathbf{1})_k.
\end{align*}

For any $l$ and $l'$, $(I-\phi P)_{ll'}$ is equal to the discounted number of walks which begin at some fixed agent in group $l$ and end at any agent in group $l'$. On the other hand, the entries of $\sum_{t=0}^{\infty} (\phi A)^t$ correspond to the discounted numbers of walks which begin at some fixed agent in group $l$ and end at some fixed agent in group $l'$.

The derivative $\frac{\partial  (\phi A)}{\partial p_{ij}}$ of $A$ as we vary the link probability between groups $i$ and $j$ has entries equal to $\phi$ when one index corresponds to an agent in group $i$ and the other to an agent in group $j$ and all other entries zero. So we compute that the double summation above is equal to
$$\phi n \left( \sum_{l'=1}^m s_j(I-\phi P)^{-1}_{li} (I-\phi P)^{-1}_{jl'} +s_i (I-\phi P)^{-1}_{lj} (I-\phi P)^{-1}_{il'}\right),$$
as desired. The two terms correspond to walks passing through an edge from group $i$ to group $j$ and walks passing in the opposite direction.\end{proof}

\bibliographystyle{ecta}


\bibliography{Distributions}

\end{document}